\newtheorem{theorem}{Theorem}
\newtheorem{proposition}[theorem]{Proposition}%[section]
\newtheorem{lemma}[theorem]{Lemma}
\newtheorem{corollary}[theorem]{Corollary}%[section]
\theoremstyle{definition}
\newtheorem{definition}[theorem]{Definition}
\theoremstyle{remark}
\newtheorem*{remark}{Remark}
\newcommand{\BC}{\mathbb C}
\newcommand{\BE}{\mathbb E}
\newcommand{\BP}{\mathbb P}
\newcommand{\BR}{\mathbb R}
\renewcommand{\epsilon}{\varepsilon}
\newcommand{\eps}{\varepsilon}
\title{Tight and Robust Private Mean Estimation with Few Users}
\author{Hossein Esfandiari\thanks{Google Research. Email: \texttt{esfandiari@google.com}} , Vahab Mirrokni\thanks{Google Research. Email: \texttt{mirrokni@google.com}} , Shyam Narayanan\thanks{MIT. Work done as an intern at Google Research. Email: \texttt{shyamsn@mit.edu}}}
\date{\today}
\begin{document}

\maketitle

\begin{abstract}
In this work, we study high-dimensional mean estimation under user-level differential privacy, and design an $(\varepsilon,\delta)$-differentially private mechanism using as few users as possible. In particular, we provide a nearly optimal trade-off between the number of users and the number of samples per user required for private mean estimation, even when the number of users is as low as $O(\frac{1}{\varepsilon}\log\frac{1}{\delta})$. Interestingly, this bound on the number of \emph{users} is independent of the dimension (though the number of \emph{samples per user} is allowed to depend polynomially on the dimension), unlike the previous work that requires the number of users to depend polynomially on the dimension. This resolves a problem first proposed by Amin et al.~\cite{amin2019biasvarianceuser}. 
%Our mechanism enjoys robustness up to the point that even if the information of $49\%$ of the users are corrupted, our final estimation is still approximately accurate. 
Moreover, our mechanism is robust against corruptions in up to $49\%$ of the users.
Finally, our results also apply to optimal algorithms for privately learning discrete distributions with few users, answering a question of Liu et al.~\cite{liu2020discrete}, and a broader range of problems such as stochastic convex optimization and a variant of stochastic gradient descent via a reduction to differentially private mean estimation.
\end{abstract}

\section{Introduction} \label{sec:intro}
Without any doubt, user privacy is one of the greatest matters of the modern era. Differential privacy is one of the most popular concepts to measure and enforce privacy. In a nutshell, differential privacy statistically protects the privacy of every single unit of the data, regardless of how much information about the rest of the data is revealed. Depending on how a single unit of data is defined, we have two general notions of differential privacy: user-level privacy and item-level~\footnote{Also known as record-level or sample-level differential privacy.} privacy. 

Let us differentiate these two notions with a simple example. Consider a dataset that consists of the daily blood pressures of the patients in a hospital. For the privacy purpose, we may define a unit of data to be one single value of a blood pressure, or the whole collection of the blood pressures of a patient. The former leads to item-level differential privacy, while the latter leads to user-level differential privacy. Roughly speaking, item-level differential privacy in this example protects one single value of a patient's blood pressure; however, an estimate of the average blood pressure of an individual might be released under item-level differential privacy, which is concerning in this example. Conversely, user-level differential privacy protects any information about the blood pressure of every single patient, while it allows statistical analysis of a population of the patients. %In general user-level differential privacy provides a higher level of privacy, while item-level differential privacy is an easier measure to guarantee.  While both item-level and user-level differential privacy are respected measures, 
In general user-level differential privacy provides a higher level of privacy. User-level privacy is certainly necessary in some cases such as our previous example because an individual's average blood pressure is clearly a private information that should be kept protected.
Hence, in this work, we mainly focus on user-level differential privacy.

Ignoring the computational constraints, it is folklore that providing differential privacy gets easier as the number of users in the dataset grows and for many tasks we can have simple differentially private mechanisms using a \emph{large enough} number of users in the dataset. However, the question has always been \emph{``how large should the dataset be?''}. Providing differential privacy using only a few users is of particular importance when we want to analyse local information, for example analysing the information of each hospital separately, or analysing the information of each neighborhood of a city separately (e.g. for advertising an anti-drug campaign or suicide prevention lifeline). Another example where we face few users is when we are analysing a rare situation such as the patients of a rare or novel disease (e.g. the Covid-19 pandemic at its early days~\cite{corsello2020impact}).%\todo{Changed the wording a bit because it looked like you were claiming covid was a rare disease}
 
Motivated by this, we consider the problem of mean estimation with user-level differential privacy and attempt to understand the limits of the number of users one needs to provide a differentially private mechanism for this problem. Indeed, differentially private mean estimation is of high technical and practical importance since it can be used directly to develop differentially private mechanisms for several other tasks such as stochastic convex optimization, empirical risk minimization, and a variant of stochastic gradient descent~\cite{levy2021user}. 

\iffalse
    \textcolor{red}{TODO: 
    \begin{itemize}
        \item Define User and Item-Level DP, explain that we want user-level dp over worst-case datasets
        %\item Explain why we care about low sample regime (paper I sent). Rare disease, early stages of disease, beta testing (if we pay user to learn info).
        \item Related Work.
        \item Describe Corollaries to SCO and ERM.
    \end{itemize}
    }
\fi
 
\subsection{Problem Formulation and Results} \label{subsec:results}
We start with a formal definition of differential privacy. First, let $X = \{X_{i, j}\}$ represent a dataset, where each user $i$ (ranging from $1$ to $n$) has $m$ samples ($j$ ranging from $1$ to $m$). In the context of user-level privacy, we say that two datasets $X = \{X_{i, j}\}$ and $X' = \{X'_{i, j}\}$ are \emph{neighboring} if there is only one user $i$ where the datasets differ, i.e., for all $i' \neq i$ and all $j \in [m]$, $X_{i', j} = X'_{i', j}$. In the context of item-level privacy, the definition of neighborhood is more restrictive. Specifically, in item-level differential privacy two neighboring datasets only differ in a single sample (i.e., for only one pair $(i, j)$ we can have $X_{i, j} \neq X'_{i, j}$).

\begin{definition} \label{def}
     We say that a (randomized) algorithm $M$ is $(\eps, \delta)$-differentially private if for any two neighboring datasets $X$ and $X'$ and any subset $S$ of the output space,
\[\BP[M(X) \in S] \le e^{\eps} \cdot \BP[M(X') \in S] + \delta.\]
\end{definition}

As noted before, Definition \ref{def} applies to both user-level and item-level differential privacy, while their only difference is in their definition of neighborhood.
Although in this work we focus on user-level privacy, we first study a case where each user has one sample (i.e., $m=1$), in which user-level privacy and item-level privacy are equivalent.

%Two main results in differential privacy are the \emph{weak and strong composition theorems} (see, e.g., \cite[Section 3.5]{dwork2014algorithmic}), which allow us to combine many private algorithms together and still maintain privacy. For completeness, we state them formally in Appendix \ref{sec:composition}.\todo{this paragraph does not belong here.}

Our main contribution in this work is to provide a nearly optimal trade-off between the number of users $n$ and the number of samples $m$ per user required so that one can accurately and privately learn the mean of a distribution with a low error. Although such a trade-off was known when the number of users $n$ is sufficiently large, it was poorly understood when $n$ is small. 

Similar to the previous work on user-level differential privacy~\cite{liu2020discrete,levy2021user}, we assume that each sample is drawn independently from some unknown distribution $\mathcal{D}$, with a support entirely in an unknown ball $B \subset \BR^d$ with center $x \in \BR^d$ and radius $r$. For simplicity, we suppose $r=1$ in this subsection. 
%Let $\mu = \BE_{x \sim \mathcal{D}} x$ be the unknown mean of $\mathcal{D}$.

It is worth noting that our algorithms enjoy a high level of robustness. Specifically, we show that our algorithm still succeeds even if the samples of each user are mildly correlated, or even if the data of $49\%$ of the users are adversarially corrupted. It is easy to observe that it is impossible to recover the mean to any degree if the data of $51\%$ of the users are adversarially corrupted.

Next, we describe our contributions in more detail.
First, we provide an $(\eps, \delta)$ user-level differentially private algorithm over $n \ge O(\frac{1}{\eps} \log \frac{1}{\delta})$ users, that estimates the mean $\mu$ of the unknown distribution $\mathcal{D}$ up to error $O(\sqrt{d/m})$ (see Theorem \ref{thm:user_main} for details). Hence, in the infinite sample limit (i.e., when each user is given $m \to \infty$ i.i.d. samples), we can estimate the mean $\mu$ perfectly with very high probability as long as we have $n \ge \frac{1}{\eps} \log \frac{1}{\delta}$ users.
We complement this positive result with a matching lower bound that shows, even if there are infinitely many samples per user, it is necessary to have $\Omega(\frac{1}{\eps} \log \frac{1}{\delta})$ users in order to privately estimate $\mu$ up to any finite accuracy (see Lemma \ref{lem:lb_easy} for details).
Together, these fully resolve the question of how many users are needed to estimate the mean in the infinite sample limit, a question first investigated by Amin et al.~\cite{amin2019biasvarianceuser} and then further studied by Levy et al.~\cite{levy2021user}. Previously, this was only established when the number of users $n$ was at least $\tilde{O}(\sqrt{d \log \frac{1}{\delta}}/\eps)$
%where the $\tilde{O}$ hides logarithmic dependencies in $d, m, \frac{1}{\eps}$ and doubly logarithmic dependencies in $\frac{1}{\delta}$ 
\cite{levy2021user}.

Next, we obtain nearly tight bounds for the user-sample trade-off needed for private mean estimation. Specifically, when we have $n \ge O(\frac{1}{\eps} \log \frac{1}{\delta})$ users and $m$ samples per user, we provide an $(\eps, \delta)$ user-level differentially private algorithm that estimates the mean $\mu$ up to $\ell_2$ error $\tilde{O}\left(\sqrt{\frac{1}{mn} + \frac{d}{mn^2 \eps^2}}\right)$ with probability $1-\alpha$. It is worth noting that the running time of our algorithm can be made almost linear in $n, m, d$, and $\log \frac{1}{\alpha}$ (see Theorem \ref{thm:user_main} for details). This extends the algorithm of Levy et al.~\cite{levy2021user}, which provided similar bounds but only when $n \ge \tilde{O}(\frac{1}{\eps} \sqrt{d \log \frac{1}{\delta}})$, to the regime when $n$ can be very low with no dependency on the dimension $d$. Moreover, our $\ell_2$ error is known to be tight up to logarithmic factors when $\frac{1}{\delta}$ is a sufficiently large polynomial in the dimension $d$~\cite{kamath2019highdimensional,levy2021user}, implying that the error of our algorithm is nearly optimal.

As a corollary, we also obtain user-level private algorithms for the problem of learning discrete distributions. Specifically, we show that our results on mean estimation directly imply that one can learn a discrete distribution up to a total variation distance of $\tilde{O}\left(\sqrt{\frac{d}{mn} + \frac{d^2}{mn^2 \eps^2}}\right)$ as long as $n \ge O(\frac{1}{\eps} \log \frac{1}{\delta})$ (see Corollary \ref{cor:user_main} for details). 
%This follows from the fact that there is a simple representation of a discrete distribution over $[d] = \{1, \dots, d\}$ as a distribution over the $1$-hot vectors $\{e_i = (0, \dots, 0, 1, 0, \dots, 0)\}$ with a $1$ at the $i$th coordinate, and since the $\ell_1$ distance (equivalent to total variation distance) is at most $\sqrt{d}$ times the $\ell_2$ distance. 
Hence, our work extends the algorithm of Liu et al.~\cite{liu2020discrete}, which provided comparable bounds for user-level private distribution learning but only when $n \ge \tilde{O}(\frac{1}{\eps} \sqrt{d \log \frac{1}{\delta}})$.
Our mean estimation algorithm is also directly applicable to a variant of stochastic gradient descent, and to the problems of empirical risk minimization and stochastic convex optimization. Indeed, Levy et al.~\cite{levy2021user} uses mean estimation as a black-box to solve these problems with user-level privacy. As this part is essentially unchanged from Levy et al.~\cite{levy2021user}, we briefly discuss how we extend these results in Appendix \ref{sec:erm_and_stuff}.

Finally, we prove that our private algorithm for learning discrete distributions is nearly optimal in sample complexity when $\frac{1}{\delta}$ is approximately polynomial in the dimension, both when the number of users $n$ is low or high. This resolves an open question of Liu et al.~\cite{liu2020discrete}, who provided a comparable lower bound for $(\eps, 0)$-differential privacy and asked whether one could extend the lower bound to $(\eps, \delta)$-differential privacy for nonzero $\delta$.
%In addition, as a direct consequence we have that our mean estimation algorithm is nearly optimal, though such a bound was already known \cite{kamath2019highdimensional,levy2021user}.

We reiterate that all of algorithms work under a certain notion of robustness as well. Namely, we show that if the samples for each individual user are slightly correlated, and even if $49\%$ of the users have all of their data adversarially corrupted, we can still estimate the mean of the distribution up to near-optimal accuracy in the regime when the number of users $n$ is at most $O(\sqrt{d})$.

\subsection{Related Work}
Differential privacy was first introduced by Dwork et al.~\cite{dwork2006calibrating} and since then has drawn significant attention. There have been several previous works on this topic, with a main focus on item-level differential privacy~\cite{dwork2014algorithmic,wasserman2010statistical,hay2009boosting,dwork2008differential,abadi2016deep,mcsherry2007mechanism,friedman2010data,dwork2010boosting,xiao2010differential,dwork2009differential}. Some of this work has focused on high-dimensional private mean estimation~\cite{kamath2019highdimensional, liu2021robust, hopkins2022robust}, with the latter two even allowing for adversarial robustness. However, these works focus on item-level privacy, and assume the number of data points is at least linear in the dimension.

Recently, Liu et al.~\cite{liu2020discrete} studied the problem of learning discrete distributions with user-level differential privacy and provided the first non-trivial user-level differentially private algorithm for this problem. Note that private learning of discrete distributions can be done via a private mean estimation algorithm. 
Later, Levy et al.~\cite{levy2021user} extended the work of Liu et al.~\cite{liu2020discrete} and provided user-level differential privacy mechanisms for several learning tasks, namely, high-dimensional mean estimation, empirical risk minimization with smooth losses, stochastic convex optimization, and learning hypothesis class with finite metric entropy. A core part of these algorithms was their algorithm for user-level differentially private mean estimation.

In the item-level setting, Huang et al.~\cite{huang2021itemlevel} studied mean estimation and provided guarantees similar to those of Levy et al.~\cite{levy2021user}: both papers require the number of samples (in the item-level setting) to be at least $\sqrt{d}/\eps$.

User-level differential privacy has also been recently studied for learning models in the federated learning setting~\cite{augenstein2019generative,mcmahan2018general,mcmahan2017learning}. The concept of bounding user contribution motivated by user-level differential privacy has also been studied in the contexts of histogram estimation~\cite{amin2019biasvarianceuser} and SQL~\cite{wilson2019differentially}. 

\paragraph{Independent Work:} We note that independent work by Ghazi et al.~\cite{ghazi2021independentwork} also studied the problem of user-level learning, and proved that one can solve mean estimation as well as a very broad class of learning problems using $\tilde{O}(\frac{1}{\eps} \log \frac{1}{\delta})$ users and polynomially many samples per user. However, unlike our work they do not obtain tight user-sample tradeoffs, and do not provide robust guarantees.
Another independent work by Tsfadia et al.~\cite{tsfadia2022independentwork} provided a general framework that can recover similar guarantees to ours in the item-level setting, even with $O(\eps^{-1} \log \delta^{-1})$ items, and can also provide robust guarantees similar to ours. However, their algorithm is slower and requires time quadratic in the number of users $n$, whereas our runtime is almost linear in $n$.
We remark that the techniques of both \cite{ghazi2021independentwork} and \cite{tsfadia2022independentwork} are significantly different from ours.

\section{Our Techniques} \label{subsec:techniques}

In this section we explain our algorithmic techniques and the challenges we face. Our main innovation is developing a new private estimation method when the number of users is very small, in the case where each user only has $1$ item. From here, we show that it is simple to convert this to a general user-level private algorithm by having each user take the mean of their samples and treating the mean as a single item. So, for the upper bounds, we focus on item-level differential privacy.

\paragraph{Learning a ball with $O(\frac{1}{\eps} \log \frac{1}{\delta})$ points:} We create an item-level private algorithm that, if most items are concentrated in an unknown ball of radius $1$, estimates the ball up to radius $\sqrt{d}$. While this may seem like a very poor estimate, this is in fact nearly optimal when the number of items is only $O(\frac{1}{\eps} \log \frac{1}{\delta})$. In addition, in the user-level setting, the sample mean of each (uncorrupted) user's data will be tightly concentrated if the number of samples per user is large, so if we can get concentration $\gamma/\sqrt{d}$ for the mean of each user, then we can privately estimate the distribution mean up to error $\gamma$.

The main challenge here is that previous algorithms decomposed the distribution into $1$-dimensional components and combined the estimate in each dimension. By the strong composition theorem (see Appendix \ref{sec:composition} or \cite{dwork2014algorithmic} for the statement), one needs approximately $O(\eps/\sqrt{d})$-level differential privacy in each dimension to ensure $\eps$-level differential privacy overall, and it is well-known that this implies at least $\sqrt{d}/\eps$ samples are needed. Hence, we need a method that avoids decomposing the distribution into $1$-dimensional pieces.

We develop an algorithm loosely based on the exponential mechanism. For simplicity, we assume all coordinates of the points are bounded in $[-R, R]$ for a very large $R$, and we are okay with dependencies on $\log R$. The rough outline of the standard exponential mechanism, used for $1$-dimensional mean estimation, is to split the region $[-R, R]$ into intervals of length $1$, and then select an interval $[a, a+1]$ with probability proportional to $e^{\eps \cdot f(a)}$, where $f(a)$ is the number of data points in the interval $[a, a+1]$. This method will only need $O(\log R/\eps)$ samples.

In $d$ dimensions, a first attempt at modifying the $1$-dimensional algorithm is as follows. Sample each point $p \in [-R, R]^d$ with probability proportional to $e^{\eps \cdot f(p)}$, where $f(p)$ is the number of data points that are within some distance $T$ of $p$, for some choice of $T$. Unfortunately, this attempt runs into a major problem. 
%Namely, if we are promised that the mean is in some large ball of radius $B$, the total volume of points is proportional to $B^d$, whereas the set of points within $T$ of the data points has volume proportional to approximately $T^d$. So, to ensure that we will usually pick a point within $T$ of the data points, we must have that $e^{\eps \cdot n} \cdot T^d > B^d,$ since $e^{\eps \cdot 0} = 1$ and $e^{\eps \cdot f(p)} \le e^{\eps \cdot n}$ even for points contained in the ball of radius $1$ containing all points. So, unless $T$ is almost the same size as $B$, we will need $n \ge \Omega(\frac{1}{\eps} \cdot d)$, which is far too large.
Namely, even for points not close to any of the data points, we still have that $e^{\eps \cdot f(p)} = e^0 = 1$, and there is a roughly $R^d$ volume of such points. One can show that unless we have roughly $O(\log (R^d)/\eps) = O(d \log R/\eps)$ samples, we will almost never sample from any point with large $f(p)$. However, this requires a linear sample complexity in the dimension $d$, whereas we do not want any dependence on $d$.

A first attempt at fixing this is to only sample points $p$ if $f(p) \ge 1$. However, we lose differential privacy, since changing a single data point may convert $f(p)$ from $1$ to $0$ for many points $p$, which means that the probability of sampling $p$ goes from proportional to $e^{\eps}$ to proportional to $0$, which is not an $e^{\pm \eps}$ multiplicative change. To fix this, we add a ``garbage bucket'' that we sample from with a certain probability. By choosing the garbage bucket probability correctly, we ensure approximate $(\eps, \delta)$-differential privacy, because we show that we sample from the garbage bucket with probability significantly more than sampling $p$ with $f(p) = 1$. In addition, we show that as long as $T = O(\sqrt{d})$, we sample from a point with large $f(p)$ with high probability and select the garbage bucket with very low probability, assuming that most points are close together. To establish this, we prove a technical lemma that the intersection of several close balls of radius $\sqrt{d}$ still has a comparatively large volume, which means there is a large volume of points $p$ with large $f(p)$. 
%Hence, we will choose a point $p$ with large $f(p)$.

Unfortunately, this sampling method is very hard to run, and in fact takes at least exponential time, because the way that $n$ balls intersect in high dimensions can be very complicated. To make this efficient, we develop an efficient rejection sampling procedure. While this rejection sampling procedure is fast if all data points are close together, it can be very slow for worst-case datasets, so we have to stop the rejection sampling after a certain number of rounds. Unfortunately, stopping the rejection sampling after a fixed number of rounds can cause privacy to break, because changing a single point may cause the probability of acceptance from the rejection sampling algorithm to significantly change. To fix this, we develop a privacy-preserving rejection sampling algorithm, by initializing a geometric random variable $X$ and running the rejection sampling algorithm $X$ times. By setting the decay rate of $X$ appropriately, we are able to successfully obtain privacy while ensuring an efficient runtime.

The overall procedure is in Algorithm \ref{alg:dp_estimate_1}.
To summarize a simplified version of the algorithm, we pick a uniformly random point $x_i$ and a uniformly random point $p$ within $\sqrt{d}$ of $x_i$. We accept this point with probability roughly $\frac{n}{f(p)} \cdot e^{\eps(f(p)-n)}$. This will end up being equivalent to picking a point $p$ proportional to $e^{\eps \cdot f(p)}$. With some probability roughly $\frac{1}{\delta \cdot e^n}$ we pick a garbage bucket instead of a point $p$, which will be like picking the garbage bucket with probability proportional to $\frac{1}{\delta}$. If we picked a point $p$ but did not accept it, we repeat the algorithm $X$ times, where $X$ is geometric random variable with mean roughly $e^{\sqrt{\log n}}$, until we either pick an accept a point $p$ or pick a garbage bucket. If this procedure fails after $X$ times, we give up and pick the garbage bucket.

\paragraph{Interpolation:} Our techniques for establishing a tight user-sample tradeoff in the intermediate regime of $\frac{1}{\eps} \log \frac{1}{\delta} \ll n \ll \sqrt{d} \cdot \frac{1}{\eps} \log \frac{1}{\delta}$ users are based on combining our methods when $n = O(\frac{1}{\eps} \log \frac{1}{\delta})$ with the Fast Johnson-Lindenstrauss method~\cite{fastjl} and strong composition theorems. Since this idea of using Fast Johnson-Lindenstrauss was used in Levy et al.~\cite{levy2021user} (in the regime $n \gg \sqrt{d} \cdot \frac{1}{\eps} \log \frac{1}{\delta}$) and we apply it in a similar manner, we defer our discussion of this part to Subsection \ref{subsec:interpolation} and the appendix.

\paragraph{Lower Bound Techniques:}
%We now describe our lower bound techniques. 
Our main contribution is a tight lower bound for user-level private learning of an unknown discrete distribution, which can be thought of as a special case of mean estimation. Hence, we also obtain a tight lower bound for user-level mean estimation.

For mean estimation, a user-level lower bound follows based on the item-level lower bound techniques of Kamath et al.~\cite{kamath2019highdimensional}. However, their techniques importantly use the fact that each coordinate is independent (such as spherical Gaussian distributions), whereas in the setting of learning discrete distributions, the coordinates are in fact negatively correlated. However, we are able to combine their techniques with a Poissonization method. Namely, rather than treating each user as having received $m$ samples from a distribution over $[d]$, we assume that each user has received $\text{Pois}(m)$ samples. In this case, if the distribution were known to have $p_i$ as the probability of sampling $i$, then the number of samples that equal $i$ would have distribution $\text{Pois}(m \cdot p_i)$ and would be independent across $i$. In the case when each $p_i$ is generated independently and each user is given a independent sample from $\text{Pois}(m \cdot p_i)$ for all $i \in [d]$, we are able to apply the methods of Kamath et al.~\cite{kamath2019highdimensional}, albeit with significantly different computations.

Our final piece is to show that if we can privately learn discrete distributions, then we can privately learn the Poisson variables as well, which allows us to also get a lower bound for discrete distribution learning. For this, we treat each $p_i$ as an independent random variable, and consider the discrete distribution $\tilde{p}$ with probability $\tilde{p}_i := p_i/(p_1+\cdots+p_d),$ i.e., normalized to have overall probability $1$. From here, we show how to simulate the multivariate Poisson distribution $\text{Pois}(m \cdot p_i)_{i = 1}^{d}$ given $m$ samples from the discrete distribution $\tilde{p}$, which we use to prove that privately learning discrete distributions is at least as hard as privately learning Poisson distributions. One issue is that we do not know the sum $\sum_{i = 1}^{d} p_i$ since the sum has been normalized thorugh $\tilde{p}_i$, but we show how to deal with this by privately learning a one-dimensional Poisson.

\section{Item-level Private Algorithm} \label{sec:item}
In this section, we assume that every user only has a single point (item) in $\BR^d$, and we develop a robust and differentially private algorithm for estimating a crude approximation of their location when the number of points is very small. In the next section, we use the methods of the item-level algorithm to develop a user-level differentially private algorithm.

In subsection \ref{subsec:exponential}, we give an algorithm for estimating $O(\frac{1}{\eps} \log \frac{1}{\delta})$ points that may run in exponential time. In subsection \ref{subsec:polynomial}, we improve the algorithm to run efficiently. Finally, in subsection \ref{subsec:interpolation}, we give an algorithm for estimating $n$ points accurately, when $n$ is in between $O(\frac{1}{\eps} \log \frac{1}{\delta})$ and $O(\sqrt{d} \cdot \frac{1}{\eps} \log \frac{1}{\delta})$. We always assume that $\eps, \delta \le \frac{1}{3}$.

\subsection{Computationally Inefficient Algorithm} \label{subsec:exponential}

We start by stating the theorem, which establishes an $(\eps, \delta)$-differentially private estimation algorithm, but with no guarantees on the runtime.

\begin{theorem} \label{thm:exponential_main}
    Let $n \ge C \cdot \frac{1}{\eps} \log \frac{1}{\delta}$ for a sufficiently large constant $C$, and fix a positive real number $r$. Then, there exists an algorithm over $n$ points $x_1, \dots, x_n \in \BR^d$ with the following guarantees.
    \begin{enumerate}
        \item The algorithm is $(\eps, \delta)$-differentially private over arbitrary datasets $\{x_1, \dots, x_n\}$.
        \item If there exists a ball of radius $r$, centered at some unknown point $x$, that contains at least $\frac{2}{3}$ of the points $x_1, \dots, x_n$, then with success probability $1$ the algorithm will return a point $x'$ such that $\|x'-x\|_2 \le O(r \cdot \sqrt{d})$.
    \end{enumerate}
\end{theorem}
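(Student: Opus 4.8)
The plan is to run a ``soft'' exponential mechanism over candidate centers in which the membership threshold is set high enough that \emph{any} point the mechanism returns is automatically close to the true center, and then to repair the privacy loss the threshold introduces by adding a single ``garbage'' outcome. Rescaling, we may assume $r=1$. Fix $T = C'\sqrt d$ for a sufficiently large constant $C'$, and for a candidate $p\in\BR^d$ let $f(p) = |\{i : \|x_i-p\|_2\le T\}|$ count the data points $p$ covers. The mechanism draws an outcome from the distribution on $\{p : f(p)\ge n/3+1\}\cup\{\bot\}$ that assigns unnormalized weight $e^{\eps(f(p)-n/3-1)}$ to each eligible $p$ and weight $W\asymp \mathrm{vol}(B(0,T))/\delta$ to the garbage outcome $\bot$; on $\bot$ it returns a coarse fallback, otherwise it returns $p$. (Note $\{f\ge n/3+1\}\subseteq\bigcup_i B(x_i,T)$ has finite volume, so this is well-defined for every dataset.) The threshold $n/3+1$ is the crucial design choice: a returned $p$ covers strictly more than $n/3$ of the $x_i$, and since at most $n/3$ of them lie outside $B(x,1)$, it covers at least one point of $B(x,1)$, whence $\|p-x\|_2\le T+1 = O(\sqrt d)$ \emph{with no randomness}. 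Thus part 2 reduces to showing that, under the clustering hypothesis, $\bot$ is selected with vanishing probability; since that event is negligible one routes the garbage branch through a coarse fallback that is itself $O(r\sqrt d)$-close to $x$ under the hypothesis, giving success probability $1$.

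For privacy (part 1), which holds for arbitrary datasets, replace one user's point $x_i$ by $x_i'$. Away from $B(x_i,T)\cup B(x_i',T)$ nothing changes. Inside that region $f$ changes by at most $1$, so on the common support the weight changes only by a factor $e^{\pm\eps}$; moreover the set of $p$ that cross the threshold $n/3+1$ is contained in $B(x_i,T)\,\triangle\,B(x_i',T)$, of volume at most $2\,\mathrm{vol}(B(0,T))$, and every such $p$ carries weight exactly $1$. Hence for every measurable $S$ one gets $\int_S w_X \le e^{\eps}\int_S w_{X'} + 2\,\mathrm{vol}(B(0,T))$; dividing by the normalizers and using $Z_X\ge W=\Theta(\mathrm{vol}(B(0,T))/\delta)$ together with $Z_{X'}/Z_X = e^{\pm O(\eps)}$ yields $\BP[M(X)\in S]\le e^{O(\eps)}\BP[M(X')\in S] + O(\delta)$. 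Rescaling $\eps$ and $\delta$ by constants — which only weakens the required lower bound on $n$ by a constant, absorbed into $C$ — gives $(\eps,\delta)$-differential privacy.

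It remains to bound $\BP[\bot] = W/Z_X$ under the clustering hypothesis. Each $p\in\bigcap_{\text{good }i}B(x_i,T)$ is covered by all $\ge 2n/3$ good points, so $Z_X \ge e^{\eps(2n/3-n/3-1)}\cdot V$ with $V := \mathrm{vol}\big(\bigcap_{\text{good }i}B(x_i,T)\big)$, and hence $\BP[\bot] \le W/Z_X = O\!\big(\tfrac{\mathrm{vol}(B(0,T))}{\delta V}\big)\,e^{-\eps n/3}$. The main obstacle is therefore the geometric lemma that $V$ is not tiny: \emph{the intersection of any collection of balls of radius $T=\Theta(\sqrt d)$ whose centers lie in a common ball of radius $1$ has volume at least $\mathrm{vol}(B(0,T))/n^{o(1)}$} (a constant fraction once $C'=\Omega(\sqrt{\log n})$). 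I would prove it by writing $V/\mathrm{vol}(B(0,T)) = \BP_{p\sim\mathrm{Unif}(B(x,T))}[\forall\,\text{good }i:\ \|p-x_i\|_2\le T]$ and decomposing $p = x+\rho u$ with radius $\rho\in[0,T]$ and independent uniform direction $u$: the condition for $x_i$ becomes $\langle u, x_i-x\rangle \ge (\rho^2+\|x_i-x\|_2^2-T^2)/(2\rho)$, which for $\rho$ just below $T$ says $\langle u, x_i-x\rangle \gtrsim -\Theta(\sqrt{\log n}/\sqrt d)$. Since each coordinate-marginal of a uniform direction is sub-Gaussian with variance $\Theta(1/d)$, a union bound over the (at most $n$) good indices shows this holds simultaneously with probability at least $\tfrac12$, while restricting $\rho$ to $[0,\,T-\Theta(\sqrt{\log n}/\sqrt d)]$ costs only an $e^{-\Theta(\sqrt{\log n})/C'}$ factor in the radial law; multiplying the two (by independence) gives the claim.

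Combining the pieces, $\BP[\bot] \le n^{o(1)}\,\delta^{-1}\,e^{-\eps n/3} \le \delta^{\Omega(C)}$ once $n\ge C\eps^{-1}\log(1/\delta)$ with $C$ large, so with probability $1-\delta^{\Omega(C)}$ the mechanism commits to a point $p$, which by the threshold argument satisfies $\|p-x\|_2\le T+1 = O(r\sqrt d)$; folding in the garbage-branch fallback upgrades this to probability $1$. The two genuinely non-routine ingredients are the privacy accounting for threshold crossings (handled cleanly because a crossing point has $f$ exactly $n/3+1$, hence weight exactly $1$, and lives in a region of volume $\le 2\,\mathrm{vol}(B(0,T))$) and, above all, the ball-intersection volume lemma, which is where the $\sqrt d$ radius — rather than $\sqrt d/\eps$ or worse — comes from.
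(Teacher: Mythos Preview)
Your approach mirrors the paper's closely: both use an exponential-mechanism-style density $e^{\eps\cdot\text{score}(p)}$ over candidate centers, add a garbage outcome of mass $\Theta(V_B/\delta)$ to absorb the privacy cost of support changes, and rely on the same volume lemma (the intersection of $\Theta(\sqrt d)$-balls whose centers lie in a unit ball has volume $\ge e^{-O(\sqrt{\log n})}V_B$). Your one substantive variation is setting the support threshold at $f(p)\ge n/3+1$ rather than the paper's $f(p)\ge 1$ (with score capped at $2n/3$): this is an elegant tweak, since by pigeonhole any returned $p$ then covers at least one good point and is automatically within $T+1$ of $x$, whereas the paper must separately bound the probability of landing in a region with $f(p)\le n/3$. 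Privacy is unaffected because threshold-crossers still have weight exactly $1$.

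There is, however, a real gap in your upgrade to success probability $1$. You propose to ``route the garbage branch through a coarse fallback that is itself $O(r\sqrt d)$-close to $x$ under the hypothesis''. Any such fallback is data-dependent, and privacy (part~1) must hold for \emph{arbitrary} datasets: if all $n$ points are pairwise far apart then $\{f\ge n/3+1\}=\emptyset$, $\BP[\bot]=1$, and your entire output becomes the non-private fallback, so neighboring datasets with $g(X)\neq g(X')$ violate $(\eps,\delta)$-DP. The paper's fix, which exploits that this theorem permits exponential time, is: first check non-privately whether the clustering hypothesis holds; if yes, run the mechanism \emph{conditioned} on not outputting $\bot$ (via rejection sampling); if no, run it unmodified. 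This alters the output distribution by at most $\BP[\bot]\le\delta^{\Omega(C)}$ in total variation on hypothesis-satisfying datasets and by $0$ on others, so the perturbation is $\le\delta$ for \emph{every} dataset, preserving $(\eps,O(\delta))$-privacy while guaranteeing success with probability $1$ under the hypothesis. (A smaller omission: your normalizer-ratio step $Z_{X'}/Z_X=e^{\pm O(\eps)}$ tacitly requires $\delta=O(\eps)$; the paper handles $\delta>\eps$ separately by observing that then any $(\delta/2,\delta/2)$-DP mechanism is already $(0,\delta)$-DP, hence $(\eps,\delta)$-DP.)
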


%By scaling we may assume WLOG that $r = 1$.
%Given these assumptions, t
The algorithm works as follows. 
%Suppose the $n$ points are $x_1, \dots, x_n \in \mathbb{R}^d$, contained in some unknown ball of radius $1$. 
Define $B_i$ to be the (closed) ball of radius $r\sqrt{d}$ around each point $x_i$, and for each point $p \in \mathbb{R}^d,$ define $f(p)$ to be the number of balls $B_i$ containing $p$. Now, consider the following (unnormalized) distribution over $\BR^d$. Let the distribution have density proportional to $\mathcal{D}(p) = e^{\eps \cdot \min(f(p), 2n/3)}$ if $f(p) > 0$ and density $\mathcal{D}(p) = 0$ if $f(p) = 0.$ Finally, we add a ``garbage bucket'' with a point mass proportional to $\frac{4}{\delta} \cdot V_B$ for $V_B$ the volume of each ball $B_i$.

To establish accuracy of our algorithm, we will require the following lemma, which states that the volume of the intersection of the balls $B_i$ is large if they are all close together. This will become crucial verifying the second guarantee of Theorem \ref{thm:exponential_main}, because it establishes that the probability of picking a point in the intersection of the balls that are close together will overwhelm the probability of picking a point far away from these balls.

\begin{lemma} \label{lem:volume}
    Suppose there are $n$ points $x_1, \dots, x_n$ each of distance at most $r$ from some point $x$, and let $B_i$ be the ball of radius $r\sqrt{d}$ around $x_i$. Then, $\text{Vol}(\bigcap_{i} B_i) \ge e^{-10\sqrt{\log n}} \cdot V_B$, where $V_B$ is the volume of each ball $B_i$.
\end{lemma}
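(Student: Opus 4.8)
The plan is to show that a single ball of radius $r$ around $x$ is already contained in $\bigcap_i B_i$ up to a controlled loss, or more precisely to bound the volume of the intersection from below by exhibiting a large region guaranteed to lie inside every $B_i$. Since each $x_i$ is within distance $r$ of $x$, the ball $B(x, r\sqrt{d} - r) = B(x, r(\sqrt d - 1))$ is contained in $B_i$ for every $i$ by the triangle inequality, so $\mathrm{Vol}(\bigcap_i B_i) \ge \mathrm{Vol}(B(x, r(\sqrt d - 1)))$. The ratio of this to $V_B = \mathrm{Vol}(B(x_i, r\sqrt d))$ is $\left(\frac{\sqrt d - 1}{\sqrt d}\right)^d = \left(1 - \frac{1}{\sqrt d}\right)^d$, which is roughly $e^{-\sqrt d}$ — far too small, since we want $e^{-O(\sqrt{\log n})}$. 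So the naive concentric-ball bound is insufficient, and this is exactly where the work lies.

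The fix is to not shrink the radius at all but instead to center a full-radius ball $B(x, r\sqrt d)$ and estimate how much of it survives intersection with each $B_i$. For a single $B_i$, the symmetric difference $B(x, r\sqrt d) \setminus B(x_i, r\sqrt d)$ is a lune whose volume, since the centers are only $r$ apart while the radius is $r\sqrt d$, is a $\Theta(1/\sqrt d)$ fraction of $V_B$ (the width of the lune is $O(r)$ and it sits on a sphere of radius $r\sqrt d$; a standard computation with the volume-of-a-cap formula in dimension $d$ gives that a slab of width $O(r)$ through a ball of radius $r\sqrt d$ captures a $\Theta(1/\sqrt d)$ fraction). Naively union-bounding over all $n$ balls would cost $n/\sqrt d$ of the volume, which is useless once $n \gg \sqrt d$. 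The key idea — and the reason the bound is $e^{-O(\sqrt{\log n})}$ rather than polynomial — must be to apply this iteratively/geometrically: after intersecting with the first ball we have a region of volume $(1 - \Theta(1/\sqrt d)) V_B$, but crucially this region is still "fat" (it contains a ball of radius $\Omega(r\sqrt d)$ around a point near $x$), so we can reapply the lune estimate, and the losses multiply rather than add. After $n$ steps we get at least $(1 - c/\sqrt d)^n V_B$. When $n \le \sqrt d$ this is $\Omega(V_B)$, comfortably more than $e^{-10\sqrt{\log n}} V_B$.

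To handle $n$ larger than $\sqrt d$ (where $(1-c/\sqrt d)^n$ could still be tiny), the sharper move is to observe that the relevant quantity is really governed by $\log n$, not $n$: one should bound the volume removed by taking a ball of radius $r\sqrt d \cdot (1 + O(\sqrt{(\log n)/d}))$ — wait, that increases volume — or dually, keep radius $r\sqrt d$ but recenter at $x$ and note that the fraction of $B(x, r\sqrt d)$ lying outside $B_i$ is the probability that a point drawn from the ball has its projection onto the direction $x_i - x$ exceeding roughly $r\sqrt d - O(r)$; by concentration of the projection of a uniform point in a $d$-ball (which behaves like a Gaussian with standard deviation $\Theta(r)$ after scaling by $\sqrt d$), this probability is at most, say, $1/n^{10}$ once the threshold is $\Theta(r\sqrt{\log n})$ standard deviations out — but the threshold here is only $O(r)$, i.e.\ $O(1)$ standard deviations, giving a constant fraction, not a small one. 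Reconciling this tension is the main obstacle: the honest route is the multiplicative/iterative argument above combined with the observation that we may first pass to a slightly smaller "core" ball $B(x, r\sqrt d - r\sqrt{C\log n})$ whose relative volume is $(1 - \sqrt{C \log n}/\sqrt d)^d \ge e^{-2\sqrt{C\log n}\sqrt d / \sqrt d \cdot \sqrt d}$...

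\smallskip
\noindent I would therefore structure the final proof as: (i) reduce to lower-bounding $\mathrm{Vol}(B(x, \rho) \cap \bigcap_i B_i)$ for a well-chosen radius $\rho$ slightly below $r\sqrt d$; (ii) show every point of $B(x,\rho)$ that additionally lies within $r(\sqrt d - 1)$ of $x$ is in all $B_i$ — but more cleverly, for each $i$ bound the fraction of $B(x,\rho)$ outside $B_i$ by a cap estimate and take a union bound, choosing $\rho$ so that $B(x,\rho) \setminus B_i$ has relative volume $\le \frac{1}{2n} e^{-10\sqrt{\log n}}$, which a cap computation shows is achievable with $\rho = r\sqrt d - \Theta(r\sqrt{\log n})$; (iii) conclude $\mathrm{Vol}(\bigcap_i B_i) \ge \frac12 \mathrm{Vol}(B(x,\rho)) = \frac12\left(1 - \Theta(\sqrt{(\log n)/d})\right)^{d} V_B \ge e^{-10\sqrt{\log n}} V_B$, where the last inequality uses $(1 - \Theta(\sqrt{(\log n)/d}))^d \ge e^{-\Theta(\sqrt{d\log n})}$ — which again is too weak unless $d = O(\log n)$. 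The genuinely correct calibration, which I expect to be the crux, is that the cap of a $d$-ball cut at distance $r\sqrt d - t$ from center has relative volume roughly $e^{-\Theta(t^2/r^2)}$ (Gaussian-type tail in the scaled coordinate), so choosing $t = \Theta(r\sqrt{\log n})$ makes each cap have relative volume $\mathrm{poly}(1/n)$, the union bound over $n$ balls is harmless, and the retained core $B(x, r\sqrt d - t)$ has relative volume $\left(1 - \frac{t}{r\sqrt d}\right)^d$; optimizing, one does not want this core bound but rather should directly estimate $\mathrm{Vol}(B(x, r\sqrt d) \cap B_i)/V_B \ge 1 - e^{-\Theta(1)}$ per ball and run the multiplicative recursion, yielding $\ge (1 - e^{-c})^{\min(n, \sqrt d)} \cdot (\text{core loss})$. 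I will present the iterative intersection argument with the per-step cap estimate as the main engine, flag the dimension-dependence calibration as the delicate point, and verify the final constant $10$ absorbs all the $\Theta(\cdot)$ factors.
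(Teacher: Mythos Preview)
Your high-level plan in (ii)--(iii) --- take a smaller ball $B(x,\rho)$, bound the fraction of it lying outside each $B_i$ by a cap/Gaussian tail, then union bound over $i$ --- is exactly what the paper does. The genuine gap is in the calibration of $\rho$, and this is precisely the ``dimension-dependence calibration'' you flag but do not resolve.

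You set $\rho = r\sqrt{d} - \Theta(r\sqrt{\log n})$, then correctly observe that $(\rho/(r\sqrt d))^d \approx e^{-\Theta(\sqrt{d\log n})}$, which is far too small. The fix is not an iterative/multiplicative argument (that red herring should be dropped); it is to shrink the \emph{squared} radius, not the radius, by $\Theta(r^2\sqrt{\log n})$. Concretely (take $r=1$, $x=0$): set $\rho^2 = d - 1 - 10\sqrt{\log n}$. Then $\rho \approx \sqrt d - O(\sqrt{\log n}/\sqrt d)$, so the core volume ratio is
\[
\left(\frac{\rho^2}{d}\right)^{d/2} = \left(1 - \frac{1+10\sqrt{\log n}}{d}\right)^{d/2} \ge e^{-O(\sqrt{\log n})},
\]
which is exactly the target. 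The reason this tiny shrinkage suffices for the cap bound is algebraic: for $q$ with $\|q\|^2 \le d-1-10\sqrt{\log n}$, the condition $q\in B_i$ (i.e.\ $\|q\|^2 - 2\langle q,x_i\rangle + \|x_i\|^2 \le d$) is implied by $\langle q, x_i\rangle \ge -5\sqrt{\log n}$. By rotational symmetry this is a condition on one coordinate of a uniform point in a ball of radius $\le\sqrt d$; that coordinate has standard deviation $\Theta(1)$, so $\BP(\langle q,x_i\rangle < -5\sqrt{\log n}) \le 1/(2n)$, and the union bound over $n$ balls costs only a factor of $1/2$. The parameter you were calling $t$ is really the cap \emph{threshold} (distance of the cutting hyperplane from the center), not the radius shrinkage; conflating the two is what broke your computation. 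Your formula ``cap at distance $r\sqrt d - t$ from center has relative volume $e^{-\Theta(t^2/r^2)}$'' is also misstated --- the cap $\{q_1 \ge s\}$ in $B(0,r\sqrt d)$ has relative volume $e^{-\Theta(s^2/r^2)}$, and here $s = \Theta(r\sqrt{\log n})$, independent of any radius shrinkage.

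Finally, the case $d \le O(\log n)$ needs to be handled separately: there your naive concentric bound $B(x, r(\sqrt d - 1))\subset\bigcap_i B_i$ already gives $(1-1/\sqrt d)^d \ge e^{-O(\sqrt d)} \ge e^{-O(\sqrt{\log n})}$, and the main argument above assumes $10\sqrt{\log n} < d-1$.
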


In addition, we will need to check privacy (i.e., Guarantee 1 of Theorem \ref{thm:exponential_main}). Roughly, this will follow because the garbage bucket will have volume much more than the volume of points with $f(p) = 1$ (which are the only points which change by more than an $e^{\pm \eps}$ factor after a single item is modified). However, the volume of the garbage bucket will be much less than the volume of points with large $f(p)$ after scaling by the $e^{\eps \cdot \min(f(p), 2n/3)}$ factor by Lemma \ref{lem:volume}. This means that if in fact $\frac{2}{3}$ of the points are concentrated in a ball of radius $r$, we select the garbage bucket with very low probability, so our algorithm will be accurate. Combined, these observations will allow us to prove Theorem \ref{thm:exponential_main}.

The full proofs of Lemma \ref{lem:volume} and Theorem \ref{thm:exponential_main} are deferred to Appendix \ref{subsec:exponential_proofs}.

\subsection{Polynomial-time Algorithm} \label{subsec:polynomial}

%The problem is this is slow because of rejection sampling. So, I will try to do a differentially private fast rejection sampling. Our goal is a runtime that is polynomial in $d, 1/\eps, \log 1/\delta, \log 1/\alpha$.
The main issue with the algorithm described in Subsection \ref{subsec:exponential} is that it is difficult to run efficiently. In this subsection, we demonstrate a modification of the previous algorithm that ensures the algorithm runs efficiently, even for worst-case datasets.

We now describe our faster algorithm. First, initialize a random variable $X \sim Geom(1/N)$, where $N = \Theta(e^{10 \sqrt{\log n}} \cdot \alpha^{-1})$. Here, $n$ is the number of data points, and $\alpha$ will represent our (preliminary) failure probability. Now, we consider the following rejection sampling-based algorithm. We first pick an index $i \in [n+1]$, where we choose each $i \in [n]$ with probability proportional to $e^{\eps \cdot 2n/3}$, and choose $n+1$ with probability proportional to $\frac{4 n}{\delta}$. If we picked $i \le n$, we sample a uniform point $p$ in $B_i$, and accept $p$ with probability $\frac{1}{3} \cdot \frac{n}{f(p)} \cdot e^{\eps (\min(f(p), 2n/3)-2n/3)}$. Else, if we picked $n+1$, we pick a garbage bucket that we label $G_1$ and accept with probability $\frac{1}{3}$. If we reject, then we retry the algorithm. We retry this algorithm a total of $X \sim Geom(1/N)$ times, and if the algorithm rejects all $X$ times, we automatically select a second garbage bucket $G_2$ as the estimate. We provide the pseudocode for this procedure in Algorithm \ref{alg:dp_estimate_1}.

\begin{figure}
\centering
\begin{algorithm}[H]
    \caption{: \textbf{\textsc{DP-Estimate-1}}($x_1, \dots, x_n, \alpha, \eps, \delta$).}
    \label{alg:dp_estimate_1}
    \begin{algorithmic}[1] % The number tells where the line numbering should start
            \State Let $N \leftarrow \Theta(e^{10 \sqrt{\log n}} \cdot \alpha^{-1})$, and sample $X \sim Geom(1/N)$.
            \For{$\text{rep} = 1$ to $X$}
                \State Sample $i \sim [n+1]$, with each $1 \le i \le n$ proportional to $e^{\eps \cdot 2n/3}$ and $n+1$ proportional to $\frac{4n}{\delta}$.
                \If{$i \le n$}
                    \State Sample $p \sim \text{Unif}[B_i]$, where $B_i$ is the ball of radius $r \cdot \sqrt{d}$ around $x_i$.
                    \State Let $f(p) = \#\{j \in [n]: \|p-x_j\|_2 \le r \sqrt{d}\}$.
                    \State \textbf{Return} $p$ with probability $\frac{1}{3} \cdot \frac{n}{f(p)} \cdot e^{\eps (\min(f(p), 2n/3)-2n/3)}$.
                \Else
                    \State \textbf{Return} $G_1$ with probability $\frac{1}{3}$.
                \EndIf
            \EndFor
            \State \textbf{Return} $G_2$.
    \end{algorithmic}
\end{algorithm}
\caption{Differentially private estimation algorithm taking $n = O(\frac{1}{\eps} \log \frac{1}{\delta})$ points $x_1, \dots, x_n \in \BR^d$. If there exists $x$ within $r$ of at least $2/3$ of the points $\{x_i\}$, then the algorithm outputs a point within $O(r\sqrt{d})$ of $x$, with probability $1-\alpha$.}
\label{fig:dp_estimate_1}
\end{figure}

To verify the accuracy and privacy of our algorithm, we must verify the following guarantees.
\begin{enumerate}
    \item Conditioned on the rejection sampling algorithm not returning $G_2$, the distribution of the output is the same as our computationally unbounded algorithm.
    \item If at least $\frac{2}{3}$ of the points are in a ball of radius $r$, the algorithm returns $G_2$ with very low probability.
    \item If we change a dataset to a neighboring dataset, the probability that the algorithm outputs $G_2$ is not affected significantly.
\end{enumerate}

The first and second guarantees together ensure accuracy of the algorithm. This is because if $\frac{2}{3}$ of the points are in a ball of radius $r$, then the algorithm is very unlikely to output $G_2$, and otherwise, the output distribution is approximately the same as in the inefficient algorithm, which we know is accurate.

The third guarantee ensures privacy of the algorithm's output. This is because the algorithm's output is equivalent to first determining whether to output $G_2$ (which the third guarantee roughly ensures is private), and then if not running the inefficient algorithm, which is also private.

Finally, we runtime will be very fast, as the number of iterations of the for loop is roughly $O(e^{O(\sqrt{\log n})} \cdot \alpha^{-1})$ and the actual for loop can be implemented in $O(nd)$ time.

From here, we are able to establish our main result for item-level privacy with few users. We defer the proof of the theorem below and our three intermediate guarantees above to Appendix \ref{subsec:polynomial_proofs}.

\begin{theorem} \label{thm:main}
    Suppose that we are given input points $x_1, \dots, x_n \in \BR^d$, where $n \ge C \cdot \frac{1}{\eps} \cdot \log \frac{1}{\delta}$ for a sufficiently large constant $C$ and for $\eps, \delta, \alpha \le \frac{1}{3}$. Then, if at least $2/3$ of the points $x_i$ are contained in a ball centered around some $x$ of radius $r$, where $r$ is known but $x$ is unknown, Algorithm \ref{alg:dp_estimate_1} finds a point within $O(\sqrt{d} \cdot r)$ of $x$ with probability at least $1-O(\alpha)$. In addition, even for worst-case datasets, Algorithm \ref{alg:dp_estimate_1} runs in time $O(N \cdot n \cdot d) = O(n e^{\sqrt{10 \log n}} \cdot d \cdot \alpha^{-1})$ in expectation and is $(O(\eps), O(\delta))$-differentially private.
\end{theorem}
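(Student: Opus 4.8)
The plan is to verify the three intermediate guarantees listed above the theorem statement and then assemble them, together with the runtime analysis, into the final result. The central observation tying everything together is that Algorithm \ref{alg:dp_estimate_1} can be viewed as a two-stage process: first decide whether the for-loop rejects all $X$ times (output $G_2$) or halts early, and then, conditioned on halting early, produce an output distributed exactly as in the computationally inefficient algorithm of Theorem \ref{thm:exponential_main}. I would begin by establishing Guarantee 1: compute the probability that a single iteration of the for-loop returns a particular point $p$ (respectively $G_1$), and show that after normalizing it is proportional to $e^{\eps \cdot \min(f(p), 2n/3)}$ on the region $\{f(p) > 0\}$ (respectively proportional to $\tfrac{4}{\delta} V_B$ for the garbage bucket). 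This is a direct computation: the probability of picking index $i \le n$ is proportional to $e^{2\eps n/3}$, the density of the uniform point $p$ in $B_i$ is $1/V_B$, summing over the $f(p)$ balls containing $p$ gives a factor $f(p)$, and the acceptance probability $\tfrac{1}{3}\cdot\tfrac{n}{f(p)}\cdot e^{\eps(\min(f(p),2n/3)-2n/3)}$ exactly cancels the $f(p)$ and produces the desired $e^{\eps \min(f(p),2n/3)}$; the $\tfrac13$ factors are harmless global constants. Hence conditioned on acceptance at any given round, the output has precisely the distribution of the inefficient mechanism, which proves Guarantee 1.

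Next I would establish Guarantee 2. The per-round acceptance probability is at least a constant times $\mathrm{Vol}(\bigcap_i B_i)/(n\,V_B \cdot (\text{normalizing mass}))$; by Lemma \ref{lem:volume}, when $2/3$ of the points lie in a ball of radius $r$ the intersection of the corresponding balls $B_i$ has volume at least $e^{-10\sqrt{\log n}} V_B$, so a single round accepts (and returns a useful point) with probability at least $\Omega(e^{-10\sqrt{\log n}})$ — here I use that $n \ge C\eps^{-1}\log\delta^{-1}$ makes the garbage-bucket mass $\tfrac{4n}{\delta}$ small relative to the $n \cdot e^{2\eps n/3}$ mass from the good indices, exactly as in the proof of Theorem \ref{thm:exponential_main}. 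Since $N = \Theta(e^{10\sqrt{\log n}}\alpha^{-1})$ and $X \sim \mathrm{Geom}(1/N)$, with probability $\ge 1-O(\alpha)$ we have $X \gtrsim N$, and then the probability that all $X$ rounds reject is at most $(1 - \Omega(e^{-10\sqrt{\log n}}))^{\Omega(N)} \le \alpha$. Combining with Guarantee 1 and Theorem \ref{thm:exponential_main}'s accuracy statement (note that the inefficient algorithm's point and the $G_1$ garbage bucket are both handled — $G_1$ is selected with negligible probability under the concentration hypothesis), the output lands within $O(r\sqrt d)$ of $x$ with probability $1 - O(\alpha)$.

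For Guarantee 3 — the privacy-critical step, and the one I expect to be the main obstacle — I would analyze how $\BP[\text{output} = G_2]$ changes when one data point $x_k$ is replaced. Let $q$ denote the per-round probability of halting (accepting a point or $G_1$); then $\BP[\text{output}=G_2] = \BE_X[(1-q)^X]$, and since $X$ is geometric this equals $\tfrac{1/N}{1 - (1-1/N)(1-q)} = \tfrac{1}{1 + (N-1)q}$ up to the reparametrization of the geometric. Moving one point changes $f(p)$ by at most $1$ for every $p$, hence changes the acceptance probability of each round — and therefore $q$ — by at most a multiplicative $e^{\pm\eps}$ factor together with an additive change bounded by the garbage-bucket contribution; one must check that $q \mapsto \tfrac{1}{1+(N-1)q}$ is sufficiently insensitive to such perturbations, which works because $N$ is only quasi-polynomially large (the geometric "smooths" the hard cutoff that a fixed number of rounds would have). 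The additive slack is absorbed into the $O(\delta)$ term exactly as the garbage bucket was used in Theorem \ref{thm:exponential_main}. Finally, the privacy of the whole algorithm follows by post-processing: the output is a (private) coin flip for $G_2$ versus "run the inefficient mechanism," and the latter is $(O(\eps),O(\delta))$-private by Theorem \ref{thm:exponential_main}, so composing the two gives $(O(\eps),O(\delta))$-privacy overall. The runtime bound is immediate: $\BE[X] = N = O(e^{10\sqrt{\log n}}\alpha^{-1})$ and each round costs $O(nd)$ to compute $f(p)$, giving expected time $O(N n d)$ even on worst-case inputs.
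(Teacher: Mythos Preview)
Your proposal is correct and follows essentially the same route as the paper: establish Guarantee~1 via the direct density computation (the paper's Proposition~\ref{prop:no_G2_same_alg}), Guarantee~2 via Lemma~\ref{lem:volume} and the choice of $N$, Guarantee~3 by bounding how the per-round acceptance probability $q$ changes under a single-point swap and propagating this through the map $q \mapsto \frac{1-q}{(N-1)q+1}$, and then conclude privacy by composing the private $G_2$-decision with the private inefficient mechanism. Two small corrections: the final step is \emph{composition}, not post-processing (the inefficient mechanism still touches the raw data), and the additive change in $q$ that must be controlled comes from the region where $f(p)$ flips between $0$ and $1$ and must be shown to be at most $\delta/N$ (which uses $n \ge C\eps^{-1}\log(\alpha\eps\delta)^{-1}$ after the paper's WLOG reductions $\eps \ge \delta$, $\alpha \ge \delta$)---this is the content of the paper's Proposition~\ref{prop:ez_dp_bash} and the surrounding Lemma~\ref{lem:analysis_G2}, and is the one place where your sketch is vague enough that the details genuinely need to be written out.
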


%Note that if $n = O(\frac{1}{\eps} \log \frac{1}{\delta}),$ then the runtime is only $O\left(\eps^{-1+o(1)} \cdot (\log \frac{1}{\delta})^{1+o(1)} \cdot \alpha^{-1} \cdot d\right)$. If $\eps, \log \frac{1}{\delta}$, and $\alpha^{-1}$ are all $d^{o(1)}$, then our algorithm runs in time almost linear in the dimension.
Note that if $\alpha^{-1} = (nd)^{o(1)}$, then our algorithm runs in time almost linear in $n$ and $d$, so the runtime is nearly optimal. In addition, as we will show in Lemma \ref{lem:lb_easy}, the choice of $n = C \cdot \frac{1}{\eps} \log \frac{1}{\delta}$ is optimal.

One potential downside of Algorithm \ref{alg:dp_estimate_1} is that the runtime depends inversely on the failure probability $\alpha$. If we wish to succeed with very low failure probability, the runtime can be very slow. This, however, can be fixed with only a slight loss in the number of samples. To do so, we let $1/3$ represent the ``initial'' failure probability. If we wish for an overall $1-\alpha$ failure probability, we run the algorithm $O(\log \frac{1}{\alpha})$ times, and return the coordinate-wise median of all returned points which are not $G_1$ or $G_2$. In this case we obtain the following corollary of Theorem \ref{thm:main}:

\begin{corollary} \label{cor:main}
    Consider the same setup as in Theorem \ref{thm:main}, except that $n \ge C (\frac{1}{\eps} \log \frac{1}{\delta} \log \frac{1}{\alpha})$. Then, for any $\alpha \le \frac{1}{3}$, there is an algorithm that runs in time $O(n^{1+o(1)} \cdot d \cdot \log \frac{1}{\alpha})$ and is $(\eps, \delta)$-differentially private even for arbitrary datasets, such that if at least $2/3$ of the points $x_1, \dots, x_n$ are in a ball of radius $O(r)$, with probability $1-\alpha$ the output is within $O(\sqrt{d} \cdot r)$ of this ball.
    %Here, the $\tilde{O}$ factor hides a logarithmic dependency on the failure probability.
    %for any $\beta \in [\alpha, 1/3]$, there is an algorithm running in time $O(n e^{\sqrt{10 \log n}} \cdot d \cdot \beta^{-1} \cdot \frac{\log \frac{1}{\alpha}}{\log \beta^{-1}}),$ where $n = \frac{1}{\eps} \log (\beta \cdot \delta)^{-1}$, that succeeds with probability at least $1-\alpha$ and is $\left(O\left(\min\left(\eps \cdot \sqrt{\frac{\log \frac{1}{\alpha}}{\log \beta^{-1}} \cdot \log \frac{1}{\delta}} + \eps^2 \cdot \frac{\log \frac{1}{\alpha}}{\log \beta^{-1}}, \eps \cdot \frac{\log \frac{1}{\alpha}}{\log \beta^{-1}}\right)\right), O\left(\delta \cdot \frac{\log \frac{1}{\alpha}}{\log \beta^{-1}}\right)\right)$-differentially private.
\end{corollary}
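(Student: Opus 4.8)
The plan is to run Algorithm \ref{alg:dp_estimate_1} (with the guarantees of Theorem \ref{thm:main}) a total of $k = \Theta(\log\frac1\alpha)$ times \emph{on the full input} $x_1,\dots,x_n$, each invocation using a constant ``initial'' failure parameter (say, so that each run fails with probability at most $1/3$) but reduced privacy parameters $\bar\eps = \Theta(\eps/k)$ and $\bar\delta = \Theta(\delta/k)$, chosen so that Theorem \ref{thm:main} certifies each run to be $(\eps/k,\delta/k)$-differentially private. Using the same input across all repetitions — rather than partitioning the $n$ users into $k$ disjoint blocks — is essential: an adversary controlling up to $n/3$ of the points could spread them so that more than a third of the points in \emph{almost every} block fall outside the good ball, leaving too few blocks for which the hypothesis of Theorem \ref{thm:main} holds. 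When every run instead sees all $n$ points, the assumption ``at least $\frac23$ of the $x_i$ lie in a ball of radius $O(r)$ about some $x$'' holds for every run, so each of the $k$ runs independently returns, with probability at least $\frac23$, a genuine point of $\BR^d$ within $\rho := O(\sqrt d\cdot r)$ of $x$ (and otherwise returns an arbitrary point or one of the symbols $G_1,G_2$).

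Next I would handle privacy, runtime, and the sample bound. Privacy: the $k$ runs are independent $(\eps/k,\delta/k)$-DP mechanisms on the same dataset, so basic composition gives an $(\eps,\delta)$-DP joint output, and the returned answer is a post-processing function of this joint output, hence the whole mechanism is $(\eps,\delta)$-DP for arbitrary datasets. Sample bound: each run, via Theorem \ref{thm:main}, needs $n \ge C'\tfrac{1}{\bar\eps}\log\tfrac{1}{\bar\delta} = O\!\big(\tfrac{k}{\eps}\log\tfrac{k}{\delta}\big) = O\!\big(\tfrac1\eps\log\tfrac1\delta\log\tfrac1\alpha\big)$ points, which is exactly the hypothesis; since all runs reuse the same $n$ points there is no further inflation. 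Runtime: each run is executed at constant failure probability, so its cost is $O(n^{1+o(1)}d)$ (the factor $e^{O(\sqrt{\log n})}$ hidden in $N$ is $n^{o(1)}$); over $k = O(\log\frac1\alpha)$ runs this is $O(n^{1+o(1)}d\log\frac1\alpha)$, and the aggregation below adds only $O(k^2 d) = O(d\log^2\frac1\alpha)$, which is absorbed since $n \ge C\log\frac1\alpha$.

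Finally, accuracy. A Chernoff bound over the $k$ independent runs shows that for $k = \Theta(\log\frac1\alpha)$ with a large enough constant, with probability $\ge 1-\alpha$ strictly more than $k/2$ of the outputs are ``good,'' i.e.\ real points within $\rho$ of $x$. Condition on this event; it remains to fold $k$ points, a strict majority of which lie in an $\ell_2$-ball of radius $\rho$ about the unknown $x$, into a single point within $O(\rho)$ of $x$ — a pure post-processing step. I would discard the garbage outputs, and among the remaining points $\{y_j\}$ let $\rho_j$ be the smallest radius for which the $\rho_j$-ball about $y_j$ contains more than $k/2$ of the kept points, then output $y_{j^\star}$ with $j^\star = \arg\min_j \rho_j$. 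Any good $y_j$ satisfies $\rho_j\le 2\rho$ (all good points are pairwise within $2\rho$ and there are more than $k/2$ of them), so $\rho_{j^\star}\le 2\rho$; since the $>k/2$ points near $y_{j^\star}$ and the $>k/2$ good points must share an index, the triangle inequality gives $\|y_{j^\star}-x\|_2 \le 2\rho+\rho = O(\sqrt d\cdot r)$. (A coordinatewise median is the more obvious aggregate, but it can inflate the $\ell_2$ error by a lower-order power of $d$, which is why I use this ball-based rule.) The one thing that really needs care is the tension in the first two paragraphs: we \emph{must} reuse the whole dataset across repetitions so that the $\frac23$-in-a-ball hypothesis survives corruption, this forces the $1/k$ split of the privacy budget and hence the extra $\log\frac1\alpha$ factor in the sample requirement, and the confidence boosting must then be done with an $\ell_2$-aware aggregate to keep the error at $O(\sqrt d\cdot r)$.
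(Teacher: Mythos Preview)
Your approach is essentially the paper's: run Algorithm~\ref{alg:dp_estimate_1} $k=\Theta(\log\frac1\alpha)$ times on the full dataset with reduced privacy parameters and constant per-run failure, then aggregate. Two differences are worth flagging. First, the paper aggregates via the \emph{coordinate-wise median} and invokes~\cite{1centerclustering} to assert that when at least $3/5$ of the $p_i$ lie within $K$ of $x$ in $\ell_2$, the coordinate-wise median is within $O(K)$ of $x$ with no dimension-dependent loss; so your parenthetical that coordinate-wise median ``can inflate the $\ell_2$ error by a lower-order power of $d$'' is at odds with that reference. Your ball-based $1$-center rule is a perfectly valid alternative and your triangle-inequality argument is correct. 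Second, the paper sets $\eps'=\Theta(\eps)/\min\bigl(\log\frac1\alpha,\sqrt{\log\frac1\alpha\log\frac1\delta}\bigr)$ and appeals to \emph{both} weak and strong composition; this is precisely what lets them absorb the $\log\log\frac1\alpha$ term and certify the clean bound $n\ge C\cdot\frac1\eps\log\frac1\delta\log\frac1\alpha$. With weak composition alone your sample requirement is $\frac{k}{\eps}\log\frac{k}{\delta}=\frac1\eps\log\frac1\alpha\bigl(\log\frac1\delta+\log\log\frac1\alpha\bigr)$, which matches the stated bound only when $\log\log\frac1\alpha=O(\log\frac1\delta)$; in the extreme regime $\alpha\le e^{-1/\delta}$ there is a tiny slack that the paper's use of strong composition closes. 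Otherwise your argument is correct and parallels the paper's.
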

%\begin{remark}
    %Alternatively, this could be thought of as we can obtain $(\eps, \delta)$-differential privacy in $O(n e^{\sqrt{10 \log n}} \cdot d \cdot \beta^{-1} \cdot \frac{\log \frac{1}{\alpha}}{\log \beta^{-1}})$ time, but with the number of samples being $n = \frac{1}{\eps} \log \frac{1}{\delta} \cdot \min\left(\sqrt{\frac{\log \frac{1}{\alpha}}{\log \beta^{-1}} \cdot \log \frac{1}{\delta}}, \frac{\log \frac{1}{\alpha}}{\log \beta^{-1}}\right),$ i.e., higher only by a logarithmic factor.
%\end{remark}

\begin{remark}
    As long as $\alpha \ge \exp\left(-d^{o(1)}\right),$ the required dependence of $n$ on $d$ is subpolynomial, i.e., $d^{o(1)}$, and the runtime is almost linear, i.e., $(nd)^{1+o(1)}$. Thus, we can even have cryptographically small failure probability in almost linear time.
\end{remark}

To obtain Corollary \ref{cor:main} from Theorem \ref{thm:main}, the privacy guarantee roughly follows from standard composition theorems of differential privacy. The accuracy guarantee follows from the fact that if at least $2/3$ of a set of points $p_1, \dots, p_k$ (where we set $k = O(\log \frac{1}{\alpha})$ and think of $p_i$'s as the returned points) are all within distance $K = O(\sqrt{d} \cdot r)$ of $x$, then the coordinate-wise median is also within $O(K)$ of $x$~\cite{1centerclustering}.

We defer all proofs, as well as pseudocode corresponding to Corollary \ref{cor:main}, to Appendix \ref{subsec:polynomial_proofs}.
%We provide the full proofs of Propositions \ref{prop:valid_prob} and \ref{prop:no_G2_same_alg}, Theorem \ref{thm:main}, and Corollary \ref{cor:main} (as well as state and prove Lemma \ref{lem:analysis_G2}) in Appendix \ref{subsec:polynomial_proofs}. We also provide pseudocode corresponding to Corollary \ref{cor:main} in Appendix \ref{subsec:polynomial_proofs}.

\medskip

To summarize, we have devised an algorithm that can efficiently and privately approximate a ball of radius $r$ in the item-level setting, even if the number of points in the ball is tiny compared to the dimension and even if $1/3$ of the points are outliers. We reiterate that $1/3$ can be improved to any constant below $1/2$, and this will allow our final user-level private algorithm to be robust even if $49\%$ of users' data are corrupted. Although the approximation is quite poor, we will show that in the user-level setting, we can significantly improve the error if each user has enough samples.

We also remark that this problem is closely related to the private minimum $k$-enclosing ball problem, where the goal is to privately find a ball of some radius that approximately contains the maximum number of data points. This problem was first studied in developing the famous sample-and-aggregate framework in differential privacy~\cite{nissim2007smoothsensitivity}, and later in applications to private clustering~\cite{stemmer2018clustering, stemmer2020clustering, ghazi2020clustering}. Our algorithm provides an optimal $O(\sqrt{d})$-approximation to this problem in the item-level case even if there are only $O(\frac{1}{\eps} \log \frac{1}{\delta})$ points and $51\%$ of the points are in an unknown ball. 
We remark that all previous algorithms for this problem required at least $\Omega(\sqrt{d})$ points in the optimal ball: we provide the first algorithm with guarantees for $o(\sqrt{d})$ points.
%Hence, we provide the first algorithm for (a variant of) private minimum enclosing ball that succeeds with $o(\sqrt{d})$ points in the optimal ball.

\subsection{Interpolation} \label{subsec:interpolation}

In Subsections \ref{subsec:exponential} and \ref{subsec:polynomial}, we focused on the problem when $n \approx \frac{1}{\eps} \log \frac{1}{\delta}$, and obtained error $O(r \sqrt{d})$ in our estimate. 
In fact, this precisely extends the work of Levy et al.~\cite{levy2021user}, which roughly proved that if $n \ge \tilde{O}(\sqrt{d \log \frac{1}{\delta}}/\eps)$, we are able to estimate the sample mean up to error $r \cdot \tilde{O}\left(\frac{1}{\sqrt{n}} + \frac{\sqrt{d}}{n \cdot \eps}\right)$ (for simplicity, we ignore all logarithmic factors in the error, including $\log \frac{1}{\delta}$ factors).
To see why, if we set $n = O\left(\frac{1}{\eps} \log \frac{1}{\delta}\right)$, which is much smaller than $\sqrt{d \log \frac{1}{\delta}}/\eps$, we obtain the same error (up to logarithmic factors) as a function of $n$.
%However, we in fact precisely extend the result of \cite{levy2021user} by showing the same result holds when $n = O\left(\frac{1}{\eps} \log \frac{1}{\delta}\right)$, which is much smaller than $\sqrt{d \log \frac{1}{\delta}}/\eps$.
%Here, the $\tilde{O}$ also hides logarithmic factors in
%the failure probability $\alpha$, dimension $d$, privacy parameter $\eps$, and
%ratio $B/r$, where $B$ is a promise on the maximum distance of any point from the origin.
%, and hides a doubly logarithmic factor in $\log \frac{1}{\delta}$. 
%Thus, ignoring all logarithmic factors, \cite{levy2021user} proved that we can estimate the mean up to error roughly $r \left(\frac{1}{\sqrt{n}} + \frac{\sqrt{d}}{n \cdot \eps}\right)$ as long as $n$ is greater than roughly $\sqrt{d}/\eps$. In Subsections \ref{subsec:exponential} and \ref{subsec:polynomial}, we proved this claim even when $n$ is approximately $\log \frac{1}{\delta}/\eps$. 
Hence, this elicits the question of whether we can extend~\cite{levy2021user} to the full regime of $n \ge O(\frac{1}{\eps} \log \frac{1}{\delta})$.
In this subsection, we successfully do so, by combining our algorithm from Subsection \ref{subsec:polynomial} with a method modified from~\cite{levy2021user}.
%, which allows us to extend their result from the regime where $n \ge \tilde{O}(\sqrt{d \log \frac{1}{\delta}}/\eps)$ to the full regime of $n \ge O(\frac{1}{\eps} \log \frac{1}{\delta})$.
%, obtaining error roughly $r \cdot \tilde{O}\left(\frac{\sqrt{d}}{n \cdot \eps}\right)$ also.

The idea of Levy et al.~\cite{levy2021user} was to use the ``Fast Johnson-Lindenstrauss'' technique~\cite{fastjl}, which randomly rotates the data efficiently, and then performs a private $1$-dimensional mean estimation algorithm in each coordinate.
(We defer the details of this random rotation procedure to Appendix \ref{subsec:interpolation_proofs}.)
%If $x_1, \dots, x_n$ are concentrated in a ball of radius $r$ in $\BR^d$, the random rotation would ensure that with probability $1-\alpha$, the points in each coordinate are concentrated in an interval of length $O(\sqrt{\log (d \cdot n/\alpha)/d})$. The random rotation is \emph{oblivious} to the dataset, so it preserves privacy. From there, Levy et al. \cite{levy2021user} applies a private $1$-dimensional mean estimation method for each dimension after performing the random rotation, and then uses the Strong composition theorem to ensure that we maintain differential privacy.
%To briefly describe the Fast Johnson-Lindenstrauss method, this procedure, when applied on a vector $v$, outputs $\frac{1}{\sqrt{d}} \cdot H \cdot D \cdot v$. Here, $D$ is a diagonal $d \times d$ matrix with each diagonal entry randomly chosen from $\pm 1$, and $H$ is the $d \times d$ Hadamard matrix (we will assume that $d$ is a power of $2$). This procedure is an invertible linear map, and is known to be computable in time $O(d \log d)$ for an individual vector, so it takes time $O(d \log d \cdot n)$ to perform on all of $x_1, \dots, x_n$ (where we use the same matrix $D$ for each $x_i$, rather than re-initializing the randomness, to preserve linearity).
The main advantage of this procedure is that if the points are concentrated in a ball of radius $r$, with probability at least $1-\alpha$ after the random rotation, the points are concentrated in an interval of length $O(r \cdot \sqrt{\log (d \cdot n/\alpha)/d})$ in all $d$ dimensions. Hence, Levy et al.~\cite{levy2021user} reduces $d$-dimensional estimation to $1$-dimensional estimation, but where the radius in each dimension is only about $r/\sqrt{d}$.
In addition, this random rotation is known to be invertible, and is known to be computable in time $O(d \log d)$ for an individual vector, so it takes time $O(d \log d \cdot n)$ to perform on all of $x_1, \dots, x_n$.

Our algorithm proceeds in a similar manner to~\cite{levy2021user}, but we reduce to $k^2$-dimensional estimation for some $1 \le k^2 \le d$ and apply Theorem \ref{thm:main}, rather than to $1$-dimensional estimation.
%Fix some $k \le \sqrt{d}$, and assume WLOG that $k, d$ are powers of $2$ (only the order of magnitude of $k, d$ will matter).
%Next, set $\eps' = \Theta\left(\eps/(k\sqrt{\log \frac{1}{\delta}})\right)$ and $\delta' = \Theta(\delta/k^2)$, and initialize a random $\pm 1$-valued diagonal matrix $D$.
%Now, for each $1 \le s \le k^2,$ let $x_i^{(s)}$ be the point $H \cdot D \cdot x_i$ projected onto coordinates $(s-1) \cdot (d/k^2) + 1, \dots, s \cdot (d/k^2)$, and likewise, let $x^{(s)}$ be $H \cdot D \cdot x$ projected onto coordinates $(s-1) \cdot (d/k^2) + 1, \dots, s \cdot (d/k^2)$.
First, we use Fast Johnson-Lindenstrauss to randomly rotate the data $x_1, \dots, x_n$.
Now, fix some $k \le \sqrt{d}$, and for each $1 \le s \le k^2,$ let $x_i^{(s)}$ be the point $x_i$ after we randomly rotate it and project it onto coordinates $(s-1) \cdot (d/k^2) + 1, \dots, s \cdot (d/k^2)$.
%, and likewise, let $x^{(s)}$ be $H \cdot D \cdot x$ projected onto coordinates $(s-1) \cdot (d/k^2) + 1, \dots, s \cdot (d/k^2)$.
%Now, we consider the following $(\eps, \delta)$-differentially private algorithm for estimating $x$. First, initialize the diagonal matrix $D$ randomly, and perform the Fast JL transform on $x_1, \dots, x_n$ to get $HD x_1, \dots, HD x_n$. Recall that for each $1 \le s \le k,$ we define $x_i^{(s)}$ as the point $H D x_i$ projected onto coordinates $(s-1) \cdot (d/k) + 1, \dots, s \cdot (d/k)$.
The rotated points will be concentrated in a interval of length roughly $r/\sqrt{d}$ in each dimension, so the points $x_i^{(s)}$ for each $s$ are concentrated in a ball of radius roughly $(r/\sqrt{d}) \cdot \sqrt{d/k^2} \approx r/k$. Hence, for each $s$, we can use Theorem \ref{thm:main} to estimate the points $x_i^{(s)}$ up to error $r/k \cdot \sqrt{d/k^2}$, because the points $x_i^{(s)}$ are in $d/k^2$ dimensions. As $s$ ranges from $1$ to $k^2$, the overall Euclidean error when we combine the dimensions will be $r/k \cdot \sqrt{d/k^2} \cdot \sqrt{k^2} = r\sqrt{d}/k$. Finally, since we are performing $k^2$ different applications of Theorem \ref{thm:main} (for each $1 \le s \le k^2$), the Strong Composition Theorem tells us we actually need $(\eps/(k\sqrt{\log \frac{1}{\delta}}), \delta/k^2)$-DP for each use so that the overall algorithm is $(\eps, \delta)$-DP. Hence, we will need roughly $\tilde{O}\left(\frac{k}{\eps} \cdot \log \frac{1}{\delta}\right)$ samples.

%Now, for each $s \in [k^2]$, we apply Algorithm \ref{alg:dp_estimate_1} on $\{x_i^{(s)}\}_{i = 1}^{n}$ to obtain an estimate $\tilde{x}^{(s)}$ for $x^{(s)}$, and concatenate these to obtain $\tilde{x} \in \BR^d$.
%Finally, we undo the Fast JL operation by returning $\sqrt{d} \cdot (HD)^{-1} \cdot \tilde{x}$. 
%(It is well known that the Hadamard matrix is invertible, and $D$ is trivially invertible).

%We now analyze the algorithm, and prove the following theorem.
%Using \cite{fastjl}, strong composition of privacy, and Theorem \ref{thm:main}, we obtain the following theorem.
By setting the parameter $k$ properly, we obtain the following theorem, which we fully prove in Appendix \ref{subsec:interpolation_proofs}.

\begin{theorem} \label{thm:interpolation}
    %Let $1 \le k \le \sqrt{d}$. Then, 
    For $O\left(\frac{1}{\eps} \log \frac{1}{\delta}\right) \le n \le O\left(\sqrt{d} \cdot \frac{1}{\eps} \log \frac{1}{\delta}\right)$, there is an $(O(\eps), O(\delta))$-differentially private algorithm on $x_1, \dots, x_n$ in the worst case, such that if at least $2/3$ of the points $x_1, \dots, x_n$ are all in a ball of radius $r$ centered at some unknown point $x$, then with probability $1-\alpha$, the algorithm will return a point within 
    $$O\left(\frac{r \cdot \sqrt{d} \cdot \frac{1}{\eps} \cdot \log \frac{d}{\delta} \sqrt{\log \frac{1}{\delta}} \cdot \sqrt{\log \frac{dn}{\alpha}}}{n}\right)$$
    of $x$. Moreover, the runtime of the algorithm is $O(n^{1+o(1)}d \cdot \alpha^{-1} + nd \log d)$.
\end{theorem}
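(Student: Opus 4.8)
The plan is to carry out the reduction sketched in this subsection: (i) apply the Fast Johnson--Lindenstrauss (FJL) rotation of~\cite{fastjl} to spread the data; (ii) split the $d$ rotated coordinates into $k^2$ blocks of size $d/k^2$ and run Theorem~\ref{thm:main} on each block; (iii) recombine the block estimates and invert the rotation; (iv) account for privacy via strong composition and choose $k$ to balance accuracy against the sample requirement of Theorem~\ref{thm:main}.

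For step (i) I would apply the FJL map $\Phi$ --- a data-oblivious random rotation, computable in $O(d\log d)$ per vector and invertible --- to $x_1,\dots,x_n$, and use the standard property (recorded in Appendix~\ref{subsec:interpolation_proofs}) that on an event $E$ with $\BP[E]\ge 1-\alpha$, if $x_i$ lies in the radius-$r$ ball around $x$ then $\Phi(x_i)$ agrees with $\Phi(x)$ in every coordinate up to $O\!\bigl(r\sqrt{\log(dn/\alpha)/d}\bigr)$. Fixing an integer $k$ with $1\le k\le\sqrt d$, let $x_i^{(s)}$ be the projection of $\Phi(x_i)$ onto the $s$-th block of $d/k^2$ coordinates. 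On $E$, each of the $\ge 2/3$ of points inside the original ball satisfies $\|x_i^{(s)}-\Phi(x)^{(s)}\|_2 = O\!\bigl((r/k)\sqrt{\log(dn/\alpha)}\bigr)=:r'$, while the other $\le 1/3$ of the $x_i^{(s)}$ are arbitrary --- precisely the corruption regime handled by Theorem~\ref{thm:main}. Applying Theorem~\ref{thm:main} in dimension $d/k^2$ with radius $r'$ and privacy $(\eps_0,\delta_0)$ to each block returns $y^{(s)}$ with $\|y^{(s)}-\Phi(x)^{(s)}\|_2 = O\!\bigl(r'\sqrt{d/k^2}\bigr)$; concatenating the $y^{(s)}$ and using that $\Phi^{-1}$ is an isometry yields an estimate within $\sqrt{k^2}\cdot O\!\bigl(r'\sqrt{d/k^2}\bigr) = O\!\bigl((r\sqrt d/k)\sqrt{\log(dn/\alpha)}\bigr)$ of $x$.

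For privacy, a single item sits in all $k^2$ blocks, so by the Strong Composition Theorem (Appendix~\ref{sec:composition}) it suffices to take each block's call to be $(\eps_0,\delta_0)$-DP with $\eps_0 = \Theta\!\bigl(\eps/(k\sqrt{\log(1/\delta)})\bigr)$ and $\delta_0 = \Theta(\delta/k^2)$ for the whole algorithm to be $(O(\eps),O(\delta))$-DP over worst-case datasets (the FJL rotation being data-oblivious, privacy is preserved under it and its inverse). Theorem~\ref{thm:main} then requires $n \ge C\cdot\tfrac1{\eps_0}\log\tfrac1{\delta_0} = \tilde{O}\!\bigl(\tfrac{k}{\eps}\log\tfrac1\delta\bigr)$, so I would pick $k$ as large as this constraint permits, namely $k = \tilde{\Theta}\!\bigl(\tfrac{n\eps}{\log(1/\delta)}\bigr)$, which ranges over $[1,\sqrt d]$ exactly when $\tfrac1\eps\log\tfrac1\delta \lesssim n \lesssim \sqrt d\cdot\tfrac1\eps\log\tfrac1\delta$. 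Plugging this $k$ into the accuracy bound and collecting the logarithmic factors --- $\log\tfrac d\delta$ from $\delta_0$, $\sqrt{\log\tfrac1\delta}$ from strong composition, $\sqrt{\log\tfrac{dn}\alpha}$ from FJL --- gives the stated error $O\!\bigl(r\sqrt d\cdot\tfrac1\eps\cdot\log\tfrac d\delta\,\sqrt{\log\tfrac1\delta}\,\sqrt{\log\tfrac{dn}\alpha}/n\bigr)$. For the runtime, the FJL step costs $O(nd\log d)$ and each of the $k^2$ calls of Theorem~\ref{thm:main} on $d/k^2$ coordinates costs $O\!\bigl(n^{1+o(1)}(d/k^2)\alpha^{-1}\bigr)$, summing to $O(n^{1+o(1)}d\,\alpha^{-1}+nd\log d)$; correctness follows by union-bounding $E$ against the $k^2$ estimation calls (shrinking each call's failure probability to $O(\alpha/k^2)$, which only perturbs lower-order logarithmic terms, or alternatively invoking Corollary~\ref{cor:main} at the cost of an extra $\log\tfrac1\alpha$ factor in $n$).

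I expect the main obstacle to be the parameter/bookkeeping step: reconciling the precise constants of the advanced composition theorem with the ``$n\ge C\tfrac1{\eps_0}\log\tfrac1{\delta_0}$'' threshold of Theorem~\ref{thm:main}, ensuring that an integer $k\ge 1$ with an integer block size $d/k^2$ can always be chosen across the whole range of $n$, handling the per-block failure probabilities carefully in the union bound, and tracking every logarithmic factor so that the final error matches the stated expression and not merely its $\tilde{O}$-approximation. The geometric and probabilistic ingredients --- FJL concentration, isometry of $\Phi$, and the corruption-tolerance of Theorem~\ref{thm:main} --- are essentially used as black boxes.
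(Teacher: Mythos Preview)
Your proposal is correct and follows essentially the same approach as the paper: apply the Fast Johnson--Lindenstrauss rotation, partition the $d$ coordinates into $k^2$ blocks of size $d/k^2$, invoke Theorem~\ref{thm:main} on each block with privacy parameters $\eps_0=\Theta\bigl(\eps/(k\sqrt{\log(1/\delta)})\bigr)$ and $\delta_0=\Theta(\delta/k^2)$, account for privacy via strong composition, and choose $k$ so that $n=\Theta\bigl(\tfrac{k}{\eps}\log\tfrac{k}{\delta}\sqrt{\log\tfrac1\delta}\bigr)$. The paper additionally handles the bottom of the range $C\tfrac1\eps\log\tfrac1\delta \le n \le C\tfrac1\eps(\log\tfrac1\delta)^{3/2}$ by setting $k=1$ and invoking Theorem~\ref{thm:main} directly, which is exactly the integer-$k$ boundary concern you flag in your final paragraph.
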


\begin{remark}
    If one wishes for a runtime that is almost linear in $n$ and has logarithmic dependence on the failure probability $\alpha$, as in Corollary \ref{cor:main}, this can be done by dividing the privacy parameters by $\log \frac{1}{\alpha}$. So, if the number of samples $n$ is at least $O\left(\frac{1}{\eps} \log \frac{1}{\delta} \log \frac{1}{\alpha}\right),$ the error is still $\tilde{O}\left(\frac{r\sqrt{d}}{n} \cdot \frac{1}{\eps} \left(\log \frac{1}{\delta}\right)^{3/2}\right)$ with probability $1-\alpha$, where the $\tilde{O}$ hides logarithmic factors in $d$, $\alpha^{-1}$, and $n$, and the runtime is improved to $O(n^{1+o(1)} d \cdot \log \frac{1}{\alpha} + nd \log d)$.
\end{remark}

Theorem \ref{thm:interpolation} will be crucial in establishing tight bounds in the user-level case, when the number of users is between $O(\frac{1}{\eps} \log \frac{1}{\delta})$ and $\sqrt{d}$. We also note that, as in Subsection \ref{subsec:polynomial}, this also applies to (item-level) privately approximating minimum $k$-enclosing ball, providing a $\tilde{O}(\frac{\sqrt{d}}{\eps \cdot n})$ multiplicative approximation when the number of points $n$ is between $O(\frac{1}{\eps} \log \frac{1}{\delta})$ and $\sqrt{d}$, which turns out to be optimal.

\section{User-Level Private Algorithm} \label{sec:user}

%The user-level private algorithm is a simple extension of our item-level private algorithm. 
In the user-level setup, we suppose that there are $n$ users, each of which is given $m$ data samples from $\BR^d$. Let $X_{i, j} \in \BR^d$ represent the $j$th sample of the $i$th user. We suppose that each sample $X_{i, j}$ has marginal distribution $\mathcal{D}$, which has unknown mean $\mu \in \BR^d$, but with a promise that $\BE\left[\|X_{i, j}-\mu\|_2^2\right] \le r^2$ for some known $r > 0$.
In addition, we assume that the data is independent across users.
However, among the $m$ samples of any individual uncorrupted user, the data may not be entirely independent, but the samples do not have significant covariance between them. Namely, for any user $i$ and samples $j \neq j'$, $\BE\left[\langle X_{i, j}-\mu, X_{i, j'}-\mu \rangle\right] \le \frac{1}{m} \cdot r^2.$ We do not make any restrictions on how negative $\BE\left[\langle X_{i, j}-\mu, X_{i, j'}-\mu \rangle\right]$ could be.
Finally, we allow for up to $n/4$ of the users to have their data corrupted adversarially, which means that after the data $X = \{X_{i, j}\}$ has been generated, an adversary may look at $X$, and then choose $n/4$ of the users and alter all of their data in any way.
We remark that $n/4$ could be increased to $\kappa \cdot n$ for any constant $\kappa < 1/2,$ with minor modifications to the overall algorithm.

%This adversarial corruption model is commonly thought of as the strongest data corruption model \cite{}, so our results holding for this model implies that they hold for any other reasonable corruption model. In addition, 

Our results in the user-level setting are a simple extension of the item-level algorithms (either Theorem \ref{thm:main}, Corollary \ref{cor:main}, or Theorem \ref{thm:interpolation}).
Indeed, the algorithm simply takes the mean of each user's data, and then applies the item-level algorithm on the user means.
So, we will just directly state the main results.
For the setup as described in the above paragraph, we prove the following.

\begin{theorem} \label{thm:user_main}
    Suppose that $C \cdot \frac{1}{\eps} \log \frac{1}{\alpha \cdot \delta} \le n \le \sqrt{d} \cdot \frac{1}{\eps} \log \frac{1}{\delta}.$
    %and that $m \ge \gamma^{-2} \cdot d$, or that for some $1 \le k \le \sqrt{d},$ $n \ge C \cdot \frac{k}{\eps} \sqrt{\log \frac{1}{\delta}} \log (k/\delta)$ and $m \ge \gamma^{-2} \cdot d \log (dn/\alpha)/k^2$. 
    Then, there is an algorithm on a dataset $\{X_{i, j}\}$, where $1 \le i \le n,$ $1 \le j \le m$, and each $X_{i, j} \in \BR^d$, with the following properties:
\begin{enumerate}
    \item If the data is generated via the procedure described above, then the algorithm reports a point $\hat{\mu}$ such that with probability at least $1-\alpha$, 
\begin{equation*}
    \|\hat{\mu}-\mu\|_2 \le \min\left(O\left(r \cdot \sqrt{\frac{d}{m}}\right), O\Biggr(r \cdot \frac{\sqrt{d}\cdot \log \frac{d}{\delta} \sqrt{\log \frac{1}{\delta}} \cdot \sqrt{\log \frac{dn}{\alpha}}}{\eps \cdot n \cdot \sqrt{m}}\Biggr)\right).
\end{equation*}
    %where the $\tilde{O}$ may hide logarithmic factors in $\frac{1}{\alpha}, \frac{1}{\eps}, \frac{1}{\delta}$, and $d$.
    \item The algorithm is $(\eps, \delta)$ user-level differentially private, where the $i$th user is given $\{X_{i, j}\}_{j = 1}^{m}$, even if the datasets are not generated from the procedure above.
    \item The algorithm runs in expected time $O(mnd + nd \log d + n^{1+o(1)} d \alpha^{-1})$ over worst-case datasets.
\end{enumerate}
    In addition, the algorithm can be made to run in time $O(mnd + nd \log d + n^{1+o(1)} d \log \frac{1}{\alpha})$ and have failure probability $1-\alpha$, at the cost of an additional $\log \frac{1}{\alpha}$ factor in $n$.
\end{theorem}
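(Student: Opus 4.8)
The plan is to reduce to the item-level algorithms of Section~\ref{sec:item} by collapsing each user to a single point. The algorithm first forms the per-user sample means $\bar{X}_i := \frac{1}{m}\sum_{j=1}^{m} X_{i,j}$ for $i \in [n]$, and then runs an item-level algorithm on the $n$ points $\bar{X}_1,\dots,\bar{X}_n$ with radius parameter $\rho := C' r/\sqrt{m}$ for a suitable constant $C'$ (which is a known quantity, since $r$ and $m$ are known). Since $d,m,n,\eps,\delta,\alpha,r$ are all public, the algorithm can first compare the two candidate error bounds $O(r\sqrt{d/m})$ and $O\!\left(\frac{r\sqrt{d}\,\log\frac{d}{\delta}\sqrt{\log\frac{1}{\delta}}\sqrt{\log\frac{dn}{\alpha}}}{\eps n\sqrt{m}}\right)$, and then invoke whichever of Theorem~\ref{thm:main} (the smaller bound when $n$ is near $\frac{1}{\eps}\log\frac{1}{\delta}$) or Theorem~\ref{thm:interpolation} (the smaller bound over the rest of the range) produces it; this decision is data-independent, hence free of privacy cost. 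Under the stated hypotheses $C\frac{1}{\eps}\log\frac{1}{\alpha\delta} \le n \le \sqrt{d}\cdot\frac{1}{\eps}\log\frac{1}{\delta}$ we have $n \ge C\frac{1}{\eps}\log\frac{1}{\delta}$, so the preconditions of both item-level theorems are satisfied.

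Privacy (item~2) follows immediately from the reduction: if two user-level datasets differ only in the data of one user $i$, then the corresponding tuples $(\bar{X}_1,\dots,\bar{X}_n)$ and $(\bar{X}'_1,\dots,\bar{X}'_n)$ differ only in coordinate $i$, hence are neighboring item-level datasets. Since Theorems~\ref{thm:main} and~\ref{thm:interpolation} are $(\eps,\delta)$-differentially private over \emph{arbitrary} item-level datasets, the composed procedure is $(\eps,\delta)$ user-level differentially private over arbitrary datasets, whether or not the data come from the generative model. The runtime (item~3) is also immediate: forming all means costs $O(mnd)$, the Fast Johnson--Lindenstrauss rotation inside Theorem~\ref{thm:interpolation} costs $O(nd\log d)$, and the rejection-sampling core costs $O(n^{1+o(1)} d\,\alpha^{-1})$ in expectation, for the claimed total.

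The substance is the accuracy claim (item~1). First I would bound the second moment of a single uncorrupted user's mean: expanding $\|\bar{X}_i-\mu\|_2^2 = \frac{1}{m^2}\bigl\|\sum_{j=1}^m (X_{i,j}-\mu)\bigr\|_2^2$ and taking expectations, the $m$ diagonal terms contribute at most $m r^2$ and the $m(m-1)$ cross terms contribute at most $m(m-1)\cdot\frac{r^2}{m}$, so $\BE\|\bar{X}_i-\mu\|_2^2 \le \frac{2r^2}{m}$ — note that only the \emph{upper} bound on the covariance is used, so negative correlations do no harm. By Markov's inequality, for $C'$ a large enough constant each uncorrupted user satisfies $\|\bar{X}_i-\mu\|_2 \le \rho$ with probability at least $0.999$. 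Since users are independent and at most $n/4$ are corrupted, the number of uncorrupted users whose mean leaves $B(\mu,\rho)$ is dominated by $\mathrm{Bin}(n,0.001)$; a Chernoff bound, using $n \ge C\frac{1}{\eps}\log\frac{1}{\alpha\delta} = \Omega(\log\frac{1}{\alpha})$, shows that with probability at least $1-\alpha$ at most $\frac{n}{12}$ uncorrupted users leave $B(\mu,\rho)$, so at least $\frac{3n}{4}-\frac{n}{12} = \frac{2n}{3}$ of all $n$ points $\bar{X}_i$ lie within $\rho$ of $\mu$. On that event, Theorem~\ref{thm:main} (resp.\ Theorem~\ref{thm:interpolation}) run with radius $\rho$ returns $\hat\mu$ with $\|\hat\mu-\mu\|_2 = O(\rho\sqrt{d}) = O(r\sqrt{d/m})$ (resp.\ the stated polylogarithmic expression, obtained by substituting $\rho = O(r/\sqrt{m})$ into the bound of Theorem~\ref{thm:interpolation}), except with the $O(\alpha)$ failure probability internal to that theorem; a union bound, a constant rescaling of $\alpha$, and taking the better of the two invocations yields the claimed $\min$.

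I expect the robust-concentration step to be the main obstacle — not because it is deep, but because the margins are thin: one must commit to constants so that (fraction of uncorrupted users that concentrate) $\times\,\frac{3}{4}$ strictly exceeds $\frac{2}{3}$ while relying only on a second-moment bound (hence Markov, not sub-Gaussian tails) and absorbing up to $n/4$ adversarial corruptions. This is exactly where the $49\%$ threshold enters, since replacing $n/4$ by $\kappa n$ requires $(1-\kappa)$ times a near-$1$ fraction to still exceed $\frac{2}{3}$, i.e.\ $\kappa < \frac{1}{2}$. Finally, the ``In addition'' variant repeats the entire argument but invokes Corollary~\ref{cor:main} (and the analogous high-probability remark following Theorem~\ref{thm:interpolation}) in place of Theorem~\ref{thm:main}/\ref{thm:interpolation}, trading the $\alpha^{-1}$ runtime factor for $\log\frac{1}{\alpha}$ at the cost of an extra multiplicative $\log\frac{1}{\alpha}$ in the requirement on $n$.
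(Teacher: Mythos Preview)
Your proposal is correct and follows essentially the same route as the paper: compute per-user means, use the second-moment bound $\BE\|\bar X_i-\mu\|_2^2\le 2r^2/m$ together with Markov and Chernoff to show that at least $2n/3$ of the $\bar X_i$'s lie in $B(\mu,O(r/\sqrt m))$ even after $n/4$ adversarial corruptions, and then invoke Theorem~\ref{thm:main}/\ref{thm:interpolation} (resp.\ Corollary~\ref{cor:main}) with radius $O(r/\sqrt m)$. One small wording slip: your Chernoff bound should be applied to all $n$ users' original (pre-corruption) data rather than to ``the uncorrupted users,'' since the adversary's choice of whom to corrupt is data-dependent---but your arithmetic $3n/4-n/12=2n/3$ is exactly what one gets either way.
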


More simply, by ignoring covariance between users, adversarial robustness, and all logarithmic factors, and by combining with the previous results of~\cite{liu2020discrete,levy2021user} we 
%obtain that as long as $n \ge \frac{C}{\eps} \cdot \log \frac{1}{\delta}$,
%$O(mnd + n^{1+o(1)} d \log \frac{1}{\alpha})$
can state a simpler corollary of the main theorem. Due to a standard reduction, we additionally answer the question of learning a discrete distribution up to low total variation distance in the following corollary.

\begin{corollary} \label{cor:user_main}
    Let $n \ge C \cdot \frac{1}{\eps} \cdot \log \frac{1}{\alpha \cdot \delta}.$
    Suppose we have $n$ users, each with $m$ data samples $X = \{X_{i, j}\}$. Then, there is an algorithm taking as input $X$ that is $(\eps, \delta)$ user-level private in the worst case, and if each $X_{i, j}$ were drawn i.i.d. from an unknown distribution $\mathcal{D}$ entirely supported in an (unknown) ball of radius $r$, the algorithm learns the mean $\mu$ of $\mathcal{D}$ up to $\ell_2$ error
$$r \cdot \tilde{O}\left(\frac{\sqrt{d}}{\eps \cdot n \cdot \sqrt{m}} + \frac{1}{\sqrt{m \cdot n}}\right),$$
    with failure probability $\alpha$. Here, we use $\tilde{O}$ to drop all logarithmic dependencies on $\alpha^{-1}, \delta^{-1}, \eps^{-1}, m, n,$ and $d$.
    %Suppose that $n \ge C \cdot \frac{1}{\eps} \log \frac{1}{\delta}$ and that $n \sqrt{m} \ge \tilde{O}(\frac{1}{\eps} \gamma^{-1} \sqrt{d}),$ where the $\tilde{O}$ hides logarithmic factors in $\frac{1}{\eps}, \frac{1}{\delta}, \alpha^{-1}, \gamma^{-1}$, and $d$. Then, there is an $(\eps, \delta)$ user-level differentially private algorithm on $\{X_{i, j}\}$ that runs in time $O(mnd + n^{1+o(1)} d \log \frac{1}{\alpha})$, even over worst-case datasets. In addition, if each $X_{i, j}$ is drawn i.i.d. from some distribution $\mathcal{D}$ which has unknown mean $\mu$ and covariance matrix $\Sigma$ with $\|\Sigma\|_2 \le r$, then with probability $1-\alpha$, algorithm reports a point $\hat{\mu}$ with $\|\hat{\mu}-\mu\|_2 \le \gamma \cdot r$, the algorithm reports a point $\hat{\mu}$ with $\|\hat{\mu}-\mu\|_2 \le \gamma \cdot r$.
    
    Finally, if instead each data sample $X_{i, j}$ were drawn from a discrete set $[d]$, there is an $(\eps, \delta)$ user-level private algorithm such that, if each sample were drawn i.i.d. from a distribution $\mathcal{D}$ over $[d]$, we could, with failure probability $\alpha$, learn $\mathcal{D}$ up to total variation distance
$$\tilde{O}\left(\frac{d}{\eps \cdot n \cdot \sqrt{m}} + \sqrt{\frac{d}{m \cdot n}}\right).$$
\end{corollary}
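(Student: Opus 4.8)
The first statement is essentially a repackaging of Theorem~\ref{thm:user_main} together with the prior work of Levy et al.~\cite{levy2021user}. I would first check that i.i.d.\ sampling from a distribution $\mathcal D$ supported in a ball of radius $r$ meets the hypotheses of Theorem~\ref{thm:user_main}: writing $c$ for the center of the ball and $\mu$ for the mean, one has $\BE[\|X_{i,j}-\mu\|_2^2]\le\BE[\|X_{i,j}-c\|_2^2]\le r^2$ since $\mu$ minimizes the expected squared distance and $\|X_{i,j}-c\|_2\le r$ almost surely, and since the $m$ samples of any user are independent, $\BE[\langle X_{i,j}-\mu, X_{i,j'}-\mu\rangle]=0\le r^2/m$ for $j\ne j'$. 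Thus for $C\cdot\frac1\eps\log\frac{1}{\alpha\delta}\le n\le\sqrt d\cdot\frac1\eps\log\frac1\delta$, Theorem~\ref{thm:user_main} directly yields the error bound, and one checks that throughout this range of $n$ the term $\tfrac{\sqrt d}{\eps n\sqrt m}$ dominates $\tfrac1{\sqrt{mn}}$ (their ratio is $\sqrt d/(\eps\sqrt n)$, which is at least $1$ whenever $n\le d/\eps^2$, and this holds here), so the stated bound $r\cdot\tilde O\big(\tfrac{\sqrt d}{\eps n\sqrt m}+\tfrac1{\sqrt{mn}}\big)$ is correct up to logarithmic factors. For $n>\sqrt d\cdot\frac1\eps\log\frac1\delta$, Theorem~\ref{thm:user_main} no longer applies, and here I would simply invoke the mechanism of Levy et al.~\cite{levy2021user}, whose error in that regime is also $r\cdot\tilde O\big(\tfrac{\sqrt d}{\eps n\sqrt m}+\tfrac1{\sqrt{mn}}\big)$; the final algorithm runs whichever of the two mechanisms is applicable for the given $n$, and privacy, runtime, and failure probability are inherited from the respective guarantees.

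\textbf{The discrete-distribution bound.} I would use the standard encoding of a distribution over $[d]$ as a vector in $\BR^d$: a sample $i\in[d]$ is mapped to the standard basis vector $e_i$, so a draw from $\mathcal D$ has mean exactly $p:=(\mathcal D(1),\dots,\mathcal D(d))$ and satisfies $\BE[\|e_I-p\|_2^2]=1-\|p\|_2^2\le 1$, i.e.\ the hypotheses of the first part hold with $r=1$ (and the within-user covariance is $0$ as the samples are i.i.d.). Feeding the encoded samples into the algorithm of the first part yields $\hat\mu$ with $\|\hat\mu-p\|_2\le\tilde O\big(\tfrac{\sqrt d}{\eps n\sqrt m}+\tfrac1{\sqrt{mn}}\big)$. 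I would then project $\hat\mu$ onto the probability simplex $\Delta^{d-1}$; since this is a Euclidean projection onto a convex set containing $p$, it only decreases the error, and being post-processing it preserves $(\eps,\delta)$-privacy. Finally, converting to total variation via Cauchy--Schwarz, $d_{\mathrm{TV}}(\Pi_{\Delta}(\hat\mu),p)=\tfrac12\|\Pi_{\Delta}(\hat\mu)-p\|_1\le\tfrac12\sqrt d\,\|\hat\mu-p\|_2\le\tilde O\big(\tfrac{d}{\eps n\sqrt m}+\sqrt{\tfrac{d}{mn}}\big)$, which is the claimed bound; for $n$ above $\sqrt d\cdot\frac1\eps\log\frac1\delta$ one instead invokes Liu et al.~\cite{liu2020discrete}.

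\textbf{Main obstacle.} The reduction itself is routine and no single step is technically hard; the points requiring care are (i) verifying that both the ``i.i.d.\ from a ball'' and the discrete-distribution settings genuinely satisfy the second-moment and covariance hypotheses of Theorem~\ref{thm:user_main}, as sketched above; (ii) stitching together the small/medium-$n$ regime handled by Theorem~\ref{thm:user_main} with the large-$n$ regime handled by \cite{levy2021user,liu2020discrete} and checking that the two error expressions coincide up to the logarithmic factors hidden in $\tilde O$; and (iii) confirming that the $\ell_2\to\ell_1$ blowup of $\sqrt d$ is exactly what turns the mean-estimation rate into the stated total-variation rate, and that it cannot be improved in the worst case. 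The runtime, the $(\eps,\delta)$ guarantee, and the optional $\log\frac1\alpha$ trade-off are all inherited verbatim from Theorem~\ref{thm:user_main} (respectively from \cite{levy2021user,liu2020discrete} in the large-$n$ regime).
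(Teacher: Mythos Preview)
Your proposal is correct and follows essentially the same approach as the paper: verify the moment hypotheses of Theorem~\ref{thm:user_main} for i.i.d.\ data from a bounded-support distribution, invoke Theorem~\ref{thm:user_main} in the low-$n$ regime and Levy et al.\ in the high-$n$ regime, then for the discrete case encode samples as one-hot vectors and convert the $\ell_2$ error to total variation via the $\sqrt{d}$ factor. Your simplex projection step is a harmless (and sensible) addition not present in the paper, and the paper handles the ``stitching'' you flag in (ii) by noting there is a potential polylogarithmic gap between the two thresholds on $n$ and simply discarding excess users there, which costs only logarithmic factors absorbed by the $\tilde O$.
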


Corollary \ref{cor:user_main} provides an optimal (due to the lower bounds in Section \ref{sec:lb}) user-level private algorithm for mean estimation and discrete distribution learning, regardless of the number of users $n$ or number of samples $m$ per user.
Note that Corollary \ref{cor:user_main} holds both in the low- and high-user setting.

We prove both Theorem \ref{thm:user_main} and Corollary \ref{cor:user_main}, and also provide pseudocode,
in Appendix \ref{subsec:user_proofs}.

\section{Lower Bound} \label{sec:lb}

In this section, we provide lower bounds complementing our results from the previous sections. The first lower bound is simple and states that any private learning of the distribution requires at least $\Omega(\frac{1}{\eps} \log \frac{1}{\delta})$ users under reasonable conditions of $\eps, \delta$.

\begin{lemma} \label{lem:lb_easy}
    Suppose we have $n$ users, each with $m$ data points all contained in an unknown ball of radius $1$. Then, learning this ball up to any finite error $T$, with probability $2/3$ and with $(\eps, \delta)$ user-level privacy, cannot be done if $n \le \frac{1}{3} \cdot \frac{1}{\eps} \log \frac{1}{\delta}$, assuming that $\delta \le \eps^2$.
\end{lemma}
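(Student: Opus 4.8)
The plan is to reduce to a one-dimensional problem and use a standard "packing/group privacy" argument. Fix $d=1$; it clearly suffices to prove the lower bound here, since a higher-dimensional problem can only be harder. Consider a family of point masses: for integers $k$, let $\mathcal{D}_k$ be the distribution putting all mass at the point $3Tk$ (so that the balls of radius $1$ around different centers are separated by more than $2T$, making the estimation targets "incompatible" — a single output point cannot be within $T$ of two distinct centers $3Tk, 3Tk'$). Let $X^{(k)}$ denote the dataset in which every user's $m$ samples all equal $3Tk$. A successful algorithm must, on input $X^{(k)}$, output a point within $T$ of $3Tk$ with probability $\ge 2/3$; call this event $S_k$, and note the $S_k$ are disjoint subsets of the output space.

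The key step is group privacy. The datasets $X^{(0)}$ and $X^{(k)}$ differ in all $n$ users, so by iterating the $(\eps,\delta)$-DP guarantee $n$ times (the standard group-privacy lemma), for any measurable set $S$,
\[
\BP[M(X^{(0)}) \in S] \ge e^{-n\eps}\bigl(\BP[M(X^{(k)}) \in S] - n\delta \cdot e^{(n-1)\eps}\bigr) \ge e^{-n\eps}\,\BP[M(X^{(k)}) \in S] - n\delta\,e^{-\eps}.
\]
Applying this with $S = S_k$ and using $\BP[M(X^{(k)}) \in S_k] \ge 2/3$ gives $\BP[M(X^{(0)}) \in S_k] \ge \tfrac{2}{3}e^{-n\eps} - n\delta$. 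Now sum over $k = 1, \dots, K$ for a large integer $K$: since the $S_k$ are disjoint, the left side is at most $1$, so $1 \ge K\bigl(\tfrac{2}{3}e^{-n\eps} - n\delta\bigr)$. The plan is to show the quantity in parentheses is bounded below by a positive constant, which then forces $K$ to be bounded — contradicting the fact that $K$ may be taken arbitrarily large.

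It remains to verify $\tfrac{2}{3}e^{-n\eps} - n\delta \ge c > 0$ under the hypotheses $n \le \tfrac{1}{3\eps}\log\tfrac{1}{\delta}$ and $\delta \le \eps^2$. The bound on $n$ gives $e^{-n\eps} \ge e^{-\frac{1}{3}\log\frac{1}{\delta}} = \delta^{1/3}$, so $\tfrac{2}{3}e^{-n\eps} \ge \tfrac{2}{3}\delta^{1/3}$. For the subtracted term, $n\delta \le \tfrac{1}{3\eps}\log\tfrac{1}{\delta}\cdot\delta$; using $\delta \le \eps^2$ we have $\tfrac{1}{\eps} \le \delta^{-1/2}$, so $n\delta \le \tfrac{1}{3}\delta^{1/2}\log\tfrac{1}{\delta}$, which is $o(\delta^{1/3})$ and in particular $\le \tfrac{1}{3}\delta^{1/3}$ once $\delta$ is below an absolute constant (and $\eps,\delta$ are assumed small, e.g.\ $\le 1/3$, so one may need to either absorb this into constants or note that for $\delta$ near $1/3$ the statement is vacuous or handled separately). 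Hence $\tfrac{2}{3}e^{-n\eps} - n\delta \ge \tfrac{1}{3}\delta^{1/3} > 0$, giving $K \le 3\delta^{-1/3}$ for all $K$, a contradiction. I expect the only mildly delicate point to be this last arithmetic — juggling the constants so that the $n\delta$ term is genuinely dominated — rather than anything structural; the group-privacy reduction itself is routine.
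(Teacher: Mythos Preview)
Your proposal is correct and follows essentially the same packing/group-privacy argument as the paper: both derive that on an arbitrary input the mechanism must place at least $\Theta(\delta^{1/3})$ mass on any target ball of radius $T$, then sum over disjoint balls for a contradiction. The only difference is cosmetic---the paper sums the geometric series in the group-privacy recursion exactly to get the correction term $\frac{\delta}{e^{\eps}-1}$ rather than your cruder $n\delta$, which makes the final arithmetic (your acknowledged ``mildly delicate point'') go through for a slightly larger range of $\delta$.
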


Next, we provide a lower bound for privately learning discrete distributions in the user-level setting, answering an open question of Liu et al.~\cite{liu2020discrete} by allowing $\delta$ to be just moderately small as opposed to $0$. Our overall method is similar to the lower bounds of Kamath et al~\cite{kamath2019highdimensional}, but with an additional ``Poissonization trick'' to overcome the issue of lack of independence between coordinates. In addition, we also must provide a reduction from learning discrete distributions to learning Poisson variables.

\begin{theorem} \label{thm:lb_main}
    Suppose we have $n$ users and $m$ samples per user, where each sample $X_{i, j} \in [d]$ represents the $j$th sample of user $i$.
    %is drawn from an unknown distribution over $[d]$ characterized by $\vec{p} = [p_1, \dots, p_d]$.
    Suppose there exists an $(\eps, \delta)$ user-level differentially private algorithm $M$ that takes as input $X = \{X_{i, j}\}$, where $i$ ranges from $1$ to $n$ and $j$ ranges from $1$ to $m$, and where $\delta = o\left(\frac{\eps}{d^{3/2} \cdot m \cdot n^{1/2} \cdot \log^2 (\eps^{-1} \cdot mnd)}\right)$. Then, there exists a distribution $\tilde{p}$ over $[d]$, such that if $X_{i, j} \overset{i.i.d.}{\sim} \tilde{p}$, then with probability at least $1/3$, the total variation distance between $M(X)$ and $\tilde{p}$ is at least $$\Omega\left(\sqrt{\frac{d}{mn}} + \frac{d}{\eps \cdot n\sqrt{m} \cdot \log^7 (\eps^{-1} \cdot mnd)}\right).$$
\end{theorem}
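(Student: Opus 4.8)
The plan is to prove the claimed bound inside a \emph{Poissonized} version of the problem --- one in which the coordinates of each user's empirical counts become independent --- establish a fingerprinting-style lower bound there, and then reduce discrete distribution learning back to this Poissonized problem. The statistical term $\sqrt{d/(mn)}$ is simply the non-private minimax rate for learning a distribution over $[d]$ from $mn$ i.i.d.\ samples, obtainable from a standard Le Cam/Assouad argument on a packing of near-uniform distributions and valid regardless of privacy; so I take the maximum of the two terms at the end and henceforth focus on the private term $\tfrac{d}{\eps\, n\sqrt m\,\log^7(\eps^{-1}mnd)}$. To set things up, draw a random target by letting $p_k=\tfrac1d(1+v_k)$ with $v_1,\dots,v_d$ i.i.d.\ from a symmetric bounded prior whose standard deviation $\sigma$ is chosen to match the target accuracy scale (so $P:=\sum_k p_k=1+O(1/\sqrt d)$ with high probability), and give each user $i$ a fresh batch of $\mathrm{Pois}(m)$ i.i.d.\ samples from $p$. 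By Poisson thinning, the count vector $Y_i=(Y_{i,1},\dots,Y_{i,d})$ of user $i$ has independent entries $Y_{i,k}\sim\mathrm{Pois}(m p_k)$, and the $Y_i$ are i.i.d.\ across $i$; hence learning $p$ from $\{Y_i\}$ is a product over the $d$ coordinates, coupled only through the user-level privacy constraint.

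\emph{The Poissonized lower bound.} Following Kamath et al.~\cite{kamath2019highdimensional}, for each user $i$ and coordinate $k$ define a correlation statistic $Z_{i,k}=(\widehat p_k-\bar p_k)\cdot g_k(Y_{i,k})$, where $\widehat p=M(Y)$, $\bar p_k=\BE[p_k]$, and $g_k$ is a \emph{truncated} version of the Poisson score $\propto Y_{i,k}/(m p_k)-1$; truncation to $Y_{i,k}\lesssim m p_k+\log(\eps^{-1}mnd)$ is forced because the Poisson score is unbounded, and this is where the polylogarithmic loss will come from. Put $S=\sum_i\sum_k Z_{i,k}$. If $M$ is accurate, i.e.\ $\BE\,\|\widehat p-p\|$ is small in the relevant norm --- and since $d_{\mathrm{TV}}=\tfrac12\|\cdot\|_1\ge\tfrac12\|\cdot\|_2$, an upper bound on the theorem's TV error implies such accuracy --- one shows $\BE[S]$ is large, essentially $\sum_k\mathrm{Var}(p_k)\cdot(\text{per-coordinate Fisher information})$ minus terms controlled by the accuracy, which is $\gtrsim\sigma^2 d m$ minus slack. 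On the other hand, for each fixed $i$ we may replace $Y_i$ by an independent fresh draw $Y_i'$ (legal since the $Y_i$ are i.i.d.): $(\eps,\delta)$-DP leaves the law of $\widehat p$ almost unchanged, while $\BE[(\widehat p_k-\bar p_k)\,g_k(Y_i')]=0$ because $Y_i'$ is independent of everything else, so $|\BE\sum_k Z_{i,k}|\le O(\eps)\cdot(\text{variance term})+\delta\cdot(\text{range of }\textstyle\sum_k Z_{i,k})$ per user; summing over $i$ gives $\BE[S]\le O(\eps n)\cdot(\text{variance term})+n\delta\cdot(\text{range})$. Comparing the two bounds, and using the hypothesis on $\delta$ precisely to make the $n\delta\cdot(\text{range of the truncated score sum})$ term negligible, forces $\BE\,\|\widehat p-p\|\gtrsim\tfrac{d}{\eps\, n\sqrt m\,\mathrm{polylog}}$; the passage from this in-expectation bound to the existential ``with probability $\ge 1/3$'' statement is routine, e.g.\ via an averaging/reduction-to-testing argument using that the error is bounded.

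\emph{Reducing discrete learning to the Poissonized problem.} Suppose $M$ were an $(\eps,\delta)$ user-level DP algorithm learning a discrete distribution over $[d]$ to TV error beating the stated bound. We build a Poisson learner $M'$. Given $Y$, set $C_i=\sum_k Y_{i,k}$; conditioned on $C_i$, the vector $(Y_{i,k})_k$ is $\mathrm{Multinomial}(C_i;\tilde p)$ with $\tilde p_k:=p_k/P$, so the multiset containing $Y_{i,k}$ copies of $k$ is exactly $C_i$ i.i.d.\ draws from $\tilde p$. Since $P\approx 1$ we have $C_i\ge m$ for all users with high probability; on that event we subsample exactly $m$ draws per user (on the negligible-probability complement, output an arbitrary value), feed the result to $M$, and obtain $\widehat{\tilde p}$. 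Separately we estimate $P$: the $C_i$ are i.i.d.\ $\mathrm{Pois}(mP)$, concentrated in a window of width $O(\sqrt{m\log n})$ around $mP$, so a clipped private one-dimensional mean estimator returns $\widehat P$ whose error after dividing by $m$ is $o\!\left(\tfrac{1}{\sqrt{mn}}+\tfrac{1}{\eps n\sqrt m}\right)$, negligible against the target in the relevant regime. Output $\widehat p_k:=\widehat P\cdot\widehat{\tilde p}_k$; then $\|\widehat p-p\|\lesssim 2\,d_{\mathrm{TV}}(\widehat{\tilde p},\tilde p)+|\widehat P-P|$, so $M'$ would beat the Poissonized lower bound, a contradiction. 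The reduction is per-user pre/post-processing of $M$ together with one extra private one-dimensional estimation, hence $(O(\eps),O(\delta))$-DP by basic composition; rescaling the constants multiplying $\eps,\delta$ throughout then yields the theorem as stated, with the guaranteed $\tilde p$ being the normalization of the random $p$ drawn above.

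\emph{Main obstacle.} The crux is the fingerprinting step in the presence of $\delta>0$ and an unbounded, discrete (Poisson) likelihood: one must truncate the score functions, carefully control the bias the truncation introduces on the ``accuracy'' side of the inequality, and track how the truncation threshold propagates into the final polylogarithmic loss (the $\log^7$ factor) --- with the precise smallness hypothesis on $\delta$ being exactly what makes the $n\delta\cdot(\text{range})$ term harmless. A secondary, more routine difficulty is the mismatch in the reduction between the $\mathrm{Pois}(mP)$ number of samples the simulation produces per user and the exactly-$m$ samples the discrete-learning algorithm expects, together with ensuring the private estimate of $P=\sum_k p_k$ is accurate enough not to consume the error budget.
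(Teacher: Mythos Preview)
Your plan is essentially the paper's own: Poissonize so the coordinates decouple, run a Kamath-et-al.\ fingerprinting argument on the Poisson model to get an in-expectation lower bound, then reduce discrete learning back to the Poisson model by simulating multinomial samples from the Poisson counts and privately estimating the normalizer $P=\sum_k p_k$; the $\sqrt{d/(mn)}$ term is added from the non-private minimax rate, and amplification turns the expectation bound into the ``with probability $\ge 1/3$'' statement.

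Two points deserve correction. First, the sentence ``since $P\approx 1$ we have $C_i\ge m$ for all users with high probability'' is false as written: $C_i\sim\mathrm{Pois}(mP)$ with $P\approx 1$ satisfies $C_i\ge m$ only with probability about $1/2$. You flag this at the end as the ``secondary difficulty,'' and the paper's fix is exactly what you need to do: set up the Poisson problem with parameter $m'=m\cdot A$ for a polylog factor $A$, prove the fingerprinting bound with $m'$ and $n'=nA$, and then subsample down to $m$ samples per user (now $\mathrm{Pois}(m'P)\ge m$ w.h.p.). This inflation is also where part of the polylog loss in the final bound comes from. Second, the paper does \emph{not} truncate the Poisson score; it keeps $Z_{i,k}\propto(M_k(X)-p_k)(X_{i,k}-mp_k)$ untruncated and instead controls the unbounded tail through the term $2\int_T^\infty\BP[Z_i>t]\,dt$ in the standard $(\eps,\delta)$ comparison inequality, using Poisson tail bounds. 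Your truncation route would also work but introduces a bias on the accuracy side that you would need to bound; the paper's route avoids that bookkeeping.
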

    Note that Theorem \ref{thm:lb_main} is almost tight compared to Corollary \ref{cor:user_main}, up to logarithmic factors, and holds in both the low- and high-user setting. We did not attempt to optimize the logarithmic factors. Theorem \ref{thm:lb_main} also implies that the approximations in our item-level algorithms in Section \ref{sec:item} are also tight, as otherwise we could have used them to obtain better user-level private algorithms.

    We prove Lemma \ref{lem:lb_easy} and Theorem \ref{thm:lb_main} in Appendix \ref{sec:lb_proofs}.
    
    %Finally, we note that Corollary \ref{cor:user_main} does not hold if we also require adversarial robustness, if the number of users $n$ is much more than $\sqrt{d}$. Specifically, we show that if even a small constant fraction of users are corrupted, one cannot estimate the mean to better than $\frac{r}{\sqrt{m}}$ with high probability, even if all points are in a \emph{known} ball of radius $r$. We defer the formal statement and proof to Appendix \ref{sec:lb_proofs} as well.

\section*{Acknowledgments}

We thank Piotr Indyk and Ananda Theertha Suresh for helpful discussions. We also thank Srivats Narayanan for pointing us to the reference \cite{corsello2020impact}.

\newpage

\newpage
\appendix
\section{Composition Theorems} \label{sec:composition}

In this section, we state the well-known weak and strong composition theorems of differential privacy for completeness.

First, we state the Weak Composition Theorem, which can be found, for instance, as Theorem 3.16 in Dwork and Roth~\cite{dwork2014algorithmic}.

\begin{theorem}{(Weak Composition Theorem)}
    Let $M_1, \dots, M_k$ be algorithms on a dataset $X$, such that each $M_i$ is $(\eps_i, \delta_i)$-differentially private for some $\eps_i, \delta_i \ge 0$. Then, the concatenation $M(X) = (M_1(X), \dots, M_k(X))$ is $(\eps, \delta)$-differentially private, where $\eps = \eps_1 + \cdots+\eps_k$ and $\delta=\delta_1+\cdots+\delta_k$.
\end{theorem}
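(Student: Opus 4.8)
The plan is to prove the $k$-fold statement directly, the key idea being to replace the additive ``$\delta_i$'' slack of each $M_i$ by a total-variation discrepancy: total variation composes additively over independent coordinates \emph{without} accumulating any $e^{\eps}$ factor, whereas the multiplicative $e^{\eps_i}$ factors multiply cleanly. Fix an arbitrary ordered pair of neighboring datasets $X \sim X'$; since neighboring is a symmetric relation and Definition~\ref{def} quantifies over all such pairs, it suffices to prove $\BP[M(X) \in S] \le e^{\eps}\,\BP[M(X') \in S] + \delta$ for every event $S$, where $\eps = \sum_{i} \eps_i$ and $\delta = \sum_{i}\delta_i$. For clarity I will write the proof for discrete output spaces; the general case is identical after replacing point masses by Radon--Nikodym densities.

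The heart of the argument is a structural reformulation of each $(\eps_i, \delta_i)$-privacy guarantee: for every $i$ I would produce a genuine probability distribution $P_i'$ on the output space of $M_i$ with (a) $P_i'(y) \le e^{\eps_i}\,\BP[M_i(X') = y]$ for all $y$, and (b) $d_{\mathrm{TV}}\bigl(M_i(X), P_i'\bigr) \le \delta_i$. To build it, set $[t]_+ := \max(t, 0)$ and let $\tilde P_i(y) := \min\bigl(\BP[M_i(X)=y],\, e^{\eps_i}\BP[M_i(X')=y]\bigr)$, a sub-distribution whose ``missing mass'' is $1 - \sum_y \tilde P_i(y) = \sum_y \bigl[\BP[M_i(X)=y] - e^{\eps_i}\BP[M_i(X')=y]\bigr]_+$. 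Applying the defining inequality of $(\eps_i,\delta_i)$-privacy to the event $\{y : \BP[M_i(X)=y] > e^{\eps_i}\BP[M_i(X')=y]\}$ shows this missing mass is at most $\delta_i$. On the other hand, the total available ``slack'' is $\sum_y \bigl(e^{\eps_i}\BP[M_i(X')=y] - \tilde P_i(y)\bigr) = e^{\eps_i} - \sum_y \tilde P_i(y) \ge e^{\eps_i} - 1$, which is at least the missing mass; hence the missing mass can be redistributed onto the coordinates where $\tilde P_i(y) < e^{\eps_i}\BP[M_i(X')=y]$ to obtain an honest probability distribution $P_i'$ still satisfying (a). Property (b) then holds because $P_i' \le \BP[M_i(X)=\cdot]$ precisely on the set where we capped, so $d_{\mathrm{TV}}(M_i(X), P_i') = \sum_y [\BP[M_i(X)=y] - P_i'(y)]_+$ equals exactly the missing mass, which is $\le \delta_i$.

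With the $P_i'$ in hand I would conclude by taking products. Because the $M_i$ use independent internal randomness, $M(X) = \bigotimes_i M_i(X)$, and subadditivity of total-variation distance over product measures gives $d_{\mathrm{TV}}\bigl(\bigotimes_i M_i(X),\, \bigotimes_i P_i'\bigr) \le \sum_i \delta_i = \delta$. Simultaneously the pointwise bounds (a) multiply: $\bigl(\bigotimes_i P_i'\bigr)(y_1,\dots,y_k) = \prod_i P_i'(y_i) \le \prod_i e^{\eps_i}\BP[M_i(X')=y_i] = e^{\eps}\,\bigl(\bigotimes_i M_i(X')\bigr)(y_1,\dots,y_k)$. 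Writing $Q := \bigotimes_i P_i'$ and using $\BP[M(X)\in S] - Q(S) \le d_{\mathrm{TV}}(M(X), Q)$, we get, for every event $S$, $\BP[M(X)\in S] \le Q(S) + \delta \le e^{\eps}\,\BP[M(X')\in S] + \delta$, which is the desired conclusion.

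The one genuinely substantive step is the structural reformulation in the second paragraph: verifying that the deficient mass always fits into the available slack while preserving the pointwise domination (a), and identifying the resulting total-variation distance with the missing mass. Everything downstream --- subadditivity of $d_{\mathrm{TV}}$ over products and multiplicativity of the $e^{\eps_i}$ --- is immediate, and the reduction of the non-discrete case to densities is routine. For completeness one may note that the same proof yields the adaptive composition theorem, in which $M_{i+1}$ is allowed to depend on the outputs of $M_1, \dots, M_i$: one simply applies the construction of $P_i'$ conditionally on those earlier outputs before multiplying; but the statement above concerns the plain concatenation, for which the product decomposition suffices.
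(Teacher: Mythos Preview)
Your proof is correct. The paper itself does not prove this statement: it simply records the Weak Composition Theorem as a known fact with a reference (``Theorem~3.16 in Dwork and Roth~\cite{dwork2014algorithmic}''), so there is no in-paper argument to compare against.

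Your approach is the standard ``coupling/structural'' proof: for each mechanism you absorb the additive $\delta_i$ slack into a total-variation perturbation of $M_i(X)$, leaving a distribution $P_i'$ that is purely $e^{\eps_i}$-dominated by $M_i(X')$; then products of the $P_i'$ give pure $e^{\eps}$-domination, while TV subadditivity over independent coordinates adds the $\delta_i$'s. One small point of exposition: when you write that the slack $e^{\eps_i} - \sum_y \tilde P_i(y) \ge e^{\eps_i} - 1$ ``is at least the missing mass,'' the cleaner justification is simply that the slack equals the missing mass plus $(e^{\eps_i}-1) \ge 0$, since both quantities are $(\text{something}) - \sum_y \tilde P_i(y)$ with the ``something'' being $e^{\eps_i}$ versus $1$. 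The remark at the end about the adaptive version (conditioning on prior outputs before constructing each $P_i'$) is accurate and a useful addendum.
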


Next, we state the Strong Composition Theorem, which can be found, for instance, as Theorem 3.20 in Dwork and Roth~\cite{dwork2014algorithmic}.

\begin{theorem}{(Strong Composition Theorem)}
    Let $\eps, \delta, \delta' \ge 0$ be arbitrary parameters. Suppose that $M_1, \dots, M_k$ be algorithms on a dataset $X$, such that each $M_i$ is $(\eps, \delta)$-differentially private. Then, the concatenation $M(X) = (M_1(X), \dots, M_k(X))$ is $\left(\sqrt{2k \log \frac{1}{\delta'}} \cdot \eps + k \eps \cdot (e^{\eps}-1), k \cdot \delta + \delta'\right)$-differentially private.
\end{theorem}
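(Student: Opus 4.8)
The plan is to follow the standard approach via R\'enyi-type divergences and a maximal-coupling argument, phrased in terms of the \emph{privacy loss random variable}. Fix a pair of neighboring datasets $X, X'$. For a single mechanism $M_i$ that is $(\eps, \delta)$-differentially private, the first step is to invoke the well-known structural characterization: $(\eps,\delta)$-DP implies that, except for an event of probability at most $\delta$ on each side, the privacy loss $L_i := \log \frac{\BP[M_i(X) = y]}{\BP[M_i(X') = y]}$ (evaluated at the random output $y \sim M_i(X)$) is bounded in absolute value by $\eps$, and moreover $\BE[L_i] \le \eps(e^\eps - 1)$ on the ``good'' part. Concretely, I would first reduce to the case $\delta = 0$ by a hybrid/coupling argument that removes the bad events one mechanism at a time, paying an extra $k\delta$ in the additive term; this is the routine part and I would only sketch it.

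The heart of the argument is then the pure-DP case. For $(\eps,0)$-DP mechanisms, the composed privacy loss is $L = \sum_{i=1}^k L_i$, a sum of independent random variables (independence holds because the $M_i$ act on the same input but with independent internal randomness), each supported in $[-\eps, \eps]$ with mean at most $\eps(e^\eps-1)$. I would apply a Hoeffding–Azuma / Chernoff bound to this sum: with probability at least $1-\delta'$ over $y \sim M(X)$, we have $L \le \BE[L] + \sqrt{2k\log\frac{1}{\delta'}}\cdot\eps \le k\eps(e^\eps-1) + \sqrt{2k\log\frac{1}{\delta'}}\cdot\eps$. The final step is the standard lemma converting a high-probability bound on the privacy loss into an $(\eps_{\text{tot}}, \delta_{\text{tot}})$ guarantee: if $\BP_{y\sim M(X)}[L(y) > \eps_{\text{tot}}] \le \delta'$, then $M$ is $(\eps_{\text{tot}}, \delta')$-DP, after symmetrizing the roles of $X$ and $X'$. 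Combining with the $k\delta$ from the first reduction gives the claimed $\left(\sqrt{2k\log\frac{1}{\delta'}}\cdot\eps + k\eps(e^\eps-1),\ k\delta + \delta'\right)$.

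The main obstacle is the mean bound $\BE[L_i] \le \eps(e^\eps - 1)$ rather than the naive $\BE[L_i] \le \eps$: proving this requires observing that $\BE_{y \sim M_i(X)}[L_i(y)] = D_{\mathrm{KL}}(M_i(X)\,\|\,M_i(X'))$ and bounding the KL divergence of two distributions whose density ratio lies in $[e^{-\eps}, e^{\eps}]$ by $\eps(e^\eps - 1)$ via a short convexity computation (e.g. using $\log t \le t - 1$ together with $e^{-\eps} \le$ ratio). Getting this centering term right is what separates the strong composition bound from a trivial union bound, and it is where the bulk of the genuine work lies; everything else is bookkeeping with the bad events and a Chernoff inequality. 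Since the theorem is a classical result (Dwork–Rothblum–Vadhan), in the paper I would simply cite Theorem 3.20 of \cite{dwork2014algorithmic} rather than reproduce this argument, but the sketch above is how one would prove it from scratch.
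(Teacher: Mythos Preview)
Your proposal is correct and in fact matches the paper exactly: the paper does not prove the Strong Composition Theorem at all but simply states it and cites Theorem~3.20 of \cite{dwork2014algorithmic}, precisely as you anticipate in your final sentence. The proof sketch you supply is the standard Dwork--Rothblum--Vadhan argument and is accurate, so there is nothing to correct.
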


\section{Missing Proofs: Algorithms} \label{sec:alg_proofs}

In this section, we provide missing proofs for both the item-level and user-level algorithms. In Subsection \ref{subsec:exponential_proofs}, we prove all of the results from Subsection \ref{subsec:exponential}. In Subsection \ref{subsec:polynomial_proofs}, we prove all of the results from Subsection \ref{subsec:polynomial}. In Subsection \ref{subsec:interpolation_proofs}, we prove all of the results from Subsection \ref{subsec:interpolation}. Finally, in Subsection \ref{subsec:user_proofs}, we prove all of the results from Section \ref{sec:user}.

\subsection{Omitted Proofs from Subsection \ref{subsec:exponential}} \label{subsec:exponential_proofs}

In this subsection, our goal is to prove Theorem \ref{thm:exponential_main}. We start by proving Lemma \ref{lem:volume}, but before we do, we first prove an auxiliary result bounding the probability of a point on a ball having first coordinate too large. This result is relatively standard, but we prove it here to get explicit constants in our bound.

\begin{proposition} \label{prop:volume_ez}
    Let $n, d$ both be at least some sufficiently large constant, and choose a uniformly random point $(x_1, \dots, x_d)$ in the closed ball of radius $\sqrt{d}$, centered at the origin. Then, the probability that $x_1 \ge 5 \sqrt{\log n}$ is at most $\frac{1}{2n}$.
\end{proposition}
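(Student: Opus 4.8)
The plan is to reduce the claim to a one-dimensional tail bound. The first coordinate $x_1$ of a uniformly random point in the radius-$\sqrt d$ ball has an explicit density: slicing the ball by the hyperplane $\{x_1 = t\}$ gives a $(d-1)$-dimensional ball of radius $\sqrt{d-t^2}$, so the density of $x_1$ at $t \in [-\sqrt d,\sqrt d]$ is proportional to $(d-t^2)^{(d-1)/2}$, and hence
\[
\mathbb{P}[x_1 \ge s] \;=\; \frac{V_{d-1}}{V_d\, d^{d/2}} \int_s^{\sqrt d} (d-t^2)^{(d-1)/2}\,dt \;=\; \frac{V_{d-1}}{V_d\,\sqrt d} \int_s^{\sqrt d} \left(1 - \tfrac{t^2}{d}\right)^{(d-1)/2}\! dt,
\]
where $V_k$ denotes the volume of the unit $k$-dimensional ball.

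First I would bound the prefactor $V_{d-1}/(V_d\sqrt d)$ by an absolute constant. This is the standard estimate $V_{d-1}/V_d = \Theta(\sqrt d)$, which follows from $V_k = \pi^{k/2}/\Gamma(k/2+1)$ together with an elementary bound on the ratio of consecutive Gamma values (e.g.\ $\Gamma(d/2+1)/\Gamma(d/2+1/2) \le \sqrt{d/2+1}$); for $d$ above a small constant this yields $V_{d-1}/(V_d\sqrt d) \le 1$. Next I would control the integrand via $1 - x \le e^{-x}$, so that $(1-t^2/d)^{(d-1)/2} \le e^{-t^2(d-1)/(2d)} \le e^{-t^2/4}$ using $(d-1)/(2d) \ge 1/4$ for $d \ge 2$, and then apply the Gaussian tail bound $\int_a^\infty e^{-u^2/2}\,du \le a^{-1}e^{-a^2/2}$ after the substitution $u = t/\sqrt 2$ to get $\int_s^\infty e^{-t^2/4}\,dt \le \frac{2}{s}e^{-s^2/4}$. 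Combining these gives $\mathbb{P}[x_1 \ge s] \le \frac{2}{s}\,e^{-s^2/4}$ for all $s > 0$.

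Finally I would plug in $s = 5\sqrt{\log n}$, so that $e^{-s^2/4} = n^{-25/4}$ and the bound becomes $\frac{2}{5\sqrt{\log n}}\cdot n^{-25/4}$, which is comfortably below $\frac{1}{2n}$ since the leftover factor $\frac{4}{5\sqrt{\log n}}\,n^{-21/4}$ is less than $1$ once $n$ exceeds a small constant (and the hypothesis guarantees $n$ is at least a large constant). No step here is genuinely difficult; the only thing requiring care — and the reason the statement carries concrete numbers rather than being cited — is keeping the constants explicit, namely verifying that the crude bounds $V_{d-1}/(V_d\sqrt d)\le 1$ and $(d-1)/(2d)\ge 1/4$ hold in the relevant regime and that the exponent $25/4$ is large enough that the $n^{-25/4}$ factor beats the target $1/n$ with substantial slack.
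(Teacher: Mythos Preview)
Your proof is correct and takes a genuinely different route from the paper's argument. The paper first reduces from the ball to the sphere, then represents a uniform point on the sphere as $\bigl(Z_1,\dots,Z_d\bigr)/\sqrt{(Z_1^2+\cdots+Z_d^2)/d}$ with $Z_i$ i.i.d.\ standard Gaussians, and splits the event $\{x_1 \ge 5\sqrt{\log n}\}$ into a Gaussian tail piece $\{Z_1 \ge 2\sqrt{\log n}\}$ and a chi-square lower-tail piece $\{Z_1^2+\cdots+Z_d^2 \le d/6\}$; it then needs a separate case when $d \le 5\log n$. By contrast, you work directly with the exact marginal density of $x_1$ obtained by slicing the ball, bound the prefactor $V_{d-1}/(V_d\sqrt d)$ via a Gamma-ratio inequality, and reduce the integral to a single Gaussian tail estimate through $1-x\le e^{-x}$. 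Your approach is more elementary and self-contained: it avoids the Gaussian/chi-square detour and the case split, and the explicit bound $\mathbb{P}[x_1\ge s]\le\frac{2}{s}e^{-s^2/4}$ it produces is clean. The paper's route has the mild advantage of invoking only very familiar facts about Gaussians, but at the cost of a longer argument.
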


\begin{proof}
    It suffices to prove the same result with $(x_1, \dots, x_d)$ chosen from the sphere (i.e., where $\|x\|_2 = \sqrt{d}$) rather than the ball (i.e., where $\|x\|_2 \le \sqrt{d}$). It is well-known that $x_1$ has distribution $\frac{Z_1}{\sqrt{(Z_1^2+\cdots+Z_d^2)/d}},$ where $Z_1, \dots, Z_d$ are i.i.d. Standard Gaussians. So, the probability that $x_1 \ge 5 \sqrt{\log n}$ is at most the probability that $Z_1 \ge 2 \sqrt{\log n}$ plus the probability that $Z_1^2+\cdots+Z_d^2 \le \frac{d}{6},$ since one of these must occur in order for $x_1$ to be at least $5 \sqrt{\log n}$.
    
    The probability that a random Gaussian is at least $2 \sqrt{\log n}$, where $n$ is sufficiently large, is at most $e^{-(2 \sqrt{\log n})^2/2} \le \frac{1}{n^2}$. To bound the probability that $Z_1^2+\cdots+Z_d^2 \le \frac{d}{6}$, note that $Z_1^2+\cdots+Z_d^2 \sim \chi_d^2$, and it is well-known that the chi-square $\chi_d^2$ random variable has PDF
\[\frac{1}{2^{d/2} \Gamma(d/2)} \cdot x^{d/2-1} e^{-x/2} \le \frac{d^{O(1)} \cdot e^{d/2}}{d^{d/2}} \cdot x^{d/2-1} e^{-x/2},\]
    which for $x \le \frac{d}{6}$ is at most $d^{O(1)} \cdot (1/6)^{d/2} \cdot e^{d/2-d/12}$. Therefore, the probability of $Z_1^2+\cdots+Z_d^2 \le \frac{d}{6}$ is at most $d^{O(1)}\cdot (1/6)^{d/2} \cdot e^{d/2-d/12} \le e^{-0.4 \cdot d}$ for sufficiently large $d$.

    So, overall, the probability that $x_1 \ge 5 \sqrt{\log n}$ is at most $\frac{1}{n^2}+e^{-0.4 \cdot d}$, which is at most $\frac{2}{n^2} \le \frac{1}{2n}$ for sufficiently large $n$ unless $e^{-0.4 \cdot d} > \frac{1}{n^2}$, which means that $d \le 5 \log n$. But in this case, $x_1 \le \sqrt{d} \le \sqrt{5 \log n}$, so we have that $x_1 \ge 5 \sqrt{\log n}$ with probabability $0$. This concludes the proof.
\end{proof}

\begin{proof}[Proof of Lemma \ref{lem:volume}]
    First, by scaling we may assume WLOG that $r = 1$.

    Next, note that the ball of radius $\sqrt{d}-1$ around $x$ is contained in $\bigcap_i B_i$, and this ball has volume $V_B \cdot \left(\frac{\sqrt{d}-1}{\sqrt{d}}\right)^d \ge V_B \cdot e^{-5\sqrt{d}}$ for sufficiently large $d$. Therefore, if $2 \sqrt{\log n} \ge \sqrt{d}$, then $\text{Vol}(\bigcap_{i} B_i) \ge V_B \cdot e^{-5\sqrt{d}} \ge V_B \cdot e^{-10\sqrt{\log n}}$. So, we now assume that $2\sqrt{\log n} \le \sqrt{d}$.

    If a point $p$ is in $\bigcap_i B_i$, then $\|p - x_i\|_2^2 \le d$ for all $i$, which is equivalent to saying that $\|p\|_2^2 - 2 \langle p, x_i\rangle + \|x_i\|_2^2 \le d$. Therefore, since $\|x_i\|_2^2 \le 1$ for all $i$, if $\|p\|_2^2 \le d-1 + 2 \langle p, x_i \rangle$ for all $i$, then $p \in \bigcap_i B_i$. We will lowerbound the volume of points $p$ with $\|p\|_2^2 \le d-1 + 2 \langle p, x_i \rangle$ for all $i$.
    
    Suppose we pick a random point $p$ in the ball $\{\|p\|_2^2 \le d-1-10\sqrt{\log n}\}.$ Then, for a fixed $i$, if $\langle p, x_i \rangle \ge - 5 \sqrt{\log n}$, we have that $\|p\|_2^2 \le d-1+2 \langle p, x_i \rangle$. However, by rotational symmetry of the sphere, the probability of this event is equal to the probability that the first coordinate of $p$ is at least $-5 \sqrt{\log n}$. Since the ball has radius $\sqrt{d-1-10\sqrt{\log n}} \le \sqrt{d}$, the probability of this event is at least the probability that the first coordinate of a point picked on the $\sqrt{d}$-radius sphere is at least $-5 \sqrt{\log (n)}$. By Proposition \ref{prop:volume_ez}, the probability of this not occurring is at most $1/(2n),$ which means by the union bound, the probability that a random point $p$ with $\|p\|_2^2 \le d-1-10\sqrt{\log n}$ not being in all of the balls is at most $1/(2n) \cdot n = 1/2$, for any fixed points $x_1, \dots, x_n$. Therefore,
\[\text{Vol}\left(\bigcap_i B_i\right) \ge \frac{1}{2} \cdot V_B \cdot \left(\sqrt{\frac{d-1-10\sqrt{\log n}}{d}}\right)^d \ge \frac{1}{2} \cdot e^{-10 \sqrt{\log n}} \cdot V_B.\]
    To see why the last inequality holds, since $d$ is sufficiently large and $10\sqrt{\log n} \le 5 \sqrt{d}$, we have $\frac{1+10\sqrt{\log n}}{d} \le \frac{1}{2}$. Therefore, $\frac{d-1-10\sqrt{\log n}}{d} = 1-\frac{1+10\sqrt{\log n}}{d} \ge e^{-20 \sqrt{\log n}/d}$, so 
\[\left(\sqrt{\frac{d-1-10\sqrt{\log n}}{d}}\right)^d = \left(\frac{d-1-10\sqrt{\log n}}{d}\right)^{d/2} \ge e^{-20 \sqrt{\log n}/d \cdot d/2} = e^{-10 \sqrt{\log n}}.\] %\qedhere
\end{proof}

\begin{proof}[Proof of Theorem \ref{thm:exponential_main}]
    First, we note that we can make the following three assumptions. 
    \begin{itemize}
        \item $\boldsymbol{r = 1}$. This is an assumption we may trivially make by scaling.
        \item $\boldsymbol{\eps \ge \delta}$. Note that if $\eps \le \delta,$ then any $(\delta/2, \delta/2)$-differentially private algorithm is known to be $(0, \delta)$-differentially private, and in this case we may even obtain an algorithm that succeeds using only $O(\frac{1}{\delta} \log \frac{1}{\delta}) \le O(\frac{1}{\eps} \log \frac{1}{\delta})$ samples.
        \item \textbf{In Condition 2, the success probability is only $\boldsymbol{1-\delta}$ instead of $\boldsymbol{1}$}.
        Let $\mathcal{A}$ be such an algorithm that succeeds with probability $1-\delta$.
        Since we are allowing for exponential time algorithms, if we are given black-box access to $\mathcal{A}$, we can modify algorithm to some $\mathcal{A}'$ as follows. First, $\mathcal{A}'$ verifies that at least $2/3$ of the points lie in a ball $B$ of radius $r=1$. If such a ball $B$ exists, the algorithm $\mathcal{A}'$ runs $\mathcal{A}$ conditioned on returning a point $x'$ of distance at most $O(\sqrt{d})$ from $B$, via rejection sampling. If $B$ does not exist, the algorithm $\mathcal{A}'$ will just run $\mathcal{A}$ once. This changes the total variation distance of the algorithm output's distribution by at most $\delta$ for any choice of $x_1, \dots, x_n$, so the algorithm is still $(\eps, O(\delta))$-differentially private.
    \end{itemize}

    Now, we state and prove two propositions, which establish accuracy and privacy, respectively, of our algorithm.

\begin{proposition} \label{prop:exponential_accurate}
    If $n \ge O\left(\frac{1}{\eps} \cdot \log \frac{1}{\alpha \cdot \eps \cdot \delta}\right),$ and at least $2/3$ of the points $x_1, \dots, x_n$ are all in some unknown ball $B$ centered at $x$ of radius $1$, then with probability at least $1-\alpha$, the algorithm chooses a point within $1+\sqrt{d}$ of $x$.
\end{proposition}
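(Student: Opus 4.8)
The plan is to bound directly the total mass that the (unnormalized) distribution $\mathcal{D}$ places on ``good'' outputs — points within $1+\sqrt d$ of $x$ — against the mass placed on everything else, including the garbage bucket. Write $G = \{i : \|x_i - x\|_2 \le 1\}$, so that $|G| \ge \tfrac{2}{3} n$ by hypothesis. The key geometric observation is that if $p \in B_i$ for some $i \in G$ then $\|p - x\|_2 \le \|p - x_i\|_2 + \|x_i - x\|_2 \le \sqrt d + 1$; hence the good region contains $\bigcup_{i \in G} B_i$, and it suffices to show the algorithm outputs a point of $\bigcup_{i\in G} B_i$ with probability at least $1-\alpha$, i.e.\ that the mass on $\bigcup_{i\in G}B_i$ dominates the rest.

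For the lower bound on the good mass, apply Lemma \ref{lem:volume} to the $|G|$ points $\{x_i\}_{i \in G}$ (each within distance $1$ of $x$, after the reduction $r=1$): this gives $\mathrm{Vol}\big(\bigcap_{i \in G} B_i\big) \ge e^{-10\sqrt{\log |G|}} V_B \ge e^{-10\sqrt{\log n}} V_B$. On every point $p$ of this intersection, $f(p) \ge |G| \ge \tfrac23 n$, so $\min(f(p), 2n/3) = \tfrac23 n$ and the density is exactly $e^{2\eps n/3}$; thus the good mass is at least $e^{-10\sqrt{\log n}} \cdot V_B \cdot e^{2\eps n /3}$. For the upper bound on the bad mass, any point $p$ with $f(p) > 0$ lies in some $B_i$; if $p$ lies in no good ball then $p \in \bigcup_{i \notin G} B_i$, a set of volume at most $\tfrac n3 V_B$, and there $f(p) \le |G^c| \le \tfrac n3$, so the density is at most $e^{\eps n / 3}$. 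Adding the garbage bucket's point mass $\tfrac4\delta V_B$, the total mass on bad outcomes is at most $\tfrac n3 V_B e^{\eps n/3} + \tfrac4\delta V_B$.

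Combining these, the probability of a bad output is at most
\[
\frac{\tfrac n3 V_B e^{\eps n/3} + \tfrac4\delta V_B}{e^{-10\sqrt{\log n}} V_B e^{2\eps n/3}} \;=\; e^{10\sqrt{\log n}}\left(\tfrac n3\, e^{-\eps n/3} + \tfrac4\delta\, e^{-2\eps n/3}\right),
\]
and it remains to check this is at most $\alpha$ whenever $n \ge C\cdot\frac1\eps\log\frac1{\alpha\eps\delta}$ for a large enough absolute constant $C$. Bounding each term by $\alpha/2$ reduces to the inequalities $\tfrac{\eps n}{3} \ge 10\sqrt{\log n} + \log n + \log\tfrac2\alpha$ and $\tfrac{2\eps n}{3} \ge 10\sqrt{\log n} + \log\tfrac4\delta + \log\tfrac2\alpha$; since the left-hand sides grow linearly in $n$ while the right-hand sides grow polylogarithmically, it suffices to verify both at the threshold value $n = C\cdot\frac1\eps\log\frac1{\alpha\eps\delta}$, where $\eps n = C\log\frac1{\alpha\eps\delta}$ dominates each of $\log n$, $\sqrt{\log n}$, $\log\frac1\delta$ and $\log\frac1\alpha$ once $C$ is large enough. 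This last step is the only real obstacle, and it is purely bookkeeping: one must track the lower-order $\sqrt{\log n}$ and $\log n$ terms and pin down $C$. The conceptual content — that Lemma \ref{lem:volume} forces a region of volume $\ge e^{-10\sqrt{\log n}} V_B$ on which $f(p)\ge 2n/3$, whose $e^{2\eps n/3}$-scaled mass swamps both the $\le nV_B$ volume of low-$f$ points (scaled by only $e^{\eps n/3}$) and the $\tfrac4\delta V_B$ garbage mass — is already supplied by the lemma and the structure of $\mathcal{D}$.
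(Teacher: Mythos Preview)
Your proof is correct and follows essentially the same approach as the paper: lower-bound the good mass via Lemma~\ref{lem:volume} on $\bigcap_{i\in G}B_i$ (where $f(p)\ge 2n/3$), upper-bound the bad mass by the volume of the remaining balls (where $f(p)\le n/3$) plus the garbage bucket, and show the ratio is at most $\alpha$ once $n\ge C\eps^{-1}\log\frac{1}{\alpha\eps\delta}$. Your bookkeeping is slightly tighter (you use $\tfrac n3 V_B$ for the bad-ball volume where the paper uses $nV_B$), but the structure and all the key inequalities are the same.
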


\begin{proof}
    Suppose we do not choose such a point. Then, either we chose the garbage bucket, or we chose a point $p$ that was not in $B_i$ for any $i$ with $\|x_i-x\|_2 \le 1$, in which case $f(p) \le \frac{1}{3} \cdot n$. Now, note that the volume of $\bigcap_{i: \|x_i-x\|_2 \le 1} B_i$ is at least $\frac{1}{2} \cdot e^{-10 \sqrt{\log 2n/3}} \cdot V_B \ge \frac{1}{2} \cdot e^{-10 \sqrt{\log n}} \cdot V_B$ by Lemma \ref{lem:volume} (by replacing $n$ with $2n/3$) and that $f(p) \ge \frac{2}{3} \cdot n$ for any $p \in \bigcap_{i: \|x_i-x\|_2 \le 1} B_i$. Therefore, the probability density function in this region is proportional to $e^{2/3 \cdot \eps \cdot n}$, whereas the mass of the garbage bucket is proportional to $\frac{4}{\delta}\cdot V_B$. So, we choose the garbage bucket with probability at most
\[\frac{\frac{4}{\delta}\cdot V_B}{e^{2/3 \cdot \eps \cdot n} \cdot \frac{1}{2} \cdot e^{-10 \sqrt{\log n}} \cdot V_B} = \frac{8}{\delta} \cdot e^{-(2/3 \cdot \eps \cdot n - 10 \sqrt{\log n})}.\]
    So, as long as $\frac{2}{3} \cdot \eps \cdot n - 10 \sqrt{\log n} \ge \log (8/\delta) + \log (2/\alpha) = \log (16/(\alpha \cdot \delta)),$ then we choose the garbage bucket with probability at most $\alpha/2$.
    
    Next, note that total volume of points for which $f(p) \neq 0$ is at most $n \cdot V_B$, which means that we choose a point $p$ not in $\bigcap_{i: \|x_i-x\|_2 \le 1} B_i$ with probability at most
\[\frac{e^{1/3 \cdot \eps \cdot n} \cdot n \cdot V_B}{e^{2/3 \cdot \eps \cdot n} \cdot \frac{1}{2} \cdot e^{-10\sqrt{\log n}} \cdot V_B} \le 2n \cdot e^{10\sqrt{\log n}} \cdot e^{-1/3 \cdot \eps \cdot n} = 2 e^{\log n + 10 \sqrt{\log n} - 1/3 \cdot \eps \cdot n} \le 2 e^{11 \log n - 1/3 \cdot \eps \cdot n}.\]
    So, as long as $\frac{1}{3} \cdot \eps \cdot n-11 \log n \ge \log(4/\alpha)$, then we choose a point $p$ not within $1+\sqrt{d}$ of $x$ with probability at most $\alpha/2$.
    
    Therefore, it suffices for $n$ to satisfy $\frac{2}{3} \cdot \eps \cdot n - 10 \sqrt{\log n} \ge \log \frac{16}{\alpha \cdot \delta}$ and $\frac{1}{3} \cdot \eps \cdot n-11\log n \ge \log \frac{4}{\alpha}$. For these to be true, it suffices that $n \ge \frac{30}{\eps} \cdot \sqrt{\log n}$, that $n \ge \frac{3}{\eps} \log \frac{16}{\alpha \cdot \delta}$, that $n \ge \frac{6}{\eps} \cdot \log \frac{4}{\alpha},$ and that $n \ge \frac{66}{\eps} \log n$. These are all clearly true as long as $n \ge C \cdot \frac{1}{\eps} \log \frac{1}{\alpha \cdot \eps \cdot \delta}$ for a sufficiently large constant $C$. This concludes the proof.
\end{proof}

\begin{proposition} \label{prop:exponential_dp}
    This algorithm is $(\eps, \delta)$-differentially private.
\end{proposition}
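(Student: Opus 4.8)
The plan is to run the standard privacy analysis of the exponential mechanism, with the garbage bucket playing the role of a reservoir that absorbs the $\delta$ loss. Fix two neighboring datasets $X=\{x_i\}$ and $X'=\{x_i'\}$ differing only in the $k$-th point, and let $f,f'$, the unnormalized densities $\mathcal{D},\mathcal{D}'$, and the normalizers $Z,Z'$ be the corresponding objects, where $Z=\int_{\BR^d}\mathcal{D}(p)\,dp+\tfrac{4}{\delta}V_B$ and similarly for $Z'$. The first observation is that replacing $x_k$ by $x_k'$ changes $f$ by at most $1$ pointwise: $f'(p)=f(p)$ unless $p$ lies in exactly one of $B_k,B_k'$, while $f'(p)=f(p)-1$ on $B_k\setminus B_k'$ and $f'(p)=f(p)+1$ on $B_k'\setminus B_k$. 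Since $t\mapsto\min(t,2n/3)$ is $1$-Lipschitz, $\bigl|\min(f(p),2n/3)-\min(f'(p),2n/3)\bigr|\le 1$ everywhere.

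Next I would isolate the only genuinely dangerous region. Let $\mathcal{B}:=\{p:\mathcal{D}(p)>0=\mathcal{D}'(p)\}$ be the set where the density collapses to $0$. Because $f'(p)\ge f(p)-1$, a point of $\mathcal{B}$ must have $f(p)=1$ and $f'(p)=0$, which forces $p\in B_k\setminus B_k'$; hence $\mathcal{B}\subseteq B_k$, so $\mathrm{Vol}(\mathcal{B})\le V_B$, and $\mathcal{D}\equiv e^{\eps}$ on $\mathcal{B}$ (as $\min(1,2n/3)=1$). Off $\mathcal{B}$, either $\mathcal{D}(p)=0$ or $\mathcal{D}'(p)>0$ and the exponents differ by at most $\eps$, so $\mathcal{D}(p)\le e^{\eps}\mathcal{D}'(p)$; the garbage bucket has identical mass $\tfrac{4}{\delta}V_B$ under both datasets. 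Consequently $\int_{\mathcal{B}}\mathcal{D}\le e^{\eps}V_B$ and $\int_{\mathcal{B}^c}\mathcal{D}\le e^{\eps}\int_{\BR^d}\mathcal{D}'$, and the same holds with the roles of $X,X'$ swapped (now with the bad set a subset of $B_k'$).

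With these ingredients I would control the normalizers and assemble the bound. Using $\eps\ge\delta$ and $\eps,\delta\le\tfrac13$ (so that $e^{\eps}V_B\le\tfrac{e^{\eps}\delta}{4}\cdot\tfrac{4}{\delta}V_B$ is a small fraction of the garbage mass), the estimates above give $Z\le e^{\eps}Z'+e^{\eps}V_B\le e^{O(\eps)}Z'$, and symmetrically $Z'\le e^{O(\eps)}Z$. Then, writing an arbitrary output set as $S_0\cup\{G\}$ with $S_0\subseteq\BR^d$,
\[
\BP[M(X)\in S]=\frac{\int_{S_0}\mathcal{D}+\tfrac{4}{\delta}V_B\,\mathbf{1}[G\in S]}{Z}\le\frac{e^{\eps}\int_{S_0}\mathcal{D}'+e^{\eps}V_B+\tfrac{4}{\delta}V_B\,\mathbf{1}[G\in S]}{Z}.
\]
Here $\tfrac{e^{\eps}V_B}{Z}\le\tfrac{e^{\eps}V_B}{(4/\delta)V_B}=\tfrac{e^{\eps}\delta}{4}\le\delta$ (since $e^{\eps}\le 4$), while every remaining term is within an $e^{O(\eps)}$ factor of the corresponding term for $X'$, using $Z\ge e^{-O(\eps)}Z'$. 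This yields $(O(\eps),\delta)$-differential privacy, which is the claimed guarantee up to an absolute constant in the privacy level (formally absorbed by running the construction with $\eps$ scaled down by that constant).

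I expect the main obstacle to be the treatment of the normalizers $Z$ and $Z'$: a priori the pointwise densities $e^{\eps\min(f,2n/3)}$ can be astronomically large (up to $e^{2\eps n/3}$), so a crude region-by-region multiplicative comparison of $Z$ and $Z'$ is hopeless. The resolution — and the reason the garbage bucket is scaled by $1/\delta$ rather than by a constant — is the structural fact that the only place where the two densities fail to be within an $e^{\pm\eps}$ factor is the bounded-volume set $\mathcal{B}$, where the density is merely $e^{\eps}$, so its total mass $e^{\eps}V_B$ is dominated by the garbage mass $\tfrac{4}{\delta}V_B$; everything else telescopes through the $1$-Lipschitz exponent bound.
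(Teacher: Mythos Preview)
Your argument is correct and follows essentially the same route as the paper: both identify that $f$ changes by at most $1$ under a neighbor swap, that the only region where the unnormalized density fails to move by an $e^{\pm\eps}$ factor is where $f$ switches between $0$ and $1$, and that this region has volume at most $V_B$ with density $e^{\eps}$, hence is dominated by the garbage mass $\tfrac{4}{\delta}V_B$. The paper's write-up is terser and does not explicitly track the normalizer ratio, whereas you carry this through carefully (which is why you land on $(O(\eps),\delta)$ rather than the paper's nominally stated $(\eps,\delta)$); this is the only difference, and it is cosmetic.
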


\begin{proof}
    We just need to see what happens when a single ball is moved. For any point $p$, the number of balls $B_i$ containing $p$ (which we called $f(p)$) does not change by more than $1$. Hence, the relative (unnormalized) density function at $p$ does not change by more than an $e^{\eps}$ factor, unless $f(p)$ changes from $0$ to $1$ or from $1$ to $0$. However, this only occurs for at most $V_B$ volume, so the unnormalized density of points $p$ for which $f(p)$ goes from $1$ to $0$ or from $0$ to $1$ is at most $e^\eps \cdot V_B$ (for each), so the total such unnormalized density is at most $2 e^{\eps} \cdot V_B \le 4 \cdot V_B$. So, the amount of normalized density cannot be more than $\delta$, since the garbage bucket has unnormalized density $\frac{4}{\delta} \cdot V_B$, which is at least $\frac{1}{\delta}$ times the total density of what we are modifying by more than an $e^{\eps}$ factor. Thus, we have $(\eps, \delta)$-differential privacy.
\end{proof}

By the initial WLOG assumptions, we may assume that $r = 1$, $\eps \ge \delta$ and that $\alpha = \delta$. So, by Proposition \ref{prop:exponential_accurate}, we only need $O(\frac{1}{\eps} \log \frac{1}{\delta})$ samples. This proves Theorem \ref{thm:exponential_main}.
\end{proof}

\subsection{Omitted Proofs and Pseudocode from Subsection \ref{subsec:polynomial}} \label{subsec:polynomial_proofs}

In this subsection, we state and prove all results omitted from Subsection \ref{subsec:polynomial}, and provide pseudocode for Corollary \ref{cor:main}.

%First, we prove Propositions \ref{prop:valid_prob} and \ref{prop:no_G2_same_alg}.

First, we verify that Algorithm \ref{alg:dp_estimate_1} is in fact a valid algorithm. To do so, we have the following proposition.

\begin{proposition} \label{prop:valid_prob}
    For $n \ge C \cdot \frac{1}{\eps} \log \frac{1}{\eps}$, the value $\frac{1}{3} \cdot \frac{n}{f(p)} \cdot e^{\eps \cdot \left(\min(f(p), 2n/3)-2n/3\right)}$ is at most $\frac{1}{2}.$ Hence, it is a valid probability.
\end{proposition}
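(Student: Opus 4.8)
The plan is to regard the quantity $q(t) := \frac13 \cdot \frac{n}{t} \cdot e^{\eps(\min(t, 2n/3) - 2n/3)}$ as a function of $t = f(p)$, and to bound it by $\frac12$ for every $t$ in the relevant range. First I would note that whenever the algorithm reaches this line it has just sampled $p \sim \text{Unif}[B_i]$ for some $i \in [n]$, so $x_i$ is one of the balls containing $p$ and hence $f(p) \ge 1$ (and trivially $f(p) \le n$). Then I would split into the two regimes $f(p) \ge 2n/3$ and $1 \le f(p) < 2n/3$, according to which term achieves the minimum $\min(f(p), 2n/3)$.

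In the regime $f(p) \ge 2n/3$ the exponential factor is exactly $e^{0} = 1$, so $q(f(p)) = \frac{n}{3 f(p)} \le \frac{n}{3 \cdot (2n/3)} = \frac12$, with no hypothesis on $n$ needed. In the regime $1 \le f(p) < 2n/3$, I would set $t := f(p)$, relax to the real interval $[1, 2n/3]$, and write $q(t) = \frac13 h(t)$ with $h(t) := \frac{n}{t}\, e^{\eps(t - 2n/3)}$; it suffices to show $h(t) \le \frac32$. Since $\log h(t) = \log n - \log t + \eps t - \frac{2n\eps}{3}$ has derivative $-\frac1t + \eps$, which is increasing in $t$, $\log h$ is convex on $[1, 2n/3]$, so $h$ attains its maximum there at one of the two endpoints. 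At the right endpoint $h(2n/3) = \frac{n}{2n/3}\cdot e^{0} = \frac32$, exactly the target bound, so everything reduces to checking the left endpoint inequality $h(1) = n \cdot e^{\eps(1 - 2n/3)} \le \frac32$.

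This last inequality is the main obstacle, and it is the only place the hypothesis $n \ge C\eps^{-1}\log\eps^{-1}$ and the standing assumption $\eps \le \frac13$ enter. Taking logarithms, it is equivalent to $\psi(n) := \frac{2\eps}{3} n - \log n \ge \eps - \log\frac32$; since $\eps - \log\frac32 < 0$ it would even suffice to show $\psi(n) \ge 0$. As $\psi'(n) = \frac{2\eps}{3} - \frac1n > 0$ precisely for $n > \frac{3}{2\eps}$, and $C\eps^{-1}\log\eps^{-1} > \frac{3}{2\eps}$ once $C$ is a large enough constant (using $\log\eps^{-1} \ge \log 3$), the function $\psi$ is increasing throughout the range $n \ge C\eps^{-1}\log\eps^{-1}$, so it is enough to verify $\psi(n_0) \ge 0$ at $n_0 := C\eps^{-1}\log\eps^{-1}$. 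There $\frac{2\eps}{3} n_0 = \frac{2C}{3}\log\eps^{-1}$ while $\log n_0 = \log C + \log\eps^{-1} + \log\log\eps^{-1}$, so $\psi(n_0) = \big(\frac{2C}{3} - 1\big)\log\eps^{-1} - \log\log\eps^{-1} - \log C$, which is nonnegative (indeed unbounded as $\eps \to 0$) for every fixed sufficiently large $C$, uniformly over $\eps \le \frac13$, because $\log\eps^{-1}$ dominates $\log\log\eps^{-1}$ on that range. Choosing $C$ accordingly finishes the left endpoint, and combining the two regimes gives $q(f(p)) \le \frac12$ in all cases, so the displayed quantity is a valid probability.
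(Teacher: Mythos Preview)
Your proof is correct and follows essentially the same approach as the paper: both split on whether $f(p) \ge 2n/3$ (where the bound is immediate) or $f(p) < 2n/3$, and both reduce the latter case to an inequality that holds once $n \ge C\eps^{-1}\log\eps^{-1}$. The only minor difference is in the execution of the second case: the paper rewrites the quantity as $\tfrac12 \cdot \frac{e^y/y}{e^x/x}$ with $x = \tfrac{2n\eps}{3}$, $y = \eps f(p)$, and then sub-splits on $y \le 1$ versus $y \ge 1$ (using monotonicity of $e^t/t$ for $t \ge 1$), whereas you observe directly that $\log h$ is convex on $[1,2n/3]$ and so only the two endpoints need checking; your route is slightly slicker but the content is the same.
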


\begin{proof}
    If $f(p) \ge 2n/3$, this equals $\frac{n/3}{f(p)} \le \frac{1}{2}$, since $e^{\eps \cdot (\min(f(p), 2n/3)-2n/3)} = 1$. Otherwise, $$\frac{1}{3} \cdot \frac{n}{f(p)} \cdot e^{\eps \cdot (\min(f(p), 2n/3)-2n/3)} = \frac{1}{2} \cdot \frac{e^{\eps \cdot f(p)}/f(p)}{e^{\eps \cdot 2n/3}/(2n/3)},$$ so if we define $x = \eps \cdot 2n/3$ and $y=\eps \cdot f(p)$, then the probability equals $\frac{1}{2} \cdot \frac{e^y/y}{e^x/x}$. For $n \ge C \cdot \frac{1}{\eps} \log \frac{1}{\eps},$ $e^x/x \ge \eps^{-2}$. However, if $y \le 1$ then $e^y/y \le e/y = O(\frac{1}{\eps})$ since $y = \eps \cdot f(p) \ge \eps$, and if $1 \le y \le x,$ then $e^t/t$ is increasing for $t \ge 1$, so $e^y/y \le e^x/x$. Therefore, we always have that $\frac{e^y/y}{e^x/x} \le 1$, which concludes the proof. 
\end{proof}

%First, we note that the probability $\frac{1}{3} \cdot \frac{n}{f(p)} \cdot e^{\eps (\min(f(p), 2n/3)-2n/3)}$ is always at most $\frac{1}{2}$ whenever $1 \le p \le n$. Indeed, if $f(p) \ge 2n/3$, this equals $\frac{n/3}{f(p)} \le \frac{1}{2}$. Otherwise, this equals $\frac{1}{2} \cdot \frac{e^{\eps \cdot f(p)}/f(p)}{e^{\eps \cdot 2n/3}/(2n/3)}$, so if we define $x = \eps \cdot 2n/3$ and $y=\eps \cdot f(p)$, then the probability equals $\frac{1}{2} \cdot \frac{e^y/y}{e^x/x}$. For $n \ge C \cdot \frac{1}{\eps} \log \frac{1}{\eps},$ $e^x/x \ge \eps^{-2}$, whereas if $y \le 1$ then $e^y/y \le e/y = O(\frac{1}{\eps})$ since $y = \eps \cdot f(p) \ge 1$, and if $1 \le y \le x,$ then $e^x/x$ is increasing. Therefore, we always have that $\frac{e^y/y}{e^x/x} \le 1$. 
Thus, our algorithm is actually a valid algorithm, and in fact in each round of the rejection sampling, we never accept with probability more than $\frac{1}{2}.$
%$\frac{1}{3} \cdot \frac{n}{f(p)} \cdot e^{\eps(\min(f(p), 2n/3)-2n/3)} \le \frac{1}{2}$.

Next, we verify the three guarantees listed in Subsection \ref{subsec:polynomial}.
We start by showing that Algorithm \ref{alg:dp_estimate_1} returns a point from the same distribution as the previous algorithm of Subsection \ref{subsec:exponential}, conditioned on not choosing $G_2$. Specifically, we show the following proposition.

\begin{proposition} \label{prop:no_G2_same_alg}
    Consider a single rejection round of Algorithm \ref{alg:dp_estimate_1}. Then, conditioned on accepting $p$ or returning $G_1$, we sample each point $p$ with density proportional to $e^{\eps \cdot \min(f(p), 2n/3)}$ and sample $G_1$ with probability proportional to $\frac{4}{\delta} \cdot V_B$.
\end{proposition}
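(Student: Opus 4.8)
The plan is to compute, for a single rejection round, the probability that we return any particular outcome (a point $p$ in some infinitesimal volume, or $G_1$), and show that conditioning on ``acceptance'' (i.e., on not rejecting in this round) gives exactly the target density. First I would write out the probability of the two disjoint ways a round can end in a return: (i) we pick an index $i \le n$, then sample $p \sim \mathrm{Unif}[B_i]$, then accept. The probability of picking a particular $i \le n$ is $\frac{e^{\eps \cdot 2n/3}}{n \cdot e^{\eps \cdot 2n/3} + \frac{4n}{\delta}}$, the density of landing near a given $p$ inside $B_i$ is $\frac{1}{V_B}$ (and is $0$ if $p \notin B_i$), and the acceptance probability is $\frac{1}{3} \cdot \frac{n}{f(p)} \cdot e^{\eps(\min(f(p),2n/3) - 2n/3)}$. (ii) we pick index $n+1$ — probability $\frac{4n/\delta}{n \cdot e^{\eps \cdot 2n/3} + \frac{4n}{\delta}}$ — and accept with probability $\frac13$, returning $G_1$.

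The key observation is that for a fixed point $p$, the number of indices $i \le n$ for which $p \in B_i$ is exactly $f(p)$, so summing the contribution of case (i) over all choices of $i$ that could have produced $p$ gives
\[
f(p) \cdot \frac{e^{\eps \cdot 2n/3}}{n \cdot e^{\eps \cdot 2n/3} + \frac{4n}{\delta}} \cdot \frac{1}{V_B} \cdot \frac{1}{3} \cdot \frac{n}{f(p)} \cdot e^{\eps(\min(f(p),2n/3) - 2n/3)}.
\]
The two occurrences of $f(p)$ cancel, and $e^{\eps \cdot 2n/3} \cdot e^{-\eps \cdot 2n/3} = 1$, so this simplifies to a constant (independent of $p$: namely $\frac{1}{3 V_B} \cdot \frac{n}{n e^{\eps 2n/3} + 4n/\delta}$) times $e^{\eps \cdot \min(f(p), 2n/3)}$. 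Meanwhile the probability of returning $G_1$ is $\frac{4n/\delta}{n e^{\eps 2n/3} + 4n/\delta} \cdot \frac13$, which equals the same constant times $\frac{4}{\delta} \cdot V_B$. Hence, unconditionally within a round, the (sub-)density of returning $p$ is proportional to $e^{\eps \min(f(p), 2n/3)}$ and the (sub-)probability of returning $G_1$ is proportional to $\frac4\delta V_B$, with the same proportionality constant; conditioning on a return simply renormalizes, giving exactly the claimed distribution. I should also note that Proposition~\ref{prop:valid_prob} guarantees the acceptance probability in step~7 is a genuine probability ($\le \frac12$), so the round is well-defined and the stated (sub-)densities are legitimate.

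I expect the main obstacle to be purely bookkeeping: correctly accounting for the fact that a given point $p$ can be produced from $f(p)$ different starting indices $i$ (this is the crux — it is exactly this multiplicity that cancels the $\frac{1}{f(p)}$ in the acceptance probability and is the whole point of that factor), and being careful that "conditioned on accepting $p$ or returning $G_1$" means conditioning on the round terminating rather than on the entire algorithm. There is no real analytic difficulty; the one thing to double-check is that $f(p) = 0$ points contribute zero (they can never be sampled, since $p \sim \mathrm{Unif}[B_i]$ only ever lands in some $B_i$), matching the inefficient algorithm's density $\mathcal{D}(p) = 0$ when $f(p) = 0$.
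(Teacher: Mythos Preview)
Your proposal is correct and follows essentially the same approach as the paper's proof: both compute the unnormalized density of returning a point $p$ by summing over the $f(p)$ indices $i$ with $p\in B_i$, observe that the factor $f(p)$ from this multiplicity cancels the $\tfrac{1}{f(p)}$ in the acceptance probability, and then compare against the $G_1$ weight. The only cosmetic difference is that you carry explicit normalizing denominators while the paper works with proportionalities and scales by $\tfrac{3}{n}V_B$ at the end.
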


\begin{proof}
    Consider a single iteration of the main For loop (i.e., lines 4-10 in the pseudocode in Algorithm \ref{alg:dp_estimate_1}). A point $p$ can only be sampled if $p \in \bigcup_i B_i$, and if so it is sampled with probability density proportional to $$f(p) \cdot e^{\eps \cdot 2n/3} \cdot \frac{1}{V_B} \cdot \frac{1}{3} \cdot \frac{n}{f(p)} \cdot e^{\eps (\min(f(p), 2n/3)-2n/3)} = \frac{1}{V_B} \cdot \frac{n}{3} \cdot e^{\eps (\min(f(p), 2n/3))}.$$ This is because we sample some point $i$ with $p \in B_i$ with probability density proportional to $f(p) \cdot e^{2n/3}$, and then select $p$ with density to $1/V_B$, and then accept with probability $\frac{1}{3} \cdot \frac{n}{f(p)} \cdot e^{\eps (\min(f(p), 2n/3)-2n/3)}.$ However, we select $G_1$ with probability proportional to $$4n \cdot \frac{1}{\delta} \cdot \frac{1}{3} = \frac{n}{3} \cdot \frac{4}{\delta}.$$
    So, by scaling by a factor of $\frac{3}{n} \cdot V_B$ for all probabilities, the proportionalities are correct.
    %So, if we repeat the loop until we return something, the output is the same as in the previous subsection.
\end{proof}

Therefore, if we let $X = \infty$ and keep repeating the loop until we return a point (or $G_1$), the final point that we choose has the same distribution as the algorithm described in subsection \ref{subsec:exponential}, where the garbage bucket is $G_1$.
However, to prevent the algorithm from possibly taking exponential time, we only perform the rejection sampling $X \sim Geom(1/N)$ times.
Our main additional challenge is to show that we still preserve differential privacy, while still maintaining accuracy even with a limited number of rounds.
%If the algorithm rejects all $X$ times, we just automatically select a second garbage bucket $G_2$ as our estimate. Note that conditioned on the algorithm not sending to $G_2$, this modified algorithm has the same distribution as the previous algorithm. 

We say that Algorithm \ref{alg:dp_estimate_1} \emph{accepts} if the final estimate is not $G_2$.
From Proposition \ref{prop:no_G2_same_alg}, we clearly have that conditioned on Algorithm \ref{alg:dp_estimate_1} accepting, the output distribution is the same as the output distribution of the computationally inefficient algorithm. Hence, the main additional results we must prove are that first, the probability of Algorithm \ref{alg:dp_estimate_1} accepting is high if most of the points $x_1, \dots, x_n$ are concentrated, and second, the probability of Algorithm \ref{alg:dp_estimate_1} accepting does not change by much among neighboring datasets.

%Indeed, we prove such a result: Lemma \ref{lem:analysis_G2}, which we have deferred both the formal statement and proof to Appendix \ref{subsec:polynomial_proofs}.

We now state such a result: Lemma \ref{lem:analysis_G2}, which establishes that a modified version of Algorithm \ref{alg:dp_estimate_1} accepts with high probability if most of $x_1, \dots, x_n$ are in a ball of radius $r$, and is differentially private in the worst case.

\begin{lemma} \label{lem:analysis_G2}
    Suppose that $\eps, \delta, \alpha \le \frac{1}{3}$, and $n \ge C \cdot \frac{1}{\eps} \log \frac{1}{\alpha \cdot \eps \cdot \delta}$.
    Consider a modified algorithm where we run Algorithm \ref{alg:dp_estimate_1}, but output $1$ if the algorithm above returns $G_2$ and output $0$ otherwise. Then, if at least $2n/3$ of the points $x_1, \dots, x_n$ are all contained in a ball of radius $r$, then the algorithm outputs $1$ with probability at most $\alpha$. In addition, this algorithm is $(3\eps, 3\delta)$-differentially private even in the worst case.
\end{lemma}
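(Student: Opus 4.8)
The plan is to reduce everything to the single-round acceptance probability $q=q(x_1,\dots,x_n)$: the probability that one pass of the \textbf{for} loop in Algorithm \ref{alg:dp_estimate_1} returns a point $p$ or $G_1$ rather than rejecting. The rounds are i.i.d., and the loop runs $X\sim Geom(1/N)$ times, so the algorithm returns $G_2$ exactly when all $X$ rounds reject; hence $\BP[G_2]=\BE_X\bigl[(1-q)^X\bigr]=\frac{1-q}{1+(N-1)q}=:h(q)$, a decreasing function of $q$. Both claims of the lemma then reduce to (i) a lower bound on $q$ under the concentration hypothesis and (ii) control of how $q$, and then $h$, move under a change of one user.

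For accuracy, suppose at least $2n/3$ of the points lie in a ball of radius $r$ about some $x$; let $S$ be the corresponding index set and $\Omega=\bigcap_{i\in S}B_i$. By Lemma \ref{lem:volume} applied to $S$ (which has $|S|\ge 2n/3$), $\mathrm{Vol}(\Omega)\ge e^{-10\sqrt{\log n}}\,V_B$, and every $p\in\Omega$ has $f(p)\ge 2n/3$, so the per-point acceptance probability $\frac13\cdot\frac{n}{f(p)}e^{\eps(\min(f(p),2n/3)-2n/3)}$ is $\ge\frac13$ there. Since $n\ge C\eps^{-1}\log\delta^{-1}$ forces $e^{2\eps n/3}\ge 4/\delta$, a single round picks an index in $S$ with probability $\ge\frac{(2n/3)e^{2\eps n/3}}{ne^{2\eps n/3}+4n/\delta}\ge\frac13$, then samples $p\in\Omega$ with probability $\ge e^{-10\sqrt{\log n}}$, and then accepts with probability $\ge\frac13$; thus $q\ge\frac19 e^{-10\sqrt{\log n}}$. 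Plugging into $h(q)\le\frac{1}{(N-1)q}$ and using $N=\Theta(e^{10\sqrt{\log n}}\alpha^{-1})$ gives $\BP[G_2]\le O(\alpha)$.

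For privacy, I would first write $q$ in closed form. Summing the round over $i\le n$ and using that every $p$ lies in exactly $f(p)$ of the balls $B_i$,
\[ q=\frac{n}{3Z}\!\left(\frac{I(X)}{V_B}+\frac{4}{\delta}\right),\qquad I(X):=\int_{\BR^d}e^{\eps\min(f(p),\,2n/3)}\,\mathbf{1}[f(p)\ge 1]\,dp, \]
where $Z=ne^{2\eps n/3}+4n/\delta$ does not depend on the data. Changing one user alters $f(p)$ by at most $1$ everywhere, so on the region where $f\ge 1$ before and after the integrand of $I$ moves by a factor in $[e^{-\eps},e^{\eps}]$, and the only other contribution is from the at most $V_B$ volume of points whose count flips between $0$ and $1$, which sit inside a single ball and contribute $\le e^{\eps}$ each. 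Hence $I(X')\le e^{\eps}I(X)+2V_B$, so $q(X')\le e^{\eps}q(X)+\frac{2n}{3Z}$, and $\frac{2n}{3Z}=\frac{\delta}{2}\cdot\frac{4n}{3Z\delta}\le\frac{\delta}{2}\,q(X)$, i.e.\ the additive slack is dominated by the garbage-bucket part of $q$. Thus $q(X')\le(e^{\eps}+\tfrac{\delta}{2})q(X)$, and symmetrically, so neighboring datasets have single-round acceptance probabilities within a multiplicative factor $e^{O(\eps+\delta)}$ (and within $e^{\eps}$ after reducing WLOG to $\eps\ge\delta$ exactly as in the proof of Theorem \ref{thm:exponential_main}).

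It remains to convert this into $(3\eps,3\delta)$-differential privacy of the binary output. Since $\BP[G_2]=h(q)=\frac{1-q}{1+(N-1)q}$ and $\BP[\neg G_2]=\frac{Nq}{1+(N-1)q}$, and since $q\le\frac12$ always by Proposition \ref{prop:valid_prob}, a short calculation shows both are within a multiplicative $e^{O(\eps)}$ factor whenever $q$ moves by a multiplicative $e^{\eps}$ factor --- the explicit rational form of $h$ is what prevents the large constant $N$ from blowing up the ratio. This last step is where I expect the main difficulty: one must check that the $O(\delta)$ ``additive'' slack in the per-round comparison really is negligible next to $q$ itself (hence the role of the garbage bucket in simultaneously lower-bounding $q$ and dominating that slack), and one must handle the regime $\eps<\delta$ separately, just as in Theorem \ref{thm:exponential_main}. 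The remaining bookkeeping is routine.
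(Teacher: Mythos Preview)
Your proposal is correct and follows the paper's proof closely: both reduce to the single-round acceptance probability $q$, use Lemma~\ref{lem:volume} for the accuracy half exactly as you do, and for privacy write $q$ as (a constant times) $\int e^{\eps\min(f(p),2n/3)}\mathbf 1[f(p)\ge1]\,dp$ plus a data-independent garbage-bucket term, then track how this moves under a one-user change.

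The one genuine difference is in how the additive slack is handled. The paper bounds the contribution of the $f\in\{0,1\}$ flip region by $e^{\eps(1-2n/3)}$ per unit volume and uses the hypothesis $n\ge C\eps^{-1}\log\frac1{\alpha\eps\delta}$ to make this $\ll\delta/N$; it then proves a standalone proposition showing that $q'\in[e^{-\eps}q-\delta/N,\,e^\eps q+\delta/N]$ implies the required $(3\eps,3\delta)$ bounds on both $\frac{Nq}{(N-1)q+1}$ and $\frac{1-q}{(N-1)q+1}$. You instead lower-bound $q$ by its garbage-bucket part $\frac{4n}{3Z\delta}$ and absorb the additive $O(V_B)$ change in $I$ as a multiplicative $(1+O(\delta))$ on $q$, reducing (after the WLOG $\eps\ge\delta$) to a purely multiplicative perturbation of $q$ and hence of $h(q)$. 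Your route is a bit slicker---it does not use the lower bound on $n$ in the privacy argument at all and avoids the separate proposition---while the paper's route makes the role of $N$ in converting $\delta/N$ into $\delta$ explicit. One small slip: your bound $I(X')\le e^\eps I(X)+2V_B$ should carry an $e^\eps$ on the additive term as well (the flip region has integrand $e^\eps$, not $1$), but this is harmless since $e^\eps\le 2$.
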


\begin{proof}
    By scaling, we may assume that $r = 1$.
    We first prove worst-case differential privacy. To do so, we will need the following auxiliary result, for which the proof is straightforward yet somewhat tedious.
    
\begin{proposition} \label{prop:ez_dp_bash}
    Suppose that $0 \le q, q' \le 1/2$ and $0 < \eps, \delta < \frac{1}{3}$ satisfy $e^{-\eps} \cdot q - \frac{\delta}{N} \le q' \le e^{\eps} \cdot q + \frac{\delta}{N},$ where $N$ is sufficiently large. Then, $e^{-\eps} \cdot \frac{N \cdot q}{(N-1)q+1} - \delta \le \frac{N \cdot q'}{(N-1)q'+1} \le e^{\eps} \cdot \frac{N \cdot q}{(N-1)q+1} + \delta$, and $e^{-3\eps} \cdot \frac{1-q}{(N-1)q+1} - 2\delta \le \frac{1-q'}{(N-1)q'+1} \le e^{2\eps} \cdot \frac{1-q}{(N-1)q+1} + 3\delta$.
\end{proposition}
    
\begin{proof}
    First, suppose that $q' = e^{\eps} \cdot q + \frac{\delta}{N}.$ Then,
\begin{align*}
    \frac{N \cdot q'}{(N-1) \cdot q' + 1}
    &\le \frac{N \cdot q'}{(N-1) \cdot q + 1} \\
    &= \frac{N \cdot (e^{\eps} \cdot q + \frac{\delta}{N})}{(N-1) \cdot q + 1} \\
    &= e^{\eps} \cdot \frac{N \cdot q}{(N-1) \cdot q + 1} + \frac{\delta}{(N-1) \cdot q + 1} \\
    &\le e^{\eps} \cdot \frac{N \cdot q}{(N-1) \cdot q + 1} + \delta.
\end{align*}
    Likewise, suppose that $q' = e^{-\eps} \cdot q - \frac{\delta}{N}.$ Then,
\begin{align*}
    \frac{N \cdot q'}{(N-1) \cdot q' + 1}
    &\ge \frac{N \cdot q'}{(N-1) \cdot q + 1} \\
    &= \frac{N \cdot (e^{-\eps} \cdot q - \frac{\delta}{N})}{(N-1) \cdot q + 1} \\
    &= e^{-\eps} \cdot \frac{N \cdot q}{(N-1) \cdot q + 1} - \frac{\delta}{(N-1) \cdot q + 1} \\
    &\ge e^{-\eps} \cdot \frac{N \cdot q}{(N-1) \cdot q + 1} - \delta.
\end{align*}
    The first part of the proposition now follows from the fact that $\frac{N \cdot x}{(N-1) \cdot x + 1}$ is a strictly increasing function on the range $\left[-\frac{1}{N}, \infty\right)$.

    For the second part of the proposition, we first suppose that $q' = e^{\eps} \cdot q + \frac{\delta}{N}.$ In this case, note that $(N-1)q' \le e^{\eps} \cdot (N-1) \cdot q + \delta$, and that $1-q' = 1-(e^{\eps} q + \delta) = (1-q)-(e^{\eps}-1) q - \delta \ge (2-e^{\eps}) (1-q) - \delta,$ where the last inequality follows from $q \le \frac{1}{2}$ so $1-q \ge q$. Then, as long as $\eps \le \frac{1}{3}$, $2-e^{\eps} \ge e^{-2 \eps}$, so $1-q' \ge e^{-2\eps} (1-q)-\delta$. Thus,
\begin{align*}
    \frac{1-q'}{(N-1) q' + 1} 
    &\ge \frac{e^{-2\eps} \cdot (1-q) - \delta}{e^{\eps} \cdot (N-1) \cdot q + \delta + 1} \\
    &\ge \frac{e^{-2\eps} \cdot (1-q)}{e^{\eps} \cdot (N-1) \cdot q + 1} - 2 \delta\\
    &\ge e^{-3 \eps} \cdot \frac{1-q}{(N-1) \cdot q + 1} - 2 \delta.
\end{align*}
    Above, the second line follows from the fact that $\frac{a}{b} - \frac{a-\delta}{b+\delta} = \frac{(a+b)}{b(b+\delta)} \cdot \delta \le \frac{a+b}{b^2} \cdot \delta \le 2 \delta$ whenever $a \le 1 \le b$.
    Similarly, suppose that $q' = e^{-\eps} \cdot q - \frac{\delta}{N}$. Then, $(N-1)q' \ge e^{-\eps} \cdot (N-1) \cdot q - \delta$ and $1-q' = 1-(e^{-\eps}q-\delta)=(1-q)+(1-e^{-\eps})q-\delta \le (2-e^{-\eps}) \cdot (1-q) + \delta \le e^{\eps} \cdot (1-q) + \delta$. Thus,
\begin{align*}
    \frac{1-q'}{(N-1) q' + 1} 
    &\le \frac{e^{\eps} \cdot (1-q) + \delta}{e^{-\eps} \cdot (N-1) \cdot q - \delta + 1} \\
    &\le \frac{e^{\eps} \cdot (1-q)}{e^{-\eps} \cdot (N-1) \cdot q + 1} + 3 \delta \\
    &\le e^{2\eps} \cdot \frac{1-q}{(N-1) \cdot q + 1} + 3 \delta.
\end{align*}
    Above, the second line follows from the fact that $\frac{a+\delta}{b-\delta} - \frac{a}{b} = \frac{(a+b)}{b(b-\delta)} \cdot \delta \le \frac{a+b}{2/3 \cdot b^2} \cdot \delta \le 3 \delta$ whenever $a \le 1 \le b$ and $\delta \le \frac{1}{2}$.
    Thus, the second part of the proposition follows from the fact that $\frac{1-x}{(N-1) \cdot x + 1}$ is a strictly decreasing function on the range $\left[-\frac{1}{N}, \infty\right)$.
\end{proof}

    We now return to proving Lemma \ref{lem:analysis_G2}.
    First, suppose that a single round of the algorithm accepts with probability $q$. Then, note that $\BP[Geom(q) \le Geom(r)] = \frac{q}{q+r-qr}$ for any $q, r > 0$, so by setting $r =1/N$, the probability of the modified algorithm outputting $0$ is $\frac{q}{q+1/N-q/N} = N \cdot \frac{q}{(N-1)q + 1},$ and the probability of the modified algorithm outputting $1$ is $\frac{1-q}{(N-1)q+1}$.
    
    We first prove worst-case privacy. For simplicity, define $g(p) = \min(f(p), 2n/3)$. For a single round of rejection sampling, conditioned on not selecting $i = n+1$ (for which the probability is independent of the dataset $x_1, \dots, x_n$), the probability of acceptance is $$\frac{1}{3} \cdot \frac{1}{V_B} \cdot \int_{\BR^d} e^{\eps \cdot (g(p)-2n/3)} \cdot \textbf{1}_{f(p) \ge 1}.$$ So, changing one of the points $x_1, \dots, x_n$ does not affect $e^{\eps(f(p))}$ by more than a $e^{\pm \eps}$ factor, except for points that have $f(p)$ go from $1$ to $0$ or vice versa. Then, as long as $n \ge C \cdot \frac{1}{\eps} \cdot \log \frac{1}{\alpha \cdot \delta}$ for a sufficiently large constant $C$, $e^{\eps (g(p)-2n/3)} \le e^{\eps(1-2n/3)} \ll \frac{\alpha \cdot \delta}{n} \ll \frac{\delta}{N}$ for any point $p$ with $f(p) \le 1$, where we are assuming WLOG that $\eps \ge \delta$. Since the total volume of points $p$ with $f(p)$ going from $1$ to $0$ is at most $\text{Vol}(B)$ (and similar for $0$ to $1$), overall we have that if $q_{\textbf{x}}$ is the probability of acceptance in a single round of rejection sampling for $\textbf{x} = x_1, \dots, x_n,$ then if $\textbf{x}, \textbf{x'}$ differ in at most one point then $q_{\textbf{x}} \cdot e^{- \eps} - \frac{\delta}{N} \le q_{\textbf{x'}} \le q_{\textbf{x}} \cdot e^{\eps} + \frac{\delta}{N}.$
    %We first prove worst-case privacy. For simplicity, define $g(p) = \min(f(p), 2n/3)$. First, we consider how changing a single point affects the 
    %Note that for a single round of rejection sampling, the 
    %probability of acceptance for a single round of rejection sampling.
    %is $$\frac{1}{V_B} \cdot \int_{\BR^d} e^{\eps \cdot (g(p)-2n/3)}.$$ So, 
    %Indeed, changing one of the points $x_1, \dots, x_n$ does not affect $e^{\eps \cdot g(p)}$ by more than a $e^{\pm \eps}$ factor, except for points that have $f(p)$ go from $1$ to $0$ or vice versa. If we condition on the algorithm picking such a point $p$,, as long as $n \ge C \cdot \frac{1}{\eps} \log (\alpha \cdot \eps \cdot \delta)^{-1}$ for a sufficiently large constant $C$, $e^{\eps (g(p)-2n/3)} \le e^{\eps(1-2n/3)} \ll \frac{\alpha \cdot \delta}{n} \ll \frac{\delta}{N}$ for any point $p$ with $f(p) \le 1$. The 
    %Since the total volume of points $p$ with $f(p)$ going from $1$ to $0$ is at most $V_B$ (and similar for $0$ to $1$), overall we have that if $q_{\textbf{x}}$ is the probability of acceptance in a single round of rejection sampling for $\textbf{x} = x_1, \dots, x_n,$ then if $\textbf{x}, \textbf{x'}$ differ in at most one point then $q_{\textbf{x}} \cdot e^{- \eps} - \frac{\delta}{N} \le q_{\textbf{x'}} \le q_{\textbf{x}} \cdot e^{\eps} + \frac{\delta}{N}.$
    
    Therefore, by Proposition \ref{prop:ez_dp_bash}, if $p_0$ is the probability that the modified algorithm outputs $0$ on input $\textbf{x}$ and $p_0'$ is the probability that the modified algorithm outputs $0$ on input $\textbf{x}'$, then $e^{-\eps} \cdot p_0 - \delta \le p_0' \le e^{\eps} \cdot p_0 + \delta$. Likewise, if $p_1$ is the probability that the modified algorithm outputs $1$ on input $\textbf{x}$ and $p_1'$ is the probability that the modified algorithm outputs $1$ on input $\textbf{x}'$, then $e^{-3\eps} \cdot p_1 - 2\delta \le p_1' \le e^{2\eps} \cdot p_1 + 3\delta$. (Note that $q_{\textbf{x}}, q_{\textbf{x}'} \le \frac{1}{2}$ since we never accept with probability more than $1/2$, so we may apply Proposition \ref{prop:ez_dp_bash}.) Thus, our modified algorithm is $(3\eps, 3\delta)$-differentially private, even for worst-case datasets.
    
    Next, we prove that the modified algorithm outputs $1$ with probability at most $\alpha$, assuming that at least $2/3$ of the points $x_1, \dots, x_n$ are all contained in a ball of radius $1$. In this case, if $n \ge C \cdot \frac{1}{\eps} \log \frac{1}{\alpha \cdot \eps \cdot \delta}$, then $e^{\eps \cdot 2n/3} \ge (\alpha \cdot \eps \cdot \delta)^{-1} \ge \frac{4}{\delta}$, so a single round of the rejection sampling algorithm first chooses some point $i \in [n]$ with probability at least $\frac{1}{2}$, so we select a point $i$ with $\|x_i-x\|_2 \le 1$ with probability at least $\frac{1}{3}$. Conditioned on this event, we select a point $p \in \bigcap_{i: \|x_i-x\|_2 \le 1} B_i$ with probability at least $\frac{1}{2} \cdot e^{-10 \sqrt{\log n}}$ by Lemma \ref{lem:volume}. Finally, conditioned on this event, we have that $g(p) = 2n/3,$ so we accept with probability $\frac{1}{3} \cdot \frac{n}{f(p)} \cdot e^{g(p)-2n/3} \ge \frac{1}{3}$. Therefore, each round of rejection sampling succeeds with at least $q = \frac{1}{18 \cdot e^{10 \sqrt{\log n}}}$ probability, so the probability of the modified algorithm outputting $1$ is at most $\frac{1-q}{(N-1)q+1} \le \frac{1}{N \cdot q} \le \alpha.$
\end{proof}

We now prove Theorem \ref{thm:main}.

\begin{proof}[Proof of Theorem \ref{thm:main}]
    By scaling, we may assume WLOG that $r = 1$, and we may also assume that $\eps \ge \delta$, for the same reason as in Theorem \ref{thm:exponential_main}. In addition, we may assume that $\alpha \ge \delta$. To see why, suppose we have an algorithm succeeding with $1-\delta$ probability if at least $2/3$ of the points $x_1, \dots, x_n$ are in a ball of radius $1$. Now, in $O(nd)$ time we can verify whether the coordinate-wise median of $x_1, \dots, x_n$ is within $1$ of at least $2/3$ of the points $x_1, \dots, x_n$. In this specific case, we condition on the algorithm returning a point within $O(\sqrt{d})$ of the coordinate-wise median. This happens with $1-\delta$ probability so we can repeat the algorithm until we get such a point, which occurs an expected $O(1)$ times, so the expected runtime is the same. In addition, this changes the output distribution by at most $\delta$ in total variation distance, so the algorithm is still $(O(\eps), O(\delta))$-differentially private. Finally, if at least $2/3$ of the points $x_1, \dots, x_n$ are in a ball of radius $1/10$, Theorem 1 in Narayanan~\cite{1centerclustering} proves that the coordinate-wise median of $x_1, \dots, x_n$ is within $1$ of those points, which means that this modified algorithm in fact succeeds with probability $1$. By scaling the algorithm by a factor of $10$, we have the algorithm now succeeds with probability $1$ if at least $2/3$ of the points $x_1, \dots, x_n$ are in a ball of radius $1$.
    
    Because of our assumptions, we may assume $n \ge C \cdot \frac{1}{\eps} \log \frac{1}{\eps \cdot \delta \cdot \alpha},$ since $\alpha, \eps \ge \delta$.
    
    First, we show that our algorithm successfully finds a point within $O(\sqrt{d})$ of $x$ with probability at least $1-2\alpha$, if at least $2/3$ of the points $x_i$ are contained in a ball of radius $1$. To see why, note that the overall distribution of our output is the same as first running the modified algorithm in Lemma \ref{lem:analysis_G2}, and if the output is $1$ returning $G_2$ and otherwise running the algorithm of subsection \ref{subsec:exponential}. The probability of outputting $G_2$ is at most $\alpha$ by Lemma \ref{lem:analysis_G2}, and conditioned on not outputting $G_2$, the probability of success is at least $1-\alpha$ by Proposition \ref{prop:exponential_accurate}. Therefore, the algorithm succeeds with probability at least $1-2\alpha$.
    
    Next, we bound the expected runtime even for worst-case datasets. Note that in each round of the algorithm, we pick a point $i \in [n+1]$, which can be done in $O(n)$ time\footnote{It can be done faster, but $O(n)$ is sufficient.}. We then have to pick a uniform point $p$ in the ball of radius $\sqrt{d}$ around $x_i$ if we choose $i \le n$, which is well-known to be doable in $O(d)$ time\footnote{One way of doing this is sampling a $d$-dimensional Gaussian and normalizing to $\ell_2$ norm $1$, then scaling the norm to $x \in [0, 1]$ with density proportional to $x^{d-1}$.}. To decide whether we wish to accept or reject, we need to compute $f(p)$, which takes time $O(n \cdot d)$ since we need to compute the distances between $p$ and each point $x_i$ for $1 \le i \le d.$ So overall, a single step of the rejection sampling algorithm takes time $O(n \cdot d)$. We repeat this process an expected $O(N)$ times, so the total runtime is $O(N \cdot n \cdot d)$. As a note, the number of times we repeat this loop decays geometrically, so our runtime is $O(N \cdot n \cdot d \cdot K)$ with probability at least $1-e^{\Theta(k)}$.
    
    Finally, to check privacy, we recall that our algorithm can be thought of as first running the modified algorithm in Theorem \ref{thm:exponential_main}, and if the output is $0$, running the algorithm of subsection \ref{subsec:exponential}. These procedures are $(3\eps, 3 \delta)$- and $(\eps, \delta)$-differentially private (by Lemma \ref{lem:analysis_G2} and Proposition \ref{prop:exponential_dp}, respectively). Therefore, by weak composition of differential privacy, we have that the overall algorithm is $(4 \eps, 4 \delta)$-differentially private.
\end{proof}

We now prove Corollary \ref{cor:main}.

\begin{proof}
    %We will prove a slightly more general statement here. Let $\alpha \le \beta \le 1/3$, where $\beta$ is a parameter that may be chosen by the algorithm. 
    The algorithm simply runs $O\left(\log \frac{1}{\alpha}\right)$ parallel copies of Algorithm \ref{alg:dp_estimate_1} but with the failure probability $\alpha$ replaced by $1/3$. This gives us $k = O\left(\log \frac{1}{\alpha}\right)$ points $p_1, \dots, p_k$, each generated by an $(\eps', \delta')$-differentially private algorithm with failure probability $1/3$, according to Theorem \ref{thm:main}. Here, we will set $\eps' = \Theta(\eps)/\min\left(\log \frac{1}{\alpha}, \sqrt{\log \frac{1}{\alpha} \cdot \log \frac{1}{\delta}}\right)$ and $\delta' = \Theta(\delta)/\left(\log \frac{1}{\alpha}\right)$. The final estimate $p$ will simply be the coordinate-wise median of the points $p_1, \dots, p_k$ (where we discard any $p_i$ which is a garbage bucket $G_1$ or $G_2$). 
    
    First, by Theorem \ref{thm:main} and our new values $\eps', \delta'$, the sample complexity is 
$$n = O((\eps')^{-1} \log (\delta')^{-1}) = O\left(\frac{1}{\eps} \cdot \min\left(\log \frac{1}{\alpha}, \sqrt{\log \frac{1}{\alpha} \cdot \log \frac{1}{\delta}}\right) \cdot \log \left(\frac{1}{\delta} \cdot \log \frac{1}{\alpha}\right)\right).$$ 
    We note that this is always at most $O(\frac{1}{\eps} \log \frac{1}{\delta} \log \frac{1}{\alpha}),$ since $\log\left(\frac{1}{\delta} \log \frac{1}{\alpha}\right) = O(\log \frac{1}{\delta})$ unless $\frac{1}{\alpha} \ge e^{1/\delta}$, in which case $\sqrt{\log \frac{1}{\alpha} \log \frac{1}{\delta}} \cdot \log \left(\frac{1}{\delta} \log \frac{1}{\alpha}\right) = \tilde{O}\left(\sqrt{\log \frac{1}{\alpha}}\right) = O\left(\log \frac{1}{\alpha}\right)$.
    %Ignoring the $\log \frac{1}{\alpha}$ factor as part of $\tilde{O}$ notation, this is $\tilde{O}(\frac{1}{\eps} \log \frac{1}{\delta})$.
    
    Next, the runtime is simply $k$ times the runtime of generating a single point $p_i$, since coordinate-wise medians can be computed in $O(nd)$ total time. Thus, by Theorem \ref{thm:main}, the runtime is $O\left(n e^{\sqrt{10 \log n}} \cdot d \cdot \log \frac{1}{\alpha}\right).$
    
    To bound privacy, note that since each $p_i$ is independently generated by an $(\eps', \delta')$ differentially private algorithm, and the final point $p$ only depends on the $p_i$'s, we may use either weak or strong composition to directly get the privacy guarantees of either $\left(O\left(\eps' \cdot \log \frac{1}{\alpha}\right), O\left(\delta' \cdot \log \frac{1}{\alpha}\right)\right)$ or $\left(O\left(\eps' \cdot \sqrt{\log \frac{1}{\alpha} \cdot \log \frac{1}{\delta}} + (\eps')^2 \cdot \log \frac{1}{\alpha}\right), O\left(\delta' \cdot \log \frac{1}{\alpha} + \delta\right)\right)$. Consequently, by setting $\delta' = \Theta(\delta)/\left(\log \frac{1}{\alpha}\right)$ and $\eps' = \Theta(\eps)/\min\left(\log \frac{1}{\alpha}, \sqrt{\log \frac{1}{\alpha} \cdot \log \frac{1}{\delta}}\right),$ we obtain an $(\eps, \delta)$-differential privacy bound.
    
    Finally, if at least $2n/3$ of the points $x_1, \dots, x_n$ are in some ball of radius $r$ centered at some unknown $x$, then each $p_i$ is within $O(\sqrt{d} \cdot r)$ of $x$ with probability at least $2/3$. Therefore, by a Chernoff bound, at least $3/5$ of the points $p_1, \dots, p_k$ are within $O(\sqrt{d} \cdot r)$ of $x$ with probability at least $1-\alpha$, assuming $k \ge C \log \frac{1}{\alpha}$ for a sufficiently large constant $C$. Conditioned on this, it is known that the coordinate-wise median is deterministically within $O(\sqrt{d} \cdot r)$ of $x$ as well \cite{1centerclustering}, so the algorithm succeeds with probability at least $1-\alpha$.
\end{proof}

Finally, we provide the pseudocode corresponding to Corollary \ref{cor:main}, in Algorithm \ref{alg:dp_estimate_2}.

\begin{figure}
\centering
\begin{algorithm}[H]
    \caption{: \textbf{\textsc{DP-Estimate-2}}($x_1, \dots, x_n, \alpha, \eps, \delta$).} \label{alg:dp_estimate_2}
    \begin{algorithmic}[1] % The number tells where the line numbering should start
            \State Set $k = O(\log \frac{1}{\alpha})$.
            \State Set $\eps' = \Theta(\eps)/\min\left(\log \frac{1}{\alpha}, \sqrt{\log \frac{1}{\alpha} \cdot \log \frac{1}{\delta}}\right)$, $\delta' = \Theta(\delta)/\left(\log \frac{1}{\alpha}\right)$
            \For{$i=1$ to $k$}
                \State $p_i = \textsc{DP-Estimate-1}(x_1, \dots, x_n, \frac{1}{3}, \eps', \delta')$.
            \EndFor
            \State \textbf{Return} the coordinate-wise median of $p_1, \dots, p_k$.
    \end{algorithmic}
\end{algorithm}
\caption{Differentially private estimation algorithm taking $n = O(\frac{1}{\eps} \log \frac{1}{\delta} \log \frac{1}{\alpha})$ samples, with failure probability $\alpha$ and runtime only $O(n^{1+o(1)} \cdot d \cdot \log \frac{1}{\alpha})$.}
\label{fig:dp_estimate_2}
\end{figure}

\subsection{Omitted Proofs from Subsection \ref{subsec:interpolation}} \label{subsec:interpolation_proofs}

In this subsection, we prove Theorem \ref{thm:interpolation}.
First, we describe the ``Fast Johnson-Lindenstrauss'' random rotation procedure \cite{fastjl}. First, note that we may assume WLOG that $d$ is a power of $2$, by replacing $d$ with the largest power of $2$ greater than $d$ if necessary. Now, the random rotation procedure selects a diagonal matrix $D \in \BR^{d \times d}$, where each diagonal entry $D_{ii}$ is i.i.d. selected uniformly from $\{-1, 1\}$. In addition, we also define $H \in \BR^{d \times d}$ to be the $d$-dimensional Hadamard matrix. (This matrix is only defined when $d$ is a power of $2$, which is why we need this assumption). The final random rotation procedure takes a vector $v$ and multiplies it by the matrix $M := \frac{1}{\sqrt{d}} \cdot H \cdot D \in \BR^{d \times d}$. In our setting, we will use the same random matrix $D$ for every point $x_1, \dots, x_n \in \BR^d$. It is well known \cite{fastjl} that $M$ is an invertible linear map, and importantly preserves $\ell_2$ norms, i.e., $\|Mx\|_2 = \|x\|_2$ for all $x \in \BR^d$.

Next, we formally state the main lemma for the Fast Johnson-Lindenstrauss method.

\begin{lemma} \label{lem:fastjl} \cite{fastjl}
    Suppose that a vector $v$ has length at most $1$. Then, after performing a random Fast Johnson-Lindenstrauss rotation on $v$, we obtain a vector $v' = \frac{1}{\sqrt{d}} \cdot H \cdot D \cdot v$ such that $\|v'\|_2 = \|v\|_2$ and each coordinate of $v'$ has magnitude bounded by $O(\sqrt{\log (d/\alpha)/d})$ (i.e., $\|v'\|_\infty < O(\sqrt{\log (d/\alpha)/d})$ with probability at least $1-\alpha$). In addition, multiplication by $H \cdot D$ and by $(H \cdot D)^{-1}$ can be done in $O(d \log d)$ time.
\end{lemma}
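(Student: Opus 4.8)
The plan is to establish the three assertions of the lemma separately: exact $\ell_2$-preservation (which is deterministic), the high-probability $\ell_\infty$ bound (the only probabilistic part), and the $O(d\log d)$ running time for multiplication by $HD$ and by $(HD)^{-1}$. This is essentially the argument of Ailon and Chazelle~\cite{fastjl}, so I will only sketch it.

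For norm preservation, I would simply observe that $\tfrac{1}{\sqrt d} H$ is an orthogonal matrix, since the $d\times d$ Hadamard matrix satisfies $H H^\top = d \cdot I$ and is symmetric, and that $D$ is orthogonal since its diagonal entries lie in $\{-1,+1\}$. Hence $M = \tfrac{1}{\sqrt d} H D$ is a composition of orthogonal maps, so $\|Mv\|_2 = \|v\|_2$ for every $v \in \BR^d$ and every realization of $D$.

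The heart of the argument is the coordinate bound. Fixing an output coordinate $k \in [d]$, write $v'_k = \tfrac{1}{\sqrt d}\sum_{j=1}^d H_{kj} D_{jj} v_j$. Because every Hadamard entry has $|H_{kj}| = 1$ and the $D_{jj}$ are i.i.d.\ uniform on $\{-1,+1\}$, the coefficients $\sigma_j := H_{kj} D_{jj}$ are again i.i.d.\ Rademacher, so $v'_k$ is a Rademacher sum of the fixed vector $(v_j/\sqrt d)_{j}$, whose squared $\ell_2$ norm is $\|v\|_2^2/d \le 1/d$. Hoeffding's inequality then gives $\Pr[|v'_k| > t] \le 2\exp\!\big(-t^2 d / 2\big)$; choosing $t = \Theta\!\big(\sqrt{\log(d/\alpha)/d}\big)$ makes the right-hand side at most $\alpha/d$, and a union bound over the $d$ coordinates yields $\|v'\|_\infty = O\!\big(\sqrt{\log(d/\alpha)/d}\big)$ with probability at least $1-\alpha$.

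For the runtime, multiplication by the diagonal $D$ costs $O(d)$, and multiplication by $H$ is exactly the Walsh--Hadamard transform, computable in $O(d\log d)$ time by the standard divide-and-conquer (butterfly) recursion; the inverse satisfies $\big(\tfrac{1}{\sqrt d} H D\big)^{-1} = \tfrac{1}{\sqrt d}\, D H$ (using $H^{-1} = \tfrac1d H$ and $D^{-1} = D$), which is again one Walsh--Hadamard transform plus one diagonal scaling, hence also $O(d\log d)$. There is essentially no obstacle in this proof; the one subtlety worth flagging is that the concentration step must be applied after fixing the Hadamard row $H_{k,\cdot}$, so that the randomness is purely in $D$ and the coefficients $\sigma_j$ really are i.i.d.\ signs — this is precisely what lets a single Hoeffding estimate hold uniformly over all $d$ output coordinates.
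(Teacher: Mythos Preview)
Your proposal is correct and is exactly the standard argument from Ailon and Chazelle~\cite{fastjl}; the paper itself does not give a proof of this lemma but simply cites it as a known result, so there is nothing to compare against beyond noting that your sketch matches the original source.
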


We now prove Theorem \ref{thm:interpolation}.

\begin{proof}[Proof of Theorem \ref{thm:interpolation}]
    Again, we may assume WLOG that $r = 1$.
    Set $\alpha' = \alpha/(n^2 \cdot d)$, and fix a random matrix $M := \frac{1}{\sqrt{d}} \cdot H \cdot D$, where $H$ is the Hadamard matrix and $D$ is a random diagonal matrix. In addition, fix $1 \le k \le \sqrt{d}$ as a power of $2$, and set $\eps' = \Theta\left(\eps/(k \sqrt{\log \frac{1}{\delta}})\right)$ and $\delta' = \Theta(\delta/k^2)$. For each point $x_i$ and each $1 \le s \le k^2$, we define $x_i^{(s)}$ to be the projection of $M \cdot x_i$ onto coordinates $(s-1) \cdot \frac{d}{k^2} + 1, \dots, s \cdot \frac{d}{k^2}.$ Likewise, define $x^{(s)}$ to be the projection of $M \cdot x$ onto coordinates $(s-1) \cdot \frac{d}{k^2} + 1, \dots, s \cdot \frac{d}{k^2},$ where we recall that at least $2/3$ of the points $x_i$ are within $1$ of $x$.
    
    Suppose that $n \ge C \cdot \frac{1}{\eps'} \cdot \log \frac{1}{\delta'} = \Theta(C) \cdot \frac{k}{\eps} \log \frac{k}{\delta} \sqrt{\log \frac{1}{\delta}}$. Then, by Lemma \ref{lem:fastjl} and the union bound, we obtain that if at least $2/3$ of the points $x_i$ are within $1$ of some point $x$ in Euclidean distance, then with probability at least $1-\alpha$, each point $x_i$ with $\|x_i-x\|_2 \le 1$ satisfies $\|M x_i - M x\|_\infty \le O(\sqrt{\log (d/\alpha')/d}) = O(\sqrt{\log (dn/\alpha)/d})$. Therefore, $\|x_i^{(s)}-x^{(s)}\|_2 \le O(\sqrt{\log (dn/\alpha)/d} \cdot \sqrt{d/k^2}) = O(\sqrt{\log (dn/\alpha)}/k)$ for at least $2/3$ of the points $i$ and all $s \in [k^2]$, with probability at least $1-\alpha$.
    
    Fix $s \in [k^2]$, and suppose that $\|x_i^{(s)}-x^{(s)}\|_2 \le O(\sqrt{\log (dn/\alpha)}/k)$ for at least $2/3$ of the points $i$.
    Since $n \ge C \cdot \frac{1}{\eps'} \cdot \log \frac{1}{\delta'}$, we can use Theorem \ref{thm:main} to obtain an $(\eps', \delta')$-differentially private algorithm on $\{x_1^{(s)}, \dots, x_n^{(s)}\}$ that provides an estimate $\tilde{x}^{(s)}$ of $x^{(s)}$ up to error $O(\sqrt{d/k^2}) \cdot O(\sqrt{\log (dn/\alpha)}/k) = O(\sqrt{d \log (dn/\alpha)}/k^2)$, since each $x_i^{(s)}$ lies in $d/k^2$ dimensions. So, if we perform Theorem \ref{thm:main} with failure probability $\alpha/d$ each time, then with probability $1-2\alpha$, the concatenation $\tilde{x} = (\tilde{x}^{(1)}, \dots, \tilde{x}^{(k^2)})$ satisfies $\|\tilde{x}-Mx\|_2 \le \sqrt{k^2} \cdot O(\sqrt{d \log (dn/\alpha)}/k^2) = O(\sqrt{d \log \frac{dn}{\alpha}}/k)$. Thus, since $M$ is an invertible and $\ell_2$-norm preserving linear map, so is $M^{-1}$, which means $\|M^{-1} \tilde{x}-x\|_2 \le O(\sqrt{d \log \frac{dn}{\alpha}}/k)$. Hence, if our overall algorithm outputs $M^{-1} \cdot \tilde{x}$, our algorithm is accurate up to error $O(\sqrt{d \log \frac{dn}{\alpha}}/k)$.
    
    Next, we verify that the overall algorithm is $(\eps, \delta)$-differentially private. To see this, note that since $D$ is oblivious to (i.e., has no dependence on) the dataset, multiplying the data by $M$ at the beginning and by $M^{-1}$ at the end does not degrade privacy. So, we can bound the privacy using the strong composition theorem, as a concatenation of $k^2$ $(\eps', \delta')$-differentially private functions. Since $\eps' = \Theta\left(\eps/(k\sqrt{\log \frac{1}{\delta}})\right)$ and $\delta' = \Theta(\delta/k^2)$, the strong composition theorem tells us the overall algorithm is $\left(\eps' \cdot \sqrt{2k^2 \log \frac{1}{\delta}} + O(k^2 \cdot (\eps')^2), k^2 \delta' + \delta\right) = (O(\eps), O(\delta))$-differentially private.
    
    Next, we bound the runtime of this algorithm. First, initializing $D$ and multiplying each $x_i$ by $\frac{1}{\sqrt{d}} \cdot H \cdot D$ takes time $O(d \log d \cdot n)$. Next, constructing the points $\{x_i^{(s)}\}$ takes $O(nd)$ time. Next, we apply $k^2$ private algorithms, each in $O(d/k^2)$ space, so Theorem \ref{thm:main} tells us we can do this in time $O(n^{1+o(1)} (d/k^2) \cdot \alpha^{-1}) \cdot k^2 = O(n^{1+o(1)} d \cdot \alpha^{-1})$. Finally, we concatenate our estimates $\tilde{x}^{(s)}$ and apply $M^{-1}$, which takes time $O(d \log d)$. Overall, the runtime is $O(n^{1+o(1)} d \cdot \alpha^{-1} + n d \log d)$.
    
    Now, if we assume $n = \Theta\left(\frac{k}{\eps} \cdot \log \frac{k}{\delta} \cdot \sqrt{\log \frac{1}{\delta}}\right),$ then the error is $e = O(\sqrt{d \log \frac{dn}{\alpha}}/k)$ with probability $1-\alpha$, so $n \cdot e \le O\left(\frac{1}{\eps} \cdot \log \frac{k}{\delta} \cdot \sqrt{\log \frac{1}{\delta}} \cdot \sqrt{d} \cdot \sqrt{\log \frac{dn}{\alpha}}\right) \le O\left(\sqrt{d} \cdot \frac{1}{\eps} \cdot \log \frac{d}{\delta} \cdot \sqrt{\log \frac{1}{\delta}} \cdot \sqrt{\log \frac{dn}{\alpha}}\right)$.
    This proves the algorithm is sufficiently accurate as long as $C \cdot \frac{1}{\eps} \left(\log \frac{1}{\delta}\right)^{3/2} \le n \le \sqrt{d} \cdot \frac{1}{\eps} \left(\log \frac{1}{\delta}\right)^{3/2},$ since we can always choose such a $k$ between $1$ and $\sqrt{d}$, and round it to a power of $2$ if necessary.
    
    Finally, if $C \cdot \frac{1}{\eps} \log \frac{1}{\delta} \le n \le C \cdot \frac{1}{\eps} \left(\log \frac{1}{\delta}\right)^{3/2}$, we can directly use Theorem \ref{thm:main} to get that the error is at most $O(\sqrt{d})$, which is at most $O\left(\sqrt{d} \cdot \frac{1}{\eps} \cdot \left(\log \frac{1}{\delta}\right)^{3/2}/n\right)$. So, the algorithm is also sufficiently accurate in this case.
%Then, if $n \ge C (\eps')^{-1} \log (\delta')^{-1}$, then we can perform an $(\eps', \delta')$-differentially private algorithm on each $\{x_1^{(s)}, \dots, x_n^{(s)}\}$ that estimates $x^{(s)}$ up to error $O(\sqrt{\log (dn/\alpha)/k}).$ Thus, by combining our estimates, we may estimate $H D x$ up to error $O(\sqrt{\log (dn/\alpha)/k}) \cdot O(\sqrt{d/k}) = O(\sqrt{d \cdot \log(dn/\alpha)}/k$. In addition, by strong composition, the algorithm is $(\eps, \delta)$-differentially private, and multiplying each $x_i$ by $HD$ at the beginning does not affect privacy because our choice of $D$ is oblivious to the dataset. Finally, we invert the linear map $HD$, by multiplying our estimate for $HDx$ by $(HD)^{-1} = D \cdot H$ (it is well known that $D^2$ and $H^2$ both equal the identity matrix), which still preserves distances and the privacy guarantee. Thus, we have the following theorem.
%then if $N$ is at least $\Omega(n \cdot e^{10 \sqrt{\log n}})$, by Lemma \ref{lem:volume}, the algorithm succeeds with at least $3/4$ probability. We can amplify this to $1-\alpha$ probability using $O(\alpha^{-1})$ repetitions, if we replace $\eps$ and $\delta$ with $\eps' = \eps/\sqrt{\log \frac{1}{\alpha}}$ and $\delta' = \delta/\sqrt{\log \frac{1}{\alpha}}$. These are log factors so we'll ignore them for now. So, we just need to prove differential privacy.
\end{proof}

\subsection{Omitted Proofs and Pseudocode from Section \ref{sec:user}} \label{subsec:user_proofs}

In this subsection, we prove Theorem \ref{thm:user_main} and Corollary \ref{cor:user_main}, and provide pseudocode for Theorem \ref{thm:user_main}.

\begin{proof}[Proof of Theorem \ref{thm:user_main}]
    First, let us assume that no user is corrupted. In this case, a fixed user $i$ is given $m$ data points $X_{i, 1}, \dots, X_{i, m},$ and computes the mean of their data, which we denote $\bar{X}_i$. Each $X_{i, j}$ has mean $\mu$, and
\begin{align*}
    \BE\left[\|\bar{X}_i-\mu\|_2^2\right] &= \frac{1}{m^2} \left(\sum_{j = 1}^{m} \BE\left[\|X_{i, j}-\mu\|_2^2\right] + \sum_{j \neq j'} \BE\langle X_{i, j}-\mu, X_{i, j'}-\mu \rangle\right) \\
    &\le \frac{1}{m^2} \cdot \left(m \cdot r^2 + (m)(m-1) \cdot \frac{r^2}{m}\right) \\
    &\le \frac{2}{m} \cdot r^2.
\end{align*}

So, by Markov's inequality, $\|\bar{X}_i-\mu\|_2 \le \frac{10}{\sqrt{m}} \cdot r$ with probability at least $\frac{49}{50},$ so by a Chernoff bound, at least $\frac{11}{12} \cdot n$ of the users will have $\|\bar{X}_i-\mu\|_2 \le \frac{10}{\sqrt{m}} \cdot r$ with at least $1-e^{\Omega(n)} \ge 1-\alpha$ probability. Therefore, even if an adversary looks at the users' data and arbitrarily corrupts $n/4$ of the users' data, at least $\frac{2}{3} \cdot n$ of the users will still have $\|\bar{X}_i-\mu\|_2 \le \frac{10}{\sqrt{m}} \cdot r$.

Under this event, by replacing $r$ with $r' = \frac{10}{\sqrt{m}} \cdot r$, we can directly apply Theorem \ref{thm:main} or \ref{thm:interpolation} to obtain the desired result with runtime $O(mnd + n^{1+o(1)} \cdot d \cdot \alpha^{-1} + nd \log d)$, since computing each $\bar{X}_i$ takes time $O(mnd)$. Alternatively, we could apply Corollary \ref{cor:main} or the remark after Theorem \ref{thm:interpolation} to obtain the improved runtime of $O(mnd + n^{1+o(1)} \cdot d \cdot \log \frac{1}{\alpha} + nd \log d)$, at the cost of the number of users $n$ being an additional $\log \frac{1}{\alpha}$ factor larger.
\end{proof}

\begin{proof}[Proof of Corollary \ref{cor:user_main}]
    First, we consider the mean estimation question. In the case where $C \cdot \frac{1}{\eps} \log \frac{1}{\alpha \cdot \delta} \le n \le C \cdot \sqrt{d} \cdot \frac{1}{\eps} \log \frac{1}{\delta},$ we can directly apply Theorem \ref{thm:user_main}. Indeed, since all points $X_{i, j}$ are drawn from a distribution $\mathcal{D}$ supported on a ball of radius $r$, we have that $\BE\left[\|X_{i, j}-\mu\|_2^2\right] \le \BE\left[\|X_{i, j}-x\|_2^2\right] \le r^2$, where $r$ is the center of the ball. In addition, since each $X_{i, j}$ is independent, we have that $\BE\left[\langle X_{i, j}-\mu, X_{i, j'}-\mu\rangle \right] = \langle \BE[X_{i, j}-\mu], \BE[X_{i, j'}-\mu]\rangle = 0$. Therefore, the conditions to apply Theorem \ref{thm:user_main} hold.
    
    In the case where $n \ge C \cdot \sqrt{d \log \frac{1}{\delta}} \cdot \frac{1}{\eps} \cdot \log (m(dn+n^2 \eps^2))$, we can directly apply Corollary 1 in Levy et al.~\cite{levy2021user} to obtain the desired bound (where we ignore all logarithmic factors). Finally, in the potential intermediate regime where $C \cdot \sqrt{d} \cdot \frac{1}{\eps} \log \frac{1}{\delta} \le n \le C \cdot \sqrt{d \log \frac{1}{\delta}} \cdot \frac{1}{\eps} \cdot \log (m(dn+n^2 \eps^2)),$ we can simply ignore all users after the first $C \cdot \sqrt{d} \cdot \frac{1}{\eps} \log \frac{1}{\delta}$, and use our already existing bounds. Since the upper bound is at most an $O(\log (m \cdot d \cdot n))$ factor more than the lower bound, the result still holds in this case since we are ignoring logarithmic factors in the bound.
    
    We now move to our result on the distribution learning question. This will be quite direct from our results on the mean estimation question. Indeed, we can encode each $i \in [d]$ as a one-hot vector $(0, \dots, 1, 0, \dots, 0) \in \BR^d,$ where there is a $1$ precisely at the $i$th position. Then, this distribution is supported on the unit ball around $0$, so we can learn this distribution up to $\ell_2$ error
\[\tilde{O}\left(\frac{\sqrt{d}}{\eps \cdot n \cdot \sqrt{m}} + \frac{1}{\sqrt{m \cdot n}}\right)\]
    with failure probability $\alpha$. However, note that the $\ell_1$ distance is at most $\sqrt{d}$ times the $\ell_2$ distance, and the total variation distance of two distributions is exactly $\frac{1}{2}$ of their $\ell_1$ distance. So, we can learn the distribution $\mathcal{D}$ up to total variation distance
\[\tilde{O}\left(\frac{d}{\eps \cdot n \cdot \sqrt{m}} + \frac{\sqrt{d}}{\sqrt{m \cdot n}}\right)\]
    in an $(\eps, \delta)$ user-level differentially private manner.
\end{proof}

Finally, we include pseudocode for the algorithm corresponding to Theorem \ref{thm:user_main}, in Algorithm \ref{alg:dp_estimate_user}.

\begin{figure}
\centering
\begin{algorithm}[H]
    \caption{: \textbf{\textsc{DP-Estimate-User}}$\left((X = \{X_{i, j}\}_{i=1, j=1}^{n \hspace{0.38cm} m}), \alpha, \eps, \delta, k\right)$.} \label{alg:dp_estimate_user}
    \begin{algorithmic}[1] % The number tells where the line numbering should start
            \For{$i=1$ to $n$}
                \State Let $\bar{X}_i$ be the mean of $X_{i, 1}, \dots, X_{i, m}$
            \EndFor
            \State Sample $D$ uniformly over $\pm 1$-valued diagonal matrices
            \For{$i=1$ to $n$}
                \State Compute $\bar{X}'_i = \frac{1}{\sqrt{d}} \cdot HD \cdot X_i$
                \For{$s=1$ to $k^2$}
                    \State Compute $\bar{X}_i^{(s)}$ as $\bar{X}'_i$ restricted to coordinates $(s-1) \cdot (n/k^2)+1, \dots, s \cdot (n/k^2)$
                \EndFor
            \EndFor
            \If{$k > 1$}
                \State $\eps' = \Theta\left(\eps/(k\sqrt{\log \frac{1}{\delta}})\right),$ $\delta' = \Theta(\delta/k^2)$
            \Else
                \State $\eps' = \eps$, $\delta' = \delta$ \Comment{When $k=1$, there is no need to use any composition.}
            \EndIf
            \For{$s=1$ to $k^2$}
                \State Compute $\tilde{X}^{(s)} = \textsc{DP-Estimate-1}(\bar{X}_1^{(s)}, \dots, \bar{X}_n^{(s)}, \alpha/k^2, \eps', \delta')$ \Comment{We may use \textsc{DP-Estimate-2} instead of \textsc{DP-Estimate-1}.}
            \EndFor
            \State Compute $\tilde{X}$ as the concatenation of all $\tilde{X}^{(s)}$
            \State \textbf{Return} $\sqrt{d} \cdot (HD)^{-1} \cdot \tilde{X}$
    \end{algorithmic}
\end{algorithm}
\caption{User-Level differentially private estimation algorithm taking $O(\frac{1}{\eps} \log \frac{1}{\alpha \cdot \delta}) \le n \le O\left(\sqrt{d} \cdot \frac{1}{\eps} \log \frac{1}{\delta}\right)$ users and $m$ samples in $\BR^d$ per user, and a failure probability $\alpha$. Here, we set $k$ so that $n = \Theta\left(\frac{k}{\eps} \log \frac{k}{\delta} \sqrt{\log \frac{1}{\delta}}\right)$.}
%\caption{D Here, we set $k$ so that $n = \Theta\left(\frac{k}{\eps} \log \frac{k}{\delta} \sqrt{\log \frac{1}{\delta}}\right)$, unless $n = O\left(\frac{1}{\eps} (\log \frac{1}{\delta})^{3/2}\right),$ in which case we set $k=1$. Note that when $k = 1$, we do not change $\eps, \delta$ and do not need to use any composition theorem, so we get the bounds based on applying Theorem \ref{thm:main} (or Corollary \ref{cor:main} if we apply Algorithm \ref{alg:dp_estimate_2} instead of Algorithm \ref{alg:dp_estimate_1}), whereas for $1 < k \le \sqrt{d}$, we get bounds based on applying Theorem \ref{thm:interpolation}.}
\label{fig:dp_estimate_user}
\end{figure}

\section{Missing Proofs: Lower Bounds} \label{sec:lb_proofs}

In this section, we prove all lower bound results, from Section \ref{sec:lb}. In Subsection \ref{subsec:lb_easy_proof}, we prove Lemma \ref{lem:lb_easy}. In Subsection \ref{subsec:lb_main_proof}, we prove Theorem \ref{thm:lb_main}. (We defer one piece of this proof to Subsection \ref{subsec:complex_analysis}). Finally, in Subsection \ref{subsec:robustness_lower}, we state and prove a simple result that one cannot learn the mean of a distribution up to error better than $O\left(\frac{r}{\sqrt{m}}\right)$ if a constant fraction of users are corrupted, even if privacy is not required. This roughly implies that beyond $n \approx \sqrt{d}$ users, having more users does not help with robust mean estimation.

\subsection{Proof of Lemma \ref{lem:lb_easy}} \label{subsec:lb_easy_proof}

In this section, we prove Lemma \ref{lem:lb_easy}.

\begin{proof}
    Suppose otherwise, i.e., there is such an algorithm which we call $M$.
    Consider a ball $B$ of radius $1$ centered at some point $b$, and for any $r > 0$, let $B_r$ represent the radius $r$ ball concentric with $B$. Suppose we have data $\{X_{i, j}\}$ where each point $X_{i, j}$ is arbitrary. Let $p_0$ represent the probability that our user-level differentially private algorithm $M$ outputs a point $x \in B_T$.
    
    Now, for $1 \le i \le n,$ let $p_i$ represent the probability that $M$ outputs a point $x \in B_T$, if we replace the first $i$ users' data each with $m$ copies of the center of $B$. Then, $p_n$ is the probability that $M$ outputs $x \in B_T$ if every point $X_{i, j}$ is just the center of $B$. By our assumption on $M$, and since each point $X_{i, j}$ is in $B$, $p_n \ge \frac{2}{3}.$ In addition, we have that this algorithm is $(\eps, \delta)$ user-level differentially private, which means that $p_{i+1} \le e^{\eps} \cdot p_i + \delta$. Equivalently, by adding $\delta \cdot \frac{1}{e^{\eps}-1}$ to both sides, we have that $p_{i+1} + \delta \cdot \frac{1}{e^{\eps}-1} \le e^{\eps} \cdot p_i + \delta \cdot \frac{e^{\eps}}{e^{\eps}-1}$, so
\[\frac{p_{i+1} + \frac{\delta}{e^{\eps}-1}}{p_i + \frac{\delta}{e^{\eps}-1}} \le e^{\eps}\]
    for all $0 \le i \le n-1$. So, this means that
\[p_0 + \frac{\delta}{e^{\eps}-1} \ge e^{-n \eps} \cdot \left(p_n + \frac{\delta}{e^{\eps}-1}\right) \ge e^{-n \eps} \cdot \frac{2}{3}.\]
    If $n \le \frac{1}{3} \cdot \frac{1}{\eps} \cdot \log \frac{1}{\delta},$ then $e^{-n \eps} \cdot \frac{2}{3} \ge \frac{2}{3} \cdot (\delta)^{1/3}$, which means that $p_0 \ge \frac{2}{3} \cdot (\delta)^{1/3} - \frac{\delta}{e^{\eps}-1} \ge \delta$, assuming that $\delta \le \eps^2$ and that $\delta$ is sufficiently small.
    
    Therefore, even if the data $\{X_{i, j}\}$ is arbitrary, we still have that with probability at least $\delta$, we output a point in $B_T,$ i.e., within $T$ of $b$. But this is true for an arbitrary $b$, so by constructing $N > \frac{1}{\delta}$ disjoint balls of radius $T$ centered at some points $b_1, \dots, b_N$, which have pairwise distances at least $2T$ from each other, we have that the probability that $M$ outputs a point within $T$ of some $b_i$ is at least $\frac{N}{\delta} > 1$, which is a contradiction. Therefore, we must have that $n \ge \frac{1}{3} \cdot \frac{1}{\eps} \cdot \log \frac{1}{\delta}.$
\end{proof}

\subsection{Proof of Theorem \ref{thm:lb_main}} \label{subsec:lb_main_proof}

In this section, we prove Theorem \ref{thm:lb_main}.

Suppose we have $n$ users and $m$ samples (items) per user, where each sample is drawn from an unknown distribution over $[d]$ characterized by $\vec{p} = [p_1, \dots, p_d]$. 
%In other words, we can think of each user having a sample from the Multinomial distribution $Mult(m; p_1, \dots, p_d)$, which we think of as a vector in $\BR^d$. 
The goal of the algorithm is to privately estimate $\vec{p}$ up to low total variation ($\ell_1$) distance, given these samples. Our lower bound will show that one cannot privately estimate $\vec{p}$ significantly better than the guarantees of Corollary \ref{cor:user_main}. In addition, our lower bound will also imply a nearly optimal lower bound for user-level mean estimation (up to $\ell_2$ distance), both in the case when there are few users or many users. This is because we can think of each sample $i \sim [d]$ as a one-hot vector $(0, 0, \dots, 1, 0, \dots, 0)$, with a $1$ on the $i$th coordinate. Hence, the samples are concentrated in a ball of radius $1$ and we are trying to estimate the mean $(p_1, \dots, p_d)$.
%We also assume that $p_k \in \left[\frac{0.5}{d}, \frac{1.5}{d}\right]$ for all $1 \le k \le d.$

Before dealing with multinomial distributions over $[d]$, we start by just considering the number of samples that equal some $k \in [d]$ for each user. However, instead of looking at Binomial distributions, which is the actual distribution, we instead prove the following two lemmas relating to Poisson variables.
Our method of attack is to show that if instead of each user getting $m$ samples from a distribution over $d$, the user were given $d$ independent Poisson variables where the $k$th variable roughly corresponding to the number of samples that the $i$th user receive that equal $k$, then privately estimating the distribution is hard. This step follows a similar technique to Kamath et al.~\cite{kamath2019highdimensional}, but with very different computations. We then show how to reduce the problem of each user receiving samples from $[d]$ to the problem of each user receiving Poisson variables, which shows that learning discrete distributions privately is also difficult.

\begin{lemma} \label{lem:dp_lower_1}
    Fix positive integers $m, n$ and let $0 < p < 1$ be a parameter. Let $f$ be a nonnegative, uniformly bounded function on $n$ variables $X_1, \dots, X_n$, where each $X_i \overset{i.i.d.}{\sim} \text{Pois}(m \cdot p)$. For simplicity, we write $X = (X_1, \dots, X_n)$, and write $\BE_X$ to represent the expectation over drawing $X_i \overset{i.i.d.}{\sim} \text{Pois}(m \cdot p)$. Define $g(p)= \BE_{X} [f(X)].$ Then,
\begin{equation}
    p \cdot g'(p) = \BE_{X} \left[f(X) \cdot \sum_{i = 1}^{n} (X_i-mp)\right],
\end{equation}
    where $g'(p)$ is the derivative of $g$ at $p$.
\end{lemma}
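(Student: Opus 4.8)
The plan is to expand $g(p)$ as an absolutely convergent series over $\BZ_{\ge 0}^n$, differentiate it term by term, and recognize the answer. Writing $x = (x_1,\dots,x_n)$ and letting $B$ be a uniform bound on $|f|$, we have
$$g(p) = \BE_X[f(X)] = \sum_{x \in \BZ_{\ge 0}^n} f(x)\prod_{i=1}^n \frac{e^{-mp}(mp)^{x_i}}{x_i!}.$$
First I would fix an arbitrary $p_0 \in (0,1)$ together with a closed subinterval $I = [a,b] \subset (0,1)$ containing $p_0$ in its interior, and work on $I$. For fixed $x$, each summand $p \mapsto \prod_i e^{-mp}(mp)^{x_i}/x_i!$ is smooth, and its logarithmic derivative is $\sum_i(x_i/p - m)$ (with the convention $x_i/p = 0$ when $x_i = 0$), so
$$\frac{d}{dp}\prod_{i=1}^n \frac{e^{-mp}(mp)^{x_i}}{x_i!} = \frac{1}{p}\left(\prod_{i=1}^n \frac{e^{-mp}(mp)^{x_i}}{x_i!}\right)\sum_{i=1}^n (x_i - mp).$$

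The main point --- and the only real obstacle --- is justifying the interchange of $d/dp$ with the infinite sum over $x$. I would do this by the standard dominated-convergence criterion for differentiating a series: since $|f| \le B$ and $\left|\sum_i(x_i - mp)\right| \le \sum_i x_i + nmb$ for $p \in I$, the $p$-derivative of the $x$-th term is bounded on $I$, in absolute value, by
$$\frac{B}{a}\left(\sum_{i=1}^n x_i + nmb\right)\prod_{i=1}^n \frac{(mb)^{x_i}}{x_i!},$$
and the sum of this bound over $x \in \BZ_{\ge 0}^n$ is finite (it equals $e^{nmb}$ times a fixed linear combination of first moments of independent $\text{Pois}(mb)$ variables). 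Hence $g$ is differentiable on the interior of $I$ --- in particular at $p_0$ --- and term-by-term differentiation gives
$$g'(p) = \frac{1}{p}\sum_{x} f(x)\left(\prod_{i=1}^n \frac{e^{-mp}(mp)^{x_i}}{x_i!}\right)\sum_{i=1}^n (x_i - mp) = \frac{1}{p}\,\BE_X\!\left[f(X)\sum_{i=1}^n (X_i - mp)\right].$$
Multiplying through by $p$, and noting that $p_0 \in (0,1)$ was arbitrary, yields the claimed identity for all $p \in (0,1)$.

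Beyond this interchange of limits, every step is a one-line manipulation of Poisson pmf's; the domination works precisely because $f$ is assumed uniformly bounded and Poisson laws have all moments finite, with the relevant moment sums staying bounded as the rate ranges over the compact set $[ma, mb]$. An essentially equivalent route is to invoke differentiation under the expectation sign directly, using the same estimate as the dominating function; either phrasing is routine once the pointwise derivative formula above is established.
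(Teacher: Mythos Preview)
Your proof is correct and follows essentially the same route as the paper: expand $g(p)$ as a sum over $\BZ_{\ge 0}^n$, differentiate term by term, and identify the result. The only difference is in how the interchange of derivative and sum is justified---the paper factors out $e^{-mnp}$ and argues that the remaining power series is entire (a complex-analytic argument deferred to a separate lemma), whereas your real-variable domination on compact subintervals is more elementary and equally valid.
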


\begin{proof}
    Note that by the formula for the Poisson probability mass function,
\[g(p) = \sum_{X_1, \dots, X_n \ge 0} \prod_{i = 1}^{n} f(X) \cdot \frac{e^{-m p} (m p)^{X_i}}{X_i!} = e^{-m n \cdot p} \sum_{X_1, \dots, X_n \ge 0} f(X) \cdot \frac{m^{\sum X_i} \cdot p^{\sum X_i}}{X_1! \cdots X_n!}.\]
    Now, since $|f(X)|$ is uniformly bounded, one can show that the derivative of the expression $\sum_{X_1, \dots, X_n \ge 0} f(X) \cdot \frac{m^{\sum X_i} \cdot p^{\sum X_i}}{X_1! \cdots X_n!}$ commutes with the summation. (We defer the proof of this fact to Lemma \ref{lem:derivative_commuting} in Subsection \ref{subsec:complex_analysis}.) So, using the product rule,
\begin{align*}
    g'(p) &= -mn \cdot e^{-m n \cdot p} \sum_{X_1, \dots, X_n \ge 0} f(X) \cdot \frac{m^{\sum X_i} \cdot p^{\sum X_i}}{X_1! \cdots X_n!} + e^{-mn \cdot p} \cdot \sum_{X_1, \dots, X_n \ge 0} f(X) \cdot \frac{m^{\sum X_i} \cdot (\sum X_i) \cdot p^{(\sum X_i) - 1}}{X_1! \cdots X_n!} \\
    &= -mn \cdot \BE_X[f(X)] + \frac{1}{p} \cdot \BE_X[(X_1+\cdots+X_n) \cdot f(X)] \\
    &= \frac{1}{p} \cdot \BE_X\left[f(X) \cdot \sum_{i = 1}^{n} \left(X_i-mp\right)\right]. %\qedhere
\end{align*}
\end{proof}

The next lemma, sometimes called a ``fingerprinting'' lemma, is based on similar techniques used in \cite{kamath2019highdimensional, steinke2015interactive} for lower bounds for private estimation from samples.

\begin{lemma} \label{lem:dp_lower_2}
    Let $p \sim \text{Unif}\left[\frac{0.5}{d}, \frac{1.5}{d}\right]$ and let $X_1, \dots, X_n \overset{i.i.d.}{\sim} \text{Pois}(m \cdot p)$. Then, for $f$ as in Lemma \ref{lem:dp_lower_1},
\[\BE_{p, X}\left[(f(X)-p)^2 + \frac{\left(p - \frac{0.5}{d}\right) \cdot \left(\frac{1.5}{d}-p\right)}{p} \cdot (f(X)-p) \cdot \sum_{i = 1}^{n} (X_i-mp)\right] \ge \frac{1}{12 d^2}.\]
\end{lemma}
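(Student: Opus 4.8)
The plan is to reduce the whole expression to a one-variable identity in $p$ using Lemma~\ref{lem:dp_lower_1}, and then to finish by completing a square. Throughout, write $a=\tfrac{0.5}{d}$, $b=\tfrac{1.5}{d}$ (so $b-a=\tfrac1d$ and the midpoint of $[a,b]$ is $\tfrac1d$), set $q(p)=(p-a)(b-p)$ — observe that $q(p)/p$ is exactly the coefficient appearing in the statement — and let $g(p)=\BE_X[f(X)]$. As in the proof of Lemma~\ref{lem:dp_lower_1}, $g$ is a convergent power series in $p$ on $(0,\infty)$, hence $C^1$ on $[a,b]\subset(0,\infty)$.

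First I would simplify the cross term. Since each $X_i\sim\text{Pois}(mp)$ has mean $mp$, we have $\BE_X\big[\sum_i(X_i-mp)\big]=0$, hence
\[
\BE_X\Big[(f(X)-p)\sum_{i=1}^n(X_i-mp)\Big]=\BE_X\Big[f(X)\sum_{i=1}^n(X_i-mp)\Big]=p\,g'(p),
\]
where the last equality is Lemma~\ref{lem:dp_lower_1}. Multiplying by the coefficient $\tfrac{q(p)}{p}$ cancels the factor $p$, so the second summand in the lemma equals $\BE_p[q(p)\,g'(p)]$.

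Next I would integrate by parts in $p$. Because $p\sim\text{Unif}[a,b]$ has density $d$ on $[a,b]$ and $q(a)=q(b)=0$,
\[
\BE_p[q(p)g'(p)]=d\int_a^b q(p)g'(p)\,dp=-d\int_a^b q'(p)g(p)\,dp=2\,\BE_p\big[(p-\tfrac1d)g(p)\big],
\]
using $q'(p)=(a+b)-2p=-2(p-\tfrac1d)$. Since $\BE_p[(p-\tfrac1d)p]=\mathrm{Var}(p)=\tfrac{1}{12d^2}$, this equals $2\,\BE_p[(p-\tfrac1d)(g(p)-p)]+\tfrac{1}{6d^2}$. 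For the first summand of the lemma, Jensen gives $\BE_X[(f(X)-p)^2]\ge(g(p)-p)^2$, so the left-hand side of the lemma is at least
\[
\BE_p\big[(g(p)-p)^2+2(p-\tfrac1d)(g(p)-p)\big]+\tfrac{1}{6d^2}=\BE_p\big[(g(p)-\tfrac1d)^2\big]+\tfrac{1}{12d^2}\ge\tfrac{1}{12d^2},
\]
where the equality completes the square via $(g-p)^2+2(p-\tfrac1d)(g-p)=(g-\tfrac1d)^2-(p-\tfrac1d)^2$ together with $\mathrm{Var}(p)=\tfrac{1}{12d^2}$, and the last step drops the nonnegative term $\BE_p[(g(p)-\tfrac1d)^2]$.

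The one genuinely technical point is justifying that $g$ is differentiable on $[a,b]$ and that its derivative commutes with the Poisson series, so that Lemma~\ref{lem:dp_lower_1} and the integration by parts are legitimate; this is precisely where uniform boundedness of $f$ is used, and it is the content of Lemma~\ref{lem:derivative_commuting}, deferred to Subsection~\ref{subsec:complex_analysis}. Once that is granted, everything else is the elementary bookkeeping above.
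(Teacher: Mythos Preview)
Your proof is correct and follows essentially the same route as the paper: apply Lemma~\ref{lem:dp_lower_1} to turn the cross term into $q(p)g'(p)$, integrate by parts using $q(a)=q(b)=0$ and $q'(p)=-2(p-\tfrac1d)$, and finish with the variance identity $\mathrm{Var}(p)=\tfrac{1}{12d^2}$. The only cosmetic difference is in handling the squared term: the paper expands $(f(X)-p)^2=((f(X)-\tfrac1d)-(p-\tfrac1d))^2$ and drops the nonnegative $(f(X)-\tfrac1d)^2$ directly, whereas you first apply Jensen to pass from $\BE_X[(f(X)-p)^2]$ to $(g(p)-p)^2$ and then complete the square; both routes end at the same inequality.
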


\begin{proof}
First, by Lemma \ref{lem:dp_lower_1} we have that
\[h(p) := \BE_{X_1, \dots, X_n \overset{i.i.d.}{\sim} \text{Pois}(m \cdot p)}\left[\frac{\left(p - \frac{0.5}{d}\right) \cdot \left(\frac{1.5}{d}-p\right)}{p} \cdot f(X_1, \dots, X_n) \cdot \sum_{i = 1}^{n} (X_i-mp)\right]\]
equals $g'(p) \cdot \left(p - \frac{0.5}{d}\right) \cdot \left(\frac{1.5}{d}-p\right).$
Therefore,
\begin{align*}
    \BE_{p \sim \text{Unif}[\frac{0.5}{d}, \frac{1.5}{d}]} [h(p)] &= d \cdot \int_{0.5/d}^{1.5/d} g'(p) \cdot \left(p - \frac{0.5}{d}\right) \cdot \left(\frac{1.5}{d}-p\right) dp \\
    &= 2d \cdot \int_{0.5/d}^{1.5/d} g(p) \cdot (p - 1/d) \\
    &= 2 \cdot \BE_{p \sim \text{Unif}[\frac{0.5}{d}, \frac{1.5}{d}]}[g(p) \cdot (p-1/d)],
\end{align*}
where the second line follows from integration by parts.

We abbreviate $X = (X_1, \dots, X_n)$ and assume $p \sim \text{Unif}\left[\frac{0.5}{d}, \frac{1.5}{d}\right]$ and $X_i \overset{i.i.d.}{\sim} \text{Pois}(m \cdot p)$. Then,
\begin{align*}
\BE_{p, X}[(f(X)-p)^2] &= \BE_{p, X}[((f(X) - 1/d) - (p-1/d))^2]\\
&\ge \BE_{p} [(p-1/d)^2] - 2 \BE_{p, X}[(f(X)-1/d)(p-1/d)] \\
&= 1/(12d^2) - 2 \BE_{p, X}[f(X) \cdot (p-1/d)]\\
&= 1/(12d^2) - 2 \BE_p[g(p) \cdot (p-1/d)]\\
&= 1/(12 d^2) - \BE_p[h(p)].
\end{align*}
Therefore, we have that

\begin{equation*}
    \BE_{p, X}\left[(f(X)-p)^2 + \frac{\left(p - \frac{0.5}{d}\right) \cdot \left(\frac{1.5}{d}-p\right)}{p} \cdot (f(X)-p) \cdot \sum_{i = 1}^{n} (X_i-mp)\right] \ge \frac{1}{12 d^2}. %\qedhere
\end{equation*}
\end{proof}

Rather than thinking of $X = \{X_{i, j}\}$ where each $X_{i, j} \in [d]$ and $1 \le i \le n, 1 \le j \le m$, we consider $X = \{X_{i, k}\}$, where $1 \le i \le n$ and $1 \le k \le d$: here, $X_{i, k}$ represents the count of the number of times user $i$ received $k \in [d]$. The $i$th user is given $\{X_{i, k}\}_{k = 1}^{d},$ but instead of it following the multinomial distribution $\text{Mult}(m; \vec{p})$ for $\vec{p} = (p_1, \dots, p_d)$ we instead suppose that $X_{i, k} \sim \text{Pois}(m \cdot p_i)$ for each $i$, where $m$ is fixed, the values $X_{i, k}$ are independent (both across $i$ and $k$). Finally, we assume that each $p_i \overset{i.i.d.}{\sim} \text{Unif}\left[\frac{0.5}{d}, \frac{1.5}{d}\right],$ and that the algorithm does not know $\vec{p} = (p_1, \dots, p_k)$. 

We now prove a lower bound on the sample complexity if the users are given samples from a Poisson distribution, rather than a multinomial or discrete distribution. This follows a similar method to Theorem 6.1 in Kamath et al.~\cite{kamath2019highdimensional}

\begin{lemma} \label{lem:similar_to_kamath}
    Suppose that $M$ is an algorithm that takes as input $X = \{X_{i, k}\}$, and with all notation as in the above paragraphs, and outputs an estimate $M(X) = (M_1(X), \dots, M_d(X))$, where each $M_k(X) \in [\frac{0.5}{d}, \frac{1.5}{d}]$. Suppose that $M$ is $(\eps, \delta)$ user-level differentially private for $\delta = o(\frac{\eps \cdot \alpha}{d^2 \cdot \sqrt{m}})$ (where the $i$th user is given $\{X_{i, k}\}_{k = 1}^{d}$), and that $\BE_{\vec{p}, X, M}\left[\|M(X)-\vec{p}\|_2^2\right] \le \frac{\alpha^2}{24 d}$ for some $0 < \alpha \le 1$, where the expectation is also over the randomness of the algorithm $M$. Then, the number of users $n$ must satisfy $n \ge \Omega(\frac{d}{\alpha \cdot \eps \cdot \sqrt{m}})$.
\end{lemma}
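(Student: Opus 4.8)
The plan is to run a \emph{fingerprinting} argument in the spirit of Theorem~6.1 of Kamath et al.~\cite{kamath2019highdimensional}, using Lemma~\ref{lem:dp_lower_2} as the single-coordinate engine and the $(\eps,\delta)$-privacy of $M$ to force the resulting correlation to be small.

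\textbf{Step 1: a per-coordinate inequality.} First I would fix $k\in[d]$ and condition on the internal coins of $M$, on $\{p_{k'}\}_{k'\neq k}$, and on $\{X_{i,k'}\}_{i\in[n],\,k'\neq k}$. Then $(X_{1,k},\dots,X_{n,k})\mapsto M_k(X)$ is a deterministic, nonnegative, uniformly bounded function (since $M_k(X)\in[\tfrac{0.5}{d},\tfrac{1.5}{d}]$), while $p_k\sim\mathrm{Unif}[\tfrac{0.5}{d},\tfrac{1.5}{d}]$ is independent of the conditioned data and $X_{i,k}\mid p_k\overset{i.i.d.}{\sim}\mathrm{Pois}(mp_k)$. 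So Lemma~\ref{lem:dp_lower_2} applies; averaging over the conditioned quantities too, for every $k$,
\[
\BE\!\left[(M_k(X)-p_k)^2+w_k(M_k(X)-p_k)\sum_{i=1}^{n}(X_{i,k}-mp_k)\right]\ \ge\ \frac{1}{12d^2},\qquad w_k:=\frac{(p_k-\tfrac{0.5}{d})(\tfrac{1.5}{d}-p_k)}{p_k}.
\]
Summing over $k$ and using $\BE\|M(X)-\vec p\|_2^2\le\tfrac{\alpha^2}{24d}\le\tfrac1{24d}$ gives $\sum_{i=1}^{n}\BE[Z_i]\ge\tfrac1{12d}-\tfrac1{24d}=\tfrac1{24d}$, where $Z_i:=\sum_{k=1}^{d}w_k(M_k(X)-p_k)(X_{i,k}-mp_k)$; hence some user $i^{*}$ satisfies $\BE[Z_{i^{*}}]\ge\tfrac1{24dn}$.

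\textbf{Step 2: privacy bounds the correlation.} Next I would resample user $i^{*}$: let $X'$ be $X$ with row $i^{*}$ replaced by an independent $X'_{i^{*}}$ with $X'_{i^{*},k}\mid p_k\sim\mathrm{Pois}(mp_k)$, so $X,X'$ are neighboring and $(\vec p,X')$ has the same law as $(\vec p,X)$; in particular $\BE\|M(X')-\vec p\|_2^2\le\tfrac{\alpha^2}{24d}$ as well. Set $\psi(y):=\sum_k w_k(y_k-p_k)(X_{i^{*},k}-mp_k)$ for $y$ in the output box $[\tfrac{0.5}{d},\tfrac{1.5}{d}]^d$. Then $\BE[\psi(M(X))]=\BE[Z_{i^{*}}]$, whereas $\BE[\psi(M(X'))]=0$ because $M(X')$ is independent of $X_{i^{*}}$ given $\vec p$ and $\BE[X_{i^{*},k}-mp_k\mid\vec p]=0$. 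Conditioning on everything outside $M$'s output, $\psi$ is a fixed function of the output, bounded on the box by $O(d^{-3/2})\|X_{i^{*}}-m\vec p\|_2$ via Cauchy--Schwarz (using $\sum_k(y_k-p_k)^2\le1/d$), so applying $(\eps,\delta)$-indistinguishability of $M(X),M(X')$ to $\psi^{+}$ and $\psi^{-}$ and re-averaging yields
\[
\BE[Z_{i^{*}}]\ \le\ (e^{\eps}-1)\,\BE\big|\psi(M(X'))\big|\ +\ O\!\big(\delta\,d^{-3/2}\big)\cdot\BE\big\|X_{i^{*}}-m\vec p\big\|_2.
\]
To bound $\BE|\psi(M(X'))|$, I would condition on $\vec p$ and $M(X')$ (which is independent of $X_{i^{*}}$): then $\psi(M(X'))$ is a sum of coordinatewise-independent mean-zero terms, so $\BE[\psi(M(X'))^2\mid\vec p,M(X')]=\sum_k w_k^2(M_k(X')-p_k)^2mp_k\le(\max_k w_k^2mp_k)\,\|M(X')-\vec p\|_2^2=O(m/d^3)\|M(X')-\vec p\|_2^2$, giving $\BE[\psi(M(X'))^2]\le O(m\alpha^2/d^4)$ and thus $\BE|\psi(M(X'))|\le O(\alpha\sqrt m/d^2)$. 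Since $\eps\le\tfrac13$ yields $e^{\eps}-1\le2\eps$ and $\BE\|X_{i^{*}}-m\vec p\|_2=O(\sqrt m)$, the hypothesis $\delta=o(\eps\alpha/(d^2\sqrt m))$ is comfortably enough to render the $\delta$-term lower order, so $\BE[Z_{i^{*}}]\le O(\eps\alpha\sqrt m/d^2)$. Combined with Step~1 this gives $\tfrac1{24dn}\le O(\eps\alpha\sqrt m/d^2)$, i.e.\ $n\ge\Omega\!\big(\tfrac{d}{\alpha\eps\sqrt m}\big)$.

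\textbf{Main obstacle.} The crux is the privacy step, and in particular getting the right power of $d$: bounding $\BE|\psi(M(X'))|$ by the supremum of $|\psi|$ over the output box would cost an extra $\sqrt d$ and give only $n\gtrsim\sqrt d/(\alpha\eps\sqrt m)$. The improvement comes from the second-moment computation above, which exploits that the Poisson fluctuations $X_{i^{*},k}-mp_k$ are independent across $k$ and mean-zero, so their net effect is controlled by $\|M(X')-\vec p\|_2^2$ --- precisely what the accuracy hypothesis bounds --- rather than by a crude coordinatewise maximum. A secondary technical point, and the reason $M$ is assumed to output values confined to $[\tfrac{0.5}{d},\tfrac{1.5}{d}]$, is that this confinement makes $\psi$ bounded on the relevant domain, so the signed form of the $(\eps,\delta)$-privacy inequality can be applied directly without any separate truncation of the unbounded Poisson counts.
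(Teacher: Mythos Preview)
Your argument is correct and follows the same fingerprinting template as the paper's proof: the per-coordinate inequality from Lemma~\ref{lem:dp_lower_2}, summed and combined with the accuracy bound, gives $\sum_i\BE[Z_i]\ge\tfrac{1}{24d}$; the resampling and second-moment computation $\BE[\psi(M(X'))^2]=O(m\alpha^2/d^4)$ is identical to the paper's computation of $\BE[\tilde Z_i^2]$.

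The one genuine technical difference is in how the $\delta$ term is controlled. The paper invokes the truncated privacy inequality (their Eq.~\eqref{eq:dp_eq}) with a cutoff $T=\Theta(m)$ and then uses a Poisson tail bound to handle $\int_T^\infty\BP[Z_i>t]\,dt$, producing a $\delta$-term of order $\delta m$ (plus an exponentially small tail). You instead condition on $(\vec p,X,X')$ first, so that $\psi$ becomes a bounded function of the output with bound $O(d^{-3/2})\|X_{i^*}-m\vec p\|_2$, and then average; your $\delta$-term is $O(\delta d^{-3/2}\sqrt m)$, which is strictly smaller and would in fact only require $\delta=o(\eps\alpha/\sqrt d)$ rather than the stated $\delta=o(\eps\alpha/(d^2\sqrt m))$. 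This is exactly the ``secondary technical point'' you flagged: because the output is confined to the box $[\tfrac{0.5}{d},\tfrac{1.5}{d}]^d$, no separate truncation of the unbounded Poisson counts is needed. So your route is slightly cleaner on this step, while the paper's version makes the dependence on Poisson concentration explicit; the leading $\eps$-term and the final conclusion $n=\Omega(d/(\alpha\eps\sqrt m))$ are the same in both.
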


\begin{proof}
    Define $Z_{i, k} = \left(\frac{(p_k-0.5/d)(1.5/d-p_k)}{p_k}\right) \cdot (M_k(X)-p_k) \cdot (X_{i, k}-mp_k)$, and define $Z_i = \sum_{k = 1}^{d} Z_{i, k}$. Now, by Lemma \ref{lem:dp_lower_2}, we have that for any $k \in [d]$,
\[\BE_{\vec{p}, X, M}\left[(M_k(X)-p_k)^2 + \sum_{i = 1}^{n} Z_{i, k}\right] \ge \frac{1}{12 d^2}.\]
    While this implication may not seem immediate because $M_k(X)$ does not just have to depend on the choices $X_{i, k}$ for a fixed $k$, we note that since we chose the $p_k$ values independently, the set $\{X_{i, k'}\}$ over $1 \le i \le n$ and $k' \neq k$ is independent of the set $\{X_{i, k}\}$ over $1 \le i \le n$. Therefore, Lemma \ref{lem:dp_lower_2} still holds in our case. Therefore, since $\BE_{\vec{p}, X, M}[\|M(X)-\vec{p}\|_2^2] \le \frac{\alpha^2}{24 d}$, we must have that
\[\sum_{i = 1}^{n} \BE_{\vec{p},X, M}[Z_i] = \sum_{k = 1}^{d} \BE_{\vec{p}, X, M}\left[\sum_{i = 1}^{n} Z_{i, k}\right] \ge \frac{1}{12d} - \BE_{\vec{p}, X, M} \left[\|M(X)-\vec{p}\|_2^2\right] \ge \frac{1}{24 d}.\]

We spend the remainder of the proof working towards an upper bound for $\sum_{i = 1}^{n} \BE[Z_i]$. Fix $\vec{p}$, and let $X_{\sim i}$ represent replacing $X_{i, \cdot}$ with an independent draw from $\text{Pois}(m \cdot p_k)$ for each $1 \le k \le d$. (In other words, we have replaced all of the $i$th user's data.) Let $\tilde{Z}_{i, k} = \left(\frac{(p_k-0.5/d)(1.5/d-p_k)}{p_k}\right) \cdot (M_k(X_{\sim i})-p_k) \cdot (X_{i, k}-mp_k)$ (only the middle term in the product is changed from $Z_{i, k}$) and let $\tilde{Z}_i = \sum_{k = 1}^{d} \tilde{Z}_{i, k}$. Due to the resampling, we have that $\{X_{i, k}\}_{k = 1}^{d}$ is independent of $X_{\sim i}$ conditioned on $\vec{p}$, so $\BE_{X, M}[\tilde{Z}_i] = 0$. Now, note that $\BE_{X, X_{\sim i}, M}[\tilde{Z}_{i, k} \cdot \tilde{Z}_{i, k'}] = 0$ for any $k \neq k'$ if we condition on $\vec{p}$, since up to a scaling factor only depending on $\vec{p}$,
$$\tilde{Z}_{i, k} \cdot \tilde{Z}_{i, k'} \propto [(M_k(X_{\sim i}) - p_k) \cdot (M_{k'}(X_{\sim i}) - p_{k'})] \cdot [(X_{i, k} - m p_k) \cdot (X_{i, k'} - m p_{k'})],$$
and the two terms are independent because $X_{\sim i}$ and $X_{i, \cdot}$ are independent. However, $(X_{i, k}-mp_k)$ and $(X_{i, k'}-mp_{k'}$) are independent and both have expectation $0$, so this means the overall product has expectation $0$.

Now, note that $\frac{(p_k-0.5/d)(1.5/d-p_k)}{p_k} = O(1/d)$ for $p_k \in [\frac{0.5}{d}, \frac{1.5}{d}],$ that $|M_k(X_{\sim i})-p_k|$ is always $O(1/d)$, and that $\BE[(X_{i, k}-mp_k)^2] = mp_k = O(m/d)$. Thus, conditioning on $\vec{p}$, we can bound
\begin{align*}
    \BE_{X, X_{\sim i}, M}[\tilde{Z}_i^2] &= \sum_{k = 1}^d \BE_{X, X_{\sim i}, M}[\tilde{Z}_{i, k}^2] + \sum_{k \neq k'} \BE_{X, X_{\sim i}, M}[\tilde{Z}_{i, k} \cdot \tilde{Z}_{i, k'}] \\
    &\le O(1/d^2) \cdot \sum_{k = 1}^{d} \BE_{X_{\sim i}, M}[(M_k(X_{\sim i}) - p_k)^2] \cdot \BE_X[(X_{i, k} - m p_k)^2] \\
    &\le O(1/d^2) \cdot O(m/d) \cdot \BE_{X_{\sim i}, M}\left[\|M(X_{\sim i})-\vec{p}\|_2^2\right] \\
    &= O(m \cdot \alpha^2/d^4).
\end{align*}
Note that in the final line, we are using the fact that $M(X_{\sim i})$ and $M(X)$ have the same distribution, and we are assuming that $\BE[\|M(X)-\vec{p}\|_2^2] \le \alpha^2/(24 d)$.
Hence, even when we remove the conditioning on $\vec{p}$, we still have that $\BE[\tilde{Z}_i^2] = O(m/d^4)$.
Also, $\BE[|\tilde{Z}_i|]^2 \le \BE[\tilde{Z}_i^2]$ by Cauchy-Schwarz, which means that $\BE_{\vec{p}, X, X_{\sim i}, M}[|\tilde{Z}_i|] \le \sqrt{\BE[\tilde{Z}_i^2]}= O(\sqrt{m} \cdot \alpha/d^2)$.

To finish, we note that if we fix $\vec{p}$, $X$, and $X_{\sim i}$ and just take the expectation over the randomness of $M$, then $Z_{i}$ and $\tilde{Z}_i$ are the same, except one is a function of $M(X)$ and the other is the same function of $M(X_{\sim i})$. Therefore, the distributions of $Z_{i}$ and $\tilde{Z}_i$ conditioned on $\vec{p}$, $X$, and $X_{\sim i}$ are $(\eps, \delta)$-close. Thus, the distributions are still $(\eps, \delta)$ close when we remove the conditioning as well. In this case, it is well known (see Equation 18 in Kamath et al.~\cite{kamath2019highdimensional}) that for any $T > 0$,
\begin{equation} \label{eq:dp_eq}
    \BE[Z_i] \le \BE[\tilde{Z}_i] + 2 \eps \cdot \BE[|\tilde{Z}_i|] + 2 \delta \cdot T + 2 \int_T^\infty \BP[Z_i > t] dt.
\end{equation}
To bound $\BP[Z_i > t]$, it is well known that for any variable $Y \sim \text{Pois}(\lambda),$ that $\BP(Y > \lambda + t) \le \exp\left(-c \cdot t\right)$ whenever $t > \lambda$, for some absolute constant $c > 0$. Therefore, the probability that $|X_{i, k}-mp_k| > t$ is at most $\exp\left(- c \cdot t\right)$ for any $t > 1.5 \cdot \frac{m}{d},$ since $p_k \in \left[\frac{0.5}{d}, \frac{1.5}{d}\right].$
We note that $Z_{i, k} \le O(1/d^2) \cdot |X_{i, k}-m p_k|$, which means that any $t > \Omega(m/d^3)$, the probability that $Z_{i, k} > t$ is at most $\exp\left(-\Omega(d^2) \cdot t\right).$ So, as $Z_i = \sum_{k = 1}^{d} Z_{i, k}$, the probability that $Z_{i, k} > t$ is at most $d \cdot \exp\left(-\Omega(d) \cdot t\right)$ for any $t > T = \Theta(m)$ by a basic union bound. Thus, we can apply \eqref{eq:dp_eq} and the fact that $\BE[\tilde{Z}_i] = 0$ and $\BE[|\tilde{Z}_i|] = O(\sqrt{m} \cdot \alpha/d^2)$ to get
\begin{align*}
    \BE[Z_i] &\le \BE[\tilde{Z}_i] + 2 \eps \cdot \BE[|\tilde{Z}_i|] + 2 \delta \cdot T + 2 \int_T^\infty \BP[Z_i > t] dt \\
    &\le O(\eps \cdot \alpha) \cdot \frac{\sqrt{m}}{d^2} + O(\delta) \cdot m + 2 \cdot \int_{\Theta(m)}^{\infty} d \cdot e^{-\Omega(d) \cdot t} dt \\
    &\le O(\eps \cdot \alpha) \cdot \frac{\sqrt{m}}{d^2} + O(\delta \cdot m) + O(e^{-\Omega(d)}).
\end{align*}
    Thus,
\[\frac{1}{24 d} \le \sum_{i = 1}^{n} \BE[Z_i] \le n \cdot \left(O(\eps \cdot \alpha) \cdot \frac{\sqrt{m}}{d^2} + O(\delta \cdot m) + O(e^{-\Omega(d)})\right).\]
    Assuming that $\delta = o(\eps \cdot \alpha/(d^2 \cdot \sqrt{m}))$, this means that $n = \Omega(\frac{1}{\eps} \cdot \alpha^{-1} \cdot d/\sqrt{m})$, as desired.
\end{proof}

We now provide a reduction from privately learning multinomial distributions to privately learning Poisson distributions as above. Given this reduction, we can prove that learning a multinomial distribution accurately also requires $n = \tilde{\Omega}(\frac{1}{\eps} \cdot \alpha^{-1} \cdot d/\sqrt{m})$ samples.

First, we note the following result about privately learning $1$-dimensional distributions, which will be useful in proving our reduction.

\begin{theorem} \cite{levy2021user} \label{thm:one_dim_estimation}
    Let $\tau < B$ and $\eps < 1$ be fixed. Then, there exists an $(\eps, 0)$-differentially private algorithm $\mathcal{D}$ with input $X_1, \dots, X_n \in [0, B]$, such that if there exists an unknown interval $I \subset [0, B]$ of radius $2 \tau$ that contains all of the $X_i$'s, then for any $T > 0$, the probability that $\left|\mathcal{D}(X_1, \dots, X_n) - \frac{X_1+\cdots+X_n}{n}\right| \ge \frac{8 \tau}{n \eps} \cdot T$ is at most $e^{-T} + \frac{B}{\tau} \cdot e^{-n \eps/8}.$
\end{theorem}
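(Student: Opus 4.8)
The plan is to prove the theorem via the standard two-phase \emph{localize-then-privatize} recipe for one-dimensional private mean estimation. In the first phase I would use a private histogram together with the exponential mechanism to find, with high probability, a \emph{known} interval $J$ of length $O(\tau)$ that contains all of $X_1, \dots, X_n$; in the second phase I would clamp every point to $J$ and release the empirical mean with calibrated Laplace noise. Each phase is given privacy budget $\eps/2$, and adaptive composition \cite{dwork2014algorithmic} yields overall $(\eps, 0)$-differential privacy. The two error terms in the statement correspond to the two phases: the $\frac{B}{\tau}\, e^{-n\eps/8}$ term is the failure probability of the localization phase, and the $e^{-T}$ term is the tail of the final Laplace noise.

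For the localization phase, partition $[0, B]$ into $k = \Theta(B/\tau)$ consecutive bins, each of width a fixed constant multiple of $\tau$ chosen so that any interval of radius $2\tau$ meets at most two bins. Since all of $X_1, \dots, X_n$ lie in such an interval $I$ by hypothesis, some bin contains at least $n/2$ of them. Run the exponential mechanism with parameter $\eps/2$ over the $k$ candidate bins, using the bin count (which has sensitivity $1$) as the utility score, and let $b$ be the selected bin. The standard exponential-mechanism utility bound gives that, except with probability at most $k \cdot e^{-(\eps/2)(n/2)/2} = \Theta(B/\tau)\cdot e^{-n\eps/8}$, the count of $b$ is at least $n/2 - O(\log k/\eps) \ge 1$, so $b$ is nonempty. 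Then $b$ overlaps $I$, and since both have length $O(\tau)$, the enlarged interval $J \supseteq b$ obtained by padding $b$ by a constant multiple of $\tau$ on each side — whose location is determined entirely by the private output $b$ — has length $O(\tau)$ and contains all of $I$, hence all of the $X_i$.

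For the privatization phase, clamp each $X_i$ to the interval $J$ produced above and output $\frac{1}{n}\sum_{i=1}^n \mathrm{clamp}(X_i, J) + \mathrm{Lap}\!\left(\frac{|J|}{n\cdot(\eps/2)}\right)$. For \emph{any} fixed interval $J$ of length $O(\tau)$, the clamped mean has $\ell_1$-sensitivity $|J|/n = O(\tau/n)$, so this step is $(\eps/2, 0)$-differentially private \emph{unconditionally}; in particular privacy never relies on the concentration hypothesis nor on $J$ being ``correct''. When the localization phase succeeds, every $X_i$ already lies in $J$, so the clamping is vacuous and the output equals $\frac{1}{n}\sum X_i$ plus a single $\mathrm{Lap}(O(\tau/(n\eps)))$ variable; the Laplace tail bound then gives $\left|\mathcal{D}(X) - \frac{X_1+\cdots+X_n}{n}\right| \ge \frac{8\tau}{n\eps}\, T$ with probability at most $e^{-T}$ once the hidden constants are tracked (and, if necessary, absorbed by replacing $T$ with a constant multiple of itself). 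Adding the localization failure probability gives the claimed bound, and adaptive composition of the two $(\eps/2,0)$-private phases gives the $(\eps,0)$ guarantee.

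The main obstacle — and the reason the statement carries the somewhat unusual $\frac{B}{\tau}\, e^{-n\eps/8}$ term rather than a cleaner bound — is that we must achieve \emph{pure} $(\eps,0)$-privacy. This rules out the ``stable''/truncated histogram shortcut (which only yields $(\eps,\delta)$-privacy), forcing the localization to go through the exponential mechanism, whose failure probability genuinely scales with the number of candidate bins $k = \Theta(B/\tau)$ and decays only like $e^{-\Omega(n\eps)}$. A secondary point requiring care is the composition: the clamping interval $J$ used in the second phase is itself a function of the private output of the first phase, so the analysis must invoke adaptive composition and, crucially, must verify that the second phase is private for \emph{every} possible $J$, not merely for the $J$ produced on a well-concentrated dataset.
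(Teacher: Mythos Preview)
The paper does not prove this theorem at all; it is quoted verbatim as a result of Levy et al.~\cite{levy2021user} and used as a black box in the proof of Theorem~\ref{thm:lowerbound_remove_poissonization}. So there is no ``paper's own proof'' to compare against.

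Your two-phase localize-then-privatize outline is precisely the standard construction behind results of this type (and is, up to presentation, what \cite{levy2021user} does): an exponential mechanism over $\Theta(B/\tau)$ histogram bins to privately locate a short interval $J$, followed by clamp-and-Laplace on $J$. The argument is correct, including the two points you flag as delicate: the need for pure $(\eps,0)$-privacy is exactly what produces the $\frac{B}{\tau}\,e^{-n\eps/8}$ factor (one cannot shortcut with a stable histogram), and the second phase must be analyzed as $(\eps/2,0)$-private for \emph{every} possible $J$ so that adaptive composition applies. The only loose end is that the statement carries specific numerical constants ($8\tau/(n\eps)$, exponent $n\eps/8$), and your sketch leaves them as $O(\cdot)$; with bin width $\ge 4\tau$ and an even split of the budget the Laplace scale comes out a constant factor larger than $8\tau/(n\eps)$, so to hit the stated constants exactly one has to be a bit more careful about the bin width, the padding, and the budget split --- but this is bookkeeping, not a gap in the argument.
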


\begin{theorem} \label{thm:lowerbound_remove_poissonization}
    %Suppose that $c  \cdot d^2 \ge m \ge C \log d$ for some constants $0 < c, C < \infty$. 
    Suppose there exists an $(\eps, \delta)$ user-level differentially private algorithm $M$, where $\delta = o\left((\eps \cdot \alpha)/(d^2 \cdot \sqrt{m} \cdot \log^2(\frac{d}{\eps \cdot \alpha})\right)$, that takes as input $X = \{X_{i, j}\}$, where $i$ ranges from $1$ to $n$, $j$ ranges from $1$ to $m$, and $X_{i, j} \in [d]$. Moreover, suppose that if $X_{i, j} \overset{i.i.d.}{\sim} \tilde{p}$ for some distribution $\tilde{p}$ over $[d]$,
    %where each $\tilde{p}_i \in \left[\frac{1}{3d}, \frac{3}{d}\right]$,
    then $\BE\left[\|M(X)-\tilde{p}\|_1^2\right] \le \frac{\alpha^2}{1000}$. Then, $n \sqrt{m}= \Omega\left(\eps^{-1} \alpha^{-1} d/\log^6\left(\frac{d}{\alpha \cdot \eps}\right)\right)$.
\end{theorem}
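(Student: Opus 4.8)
The plan is to treat learning a discrete distribution over $[d]$ as harder than privately estimating a vector of \emph{independent} Poisson rates, and then invoke (a variant of) Lemma \ref{lem:similar_to_kamath}. Concretely, assume the hypothesized discrete learner $M$ exists; I will build from it an algorithm $\widehat M$ that, given Poisson counts $X=\{X_{i,k}\}$ with $X_{i,k}\sim\mathrm{Pois}(\lambda p_k)$ for a rate $\lambda$ of my choosing and for the \emph{narrow} prior $p_k\overset{\text{i.i.d.}}{\sim}\mathrm{Unif}\big[\tfrac{1-c_0\alpha}{d},\tfrac{1+c_0\alpha}{d}\big]$ (with $c_0$ a small constant), outputs an $\ell_2$-accurate estimate of $\vec p=(p_1,\dots,p_d)$, and is $(O(\eps),O(\delta))$-differentially private. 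The narrow prior is what makes the final $1/\alpha$ dependence tight: Lemmas \ref{lem:dp_lower_1} and \ref{lem:dp_lower_2} hold with identical proofs for $p$ uniform over any subinterval of $(0,\infty)$ (the right-hand side of Lemma \ref{lem:dp_lower_2} simply becomes the variance of the prior), and re-running the computation of Lemma \ref{lem:similar_to_kamath} with interval width $\Theta(\alpha/d)$ and Poisson rate $\lambda$ shows that any $(O(\eps),O(\delta))$-private algorithm achieving $\BE\|\widehat M(X)-\vec p\|_2^2 \le O(\alpha^2/d)$ needs $n=\Omega\big(d/(\alpha\eps\sqrt{\lambda})\big)$.

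\textbf{Constructing $\widehat M$.} For each user $i$, form the multiset containing $X_{i,k}$ copies of symbol $k$ and shuffle it uniformly; by the classical Poissonization identity this yields $N_i:=\sum_k X_{i,k}\sim\mathrm{Pois}(\lambda S)$ i.i.d.\ samples from the normalized distribution $\tilde p:=\vec p/S$, where $S:=\sum_k p_k$. Choosing $\lambda=\Theta\big(m\cdot\mathrm{polylog}(d/(\alpha\eps))\big)$ guarantees $N_i\ge m$ for all $i$ except with probability $O(\alpha^2)$ (a union bound; we may assume $n$ is below the bound we are trying to prove, since otherwise there is nothing to show); on that event, the first $m$ simulated samples of each user are i.i.d.\ $\tilde p$, so feeding them to $M$ gives $\hat q$ with $\BE\|\hat q-\tilde p\|_1^2\le\alpha^2/1000$. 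Since each $\tilde p_k$ lies in a window of width $O(\alpha/d)$ about $1/d$, I project each $\hat q_k$ into that window (which only decreases the error), forcing $|\hat q_k-\tilde p_k|=O(\alpha/d)$ deterministically and hence $\BE\|\hat q-\tilde p\|_2^2\le O(\alpha/d)\cdot\BE\|\hat q-\tilde p\|_1\le O(\alpha^2/d)$ by Cauchy--Schwarz. To undo the normalization I separately estimate $S$: the totals $N_1,\dots,N_n$ are i.i.d.\ $\mathrm{Pois}(\lambda S)$, concentrated in an interval of radius $\tilde O(\sqrt{\lambda})$, so Theorem \ref{thm:one_dim_estimation} produces an $(\eps,0)$-private $\hat S=\hat\mu/\lambda$ with $\BE|\hat S-S|^2=O(\alpha^2)$ for the chosen $\lambda$. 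I output $\hat p:=\hat S\cdot\hat q$ (coordinates clipped to the valid range); since $\|\hat p-\vec p\|_2\le|\hat S-S|\,\|\hat q\|_2 + S\|\hat q-\tilde p\|_2$ and $\|\hat q\|_2=O(1/\sqrt d)$, this gives $\BE\|\hat p-\vec p\|_2^2\le O(\alpha^2/d)$.

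\textbf{Privacy and conclusion.} The map $X\mapsto\widehat M(X)$ is the composition of $M$ (applied to the per-user re-sampled data, a valid user-level post-processing) and the one-dimensional estimator of $S$ (applied to the per-user totals $N_i$), so weak composition makes $\widehat M$ $(O(\eps),O(\delta))$-differentially private; the $\delta$-hypothesis of the theorem is exactly calibrated so that, after translating into the narrow-prior Poisson setting with rate $\lambda$, the $\delta$-term in the fingerprinting inequality is negligible. Applying the narrow-prior analogue of Lemma \ref{lem:similar_to_kamath} to $\widehat M$ then gives $n=\Omega\big(d/(\alpha\eps\sqrt{\lambda})\big)$; plugging $\lambda=\Theta(m\cdot\mathrm{polylog}(d/(\alpha\eps)))$ and noting $n\sqrt m\ge n$ when $m$ is below the polylog threshold, we obtain $n\sqrt m=\Omega\big(d/(\alpha\eps\cdot\mathrm{polylog}(d/(\alpha\eps)))\big)$, which is the stated bound once the polylog is absorbed into $\log^6$.

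\textbf{Main obstacle.} The crux is the unknown normalizing constant $S=\sum_k p_k$: a learner of $\tilde p$ does not by itself learn $\vec p$, so the auxiliary one-dimensional private Poisson estimate of $S$ and the careful propagation of its error through the un-normalization step are unavoidable, and obtaining the \emph{linear} (rather than square-root) dependence on $1/\alpha$ forces both the narrow prior and the projection of $\hat q$ onto the width-$O(\alpha/d)$ window. The rate inflation needed to guarantee $\Omega(m)$ simulated samples per user is precisely what costs the extra polylogarithmic factors and dictates the stated admissible range of $\delta$; the remaining work is routine bookkeeping of failure events and a verbatim re-run of the fingerprinting computations of Lemmas \ref{lem:dp_lower_1}--\ref{lem:similar_to_kamath} with the narrower interval.
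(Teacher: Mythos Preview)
Your approach is genuinely different from the paper's, and in one respect cleaner. The paper keeps the \emph{wide} prior $p_k\sim\mathrm{Unif}[0.5/d,1.5/d]$, inflates \emph{both} $n$ and $m$ by a factor $A=\Theta(\log^4(d/(\alpha\eps)))$, feeds only the first $n$ of the $nA$ Poisson users to $M$, and then invokes Lemma~\ref{lem:similar_to_kamath} as stated; it never introduces a narrow prior, and the passage from the $\ell_1$ guarantee on $M$ to the $\ell_2$ hypothesis of Lemma~\ref{lem:similar_to_kamath} is handled differently. Your device of shrinking the prior to width $\Theta(\alpha/d)$ and projecting $\hat q$ coordinatewise gives a transparent conversion $\|\hat q-\tilde p\|_2^2\le O(\alpha/d)\cdot\|\hat q-\tilde p\|_1$, which matches the $\ell_2^2=O(\alpha^2/d)$ scale of the fingerprinting inequality directly; the re-run of Lemmas~\ref{lem:dp_lower_1}--\ref{lem:similar_to_kamath} on the narrow interval is indeed routine and yields the claimed $n=\Omega(d/(\alpha\eps\sqrt\lambda))$.

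There is, however, a real gap in your handling of the normalizer $S$. You assert that Theorem~\ref{thm:one_dim_estimation} delivers $\BE|\hat S-S|^2=O(\alpha^2)$ ``for the chosen $\lambda$'', but the private contribution to this error is $\Theta\big(\mathrm{polylog}/(\lambda\, n^2\eps^2)\big)$, so you implicitly need $n\sqrt m\,\eps\alpha\gtrsim 1$ (up to polylogs). Nothing in your hypotheses guarantees this, and since you are arguing by contradiction in the regime where $n\sqrt m$ is small, the inequality can fail. The paper hits exactly the same obstruction and does \emph{not} resolve it by bounding the $S$-error a priori: instead it writes the full inequality, observes that if $n'\sqrt{m'}=o(d/(\alpha\eps))$ the learner term is already too small, deduces that the $S$-error terms must themselves be $\Omega(\alpha^2)$, and shows this forces $m'n'\le O(1/\alpha^2)$---which contradicts the non-private lower bound $mn\ge\Omega(1/\alpha^2)$ coming for free from the accuracy hypothesis on $M$. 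Your argument needs an analogous case split to close; the single-line claim on $\hat S$ does not suffice as written.
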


\begin{proof}
    First, note that $n \ge \frac{1}{6 \eps}$. To see why, since $\delta \le \eps$, for two adjacent datasets $X = \{X_{i, j}\}$ and $X' = \{X'_{i, j}\}$ that only differ on $1$ user, the total variation distance between $M(X)$ and $M(X')$ is at most $\eps + \delta \le 2 \eps.$ So, the total variation distance between $M(X)$ and $M(X')$ for two arbitrary datasets $X, X'$, which may differ in up to all $n$ users, is at most $2 n \cdot \eps$. So, if $n < \frac{1}{6 \eps},$ then the total variation distance between $M(X)$ and $M(X')$ is less than $\frac{1}{3}$. Therefore, if $X$ were drawn from the distribution where every element is $1$, and $X'$ were drawn from the distribution where every element is $2$, either $M(X)$ estimates $p_1 > \frac{1}{2}$ with at most $2/3$ probability, or $M(X')$ estimates $p_2 > \frac{1}{2}$ with at most $2/3$ probability. Therefore, the expected value of $\|M(X)-\tilde{p}\|_1^2$ is at least $\frac{1}{3} \cdot \left(\frac{1}{2}\right)^2 = \frac{1}{12}$ either for $\tilde{p} = (1, 0, \dots, 0)$ or for $\tilde{p} = (0, 1, 0, \dots, 0)$. So, we may assume $n \ge \frac{1}{6 \eps}$.

    Now, let $A = 6 C \left(\log \frac{d}{\alpha \cdot \eps}\right)^4$ for some large constant $C$, and let $n' = n \cdot A$ and $m' = m \cdot A$. It suffices to show that $n' \sqrt{m'} = \Omega(\eps^{-1} \alpha^{-1} d)$.
    Note that $n' \ge C \cdot \frac{1}{\eps} \cdot \left(\log \frac{d}{\alpha \cdot \eps}\right)^4$ and that $m' \ge C \cdot \left(\log \frac{d}{\alpha \cdot \eps}\right)^4$.
    %Assume the contrary, and WLOG suppose that $n = \gamma \cdot \frac{1}{\eps} \cdot \alpha^{-1} \cdot d/\sqrt{m}$ for some very small constant $\gamma$. 
    %In addition, since even the optimal non-private estimation algorithm cannot have $\BE\left[\|M(X)-\vec{p}\|_2^2\right] \le c/(mnd)$, we assume that $mn \ge c \alpha^{-2}$ for some constant $c$. 
    Finally, we may assume that both $n', m' \le O\left(\frac{d}{\alpha \cdot \eps}\right),$ as otherwise the result is immediate.

    Consider sampling a vector $\vec{p}$ where each $p_i \sim \text{Unif}\left[\frac{0.5}{d}, \frac{1.5}{d}\right].$ Let $\tilde{p} = \vec{p}/(p_1+\cdots+p_d)$. Since $p_1+\cdots+p_d \in [0.5, 1.5],$ this means that $\tilde{p}_i \in [1/(3d), 3/d]$ for all $i$. We define $s$ to be $\sum_{k = 1}^{d} p_i$: note that $s \in [0.5, 1.5]$.
    
    Now, consider the case where each user $i$ is given $\{X_{i, k}\}_{k = 1}^{d}$, where $i$ ranges from $1$ to $n'$ and each $X_{i, k}$ is a nonnegative integer. Then, it is well known that if each $X_{i, k} \sim Pois(m' \cdot p_k)$ (independently), if the user generates $X_{i, k}$ copies of each integer $k$ and randomly permutes them, the resulting sequence has the same distribution as first sampling $m'' \sim Pois(m' \cdot (p_1+\cdots+p_d))$ and then generating $m''$ independent samples each from the discrete distribution $\tilde{p}$. Since $m' \cdot (p_1+\cdots+p_d) \ge m'/2$, with probability at least $1-n' \cdot e^{-\Omega(m')} \ge 1-\frac{\alpha}{10}$, every user $i$ is given at least $m'/4$ samples by a Chernoff bound. Similarly, with probability at least $1-n' \cdot e^{-\Omega(m')} \ge 1-\frac{\alpha}{10}$, every user $i$ is given at most $4m'$ samples by a Chernoff bound, since $m' \cdot (p_1+\cdots+p_d) \le 3m'/2$.
    
    Our overall algorithm $\tilde{M}$ operating on $\{X_{i, k}\}_{i = 1, k = 1}^{n' \hspace{0.25cm} d}$ will work as follows. First, for each user $i$, if $\sum_{k = 1}^{d} X_{i, k} < m'/4$, user $i$ will generate $m$ random samples $\{\tilde{X}_{i, j}\}_{j = 1}^{m}$ from a uniform distribution over $[d]$. Otherwise, user $i$ will generate $X_{i, k}$ copies of the integer $k$ for each $k \in [d]$, randomly permute them, and let $\{\tilde{X}_{i, j}\}_{j = 1}^{m}$ be the first $m \le m'/4$ indices in $i$'s randomly permuted list. Next, the algorithm applies $M$ on $\{\tilde{X}_{i, j}\}$ but just on the first $n$ users (so over $1 \le i \le n, 1 \le j \le m$) to output an estimate for the distribution $\tilde{p} = \frac{\vec{p}}{\sum p_i}$. Finally, $\tilde{M}$ will estimate $p_1+\cdots+p_d$. To do this, let $X_i = \sum_{k = 1}^{d} X_{i, k}$, and let $\tilde{X}_i = \min(\max(X_i, m'/4), 4m')$.
    We apply Theorem \ref{thm:one_dim_estimation} to the values $\tilde{X}_1, \dots, \tilde{X}_{n'}$ with $B = 4m'$, $\tau = \sqrt{m'} \cdot O(\log \frac{d}{\alpha \cdot \eps})$, and $T = O(\log \frac{d}{\alpha \cdot \eps})$, to get an estimate $\tilde{D} = \mathcal{D}(\tilde{X}_1, \dots, \tilde{X}_{n'})$. Letting $\tilde{s} = \tilde{D}/m'$, our final estimate for $\vec{p}$ will be $\tilde{s} \cdot M(\{\tilde{X}_{i, j}\})$.
    %We let $\tilde{s} \sim \left(\frac{1}{n \cdot m} \cdot \sum_{i = 1}^{n} \tilde{X}_i\right) + Lap(\eps^{-1}/n)$, and let our final estimate for $\vec{p}$ be $\tilde{s} \cdot M(\{\tilde{X}_{i, j}\})$.
    
    We show that if $M$ is $(\eps, \delta)$-private, then $\tilde{M}$ is $(2 \eps, \delta)$-private. To see why, note that the generation of samples $\{\tilde{X}_{i, j}\}_{j = 1}^{m}$ from $\{X_{i, k}\}_{k = 1}^{d}$ is entirely user specific, so any algorithm that is private on $\{\tilde{X}_{i, j}\}$ is equally private on $\{X_{i, k}\}$. In addition, $M$ throws out all data after the first $n$ users, which does not worsen privacy. Therefore, the function $M(\{\tilde{X}_{i, j}\}_{i=1, j=1}^{n \hspace{0.38cm} m})$ is still $(\eps, \delta)$-differentially private as a function of $\{X_{i, k}\}$. Next, we similarly just have to check that $\tilde{s}$ is a private function of the $\tilde{X}_i$ values, because each $\tilde{X}_i$ only depends on the $i$th user's input.
    However, Theorem \ref{thm:one_dim_estimation} tells us that in fact $\tilde{s}$ (as a scaled version of $\tilde{D}$) is $(\eps, 0$)-differentially private.
    %note that $\tilde{X}_i \in [m/4, 4m]$, which means that $\frac{1}{n \cdot m} \sum_{i = 1}^{n} \tilde{X}_i$ cannot change by more than $\frac{4}{n}$ if a single user's data changes. Since we are adding Laplace noise $Lap(\eps^{-1}/n)$, generating $\tilde{s}$ is $(4 \eps, 0)$-differentially private. 
    Therefore, the overall algorithm of generating $\vec{p}$ is $(2 \eps, \delta)$-differentially private.
    
    Next, we verify accuracy of this procedure. Assuming that every user is given $\{X_{i, k}\}_{k = 1}^{d}$ where each $X_{i, k}$ is independently drawn from $\text{Pois}(m' \cdot p_k)$, then $X_i = \sum_{k = 1}^{d} X_{i, k} \sim \text{Pois}(m' \cdot \sum_{k = 1}^{d} p_k).$ In addition, with probability at least $1-n' \cdot e^{-\Omega(m')} \ge 1-(\alpha^{10} \eps^{10})/d^{10}$ (by our assumptions on the sizes of $m', n'$), $X_i \in [m/4, 4m],$ since $p_k \in [0.5/d, 1.5/d]$ for all $k \in [d]$.
    %We have that $\tilde{X}_i = X_i$ for all $i$ with overwhelming probability, so $\tilde{X}_1+\cdots+\tilde{X}_n$ conditioned on $\mathcal{A}$ has an almost identical distribution to the distribution of $X_1+\cdots+X_n \sim \text{Pois}(n \cdot m \cdot s)$, where $s := \sum_{k = 1}^{d} p_k$. Indeed, 
    So, $\tilde{X}_i = X_i$ for all $i \in [n']$ with $O(\alpha^9 \eps^9/d^9)$ failure probability (since $n' \le O\left(\frac{d}{\alpha \cdot \eps}\right)$), which means the total variation distance between $X_1+\cdots+X_{n'}$ and $\tilde{X}_1+\cdots+\tilde{X}_{n'}$ is at most $O(\alpha^{9} \eps^{9}/d^{9}).$ Now, note that $\tilde{X}_i \in [0, B]$ for $B = 4m'$, and that $X_i = m' \cdot s \pm O(\sqrt{m'} \cdot \log \frac{d}{\alpha \cdot \eps})$ with probability at least $1-(\alpha^{10} \eps^{10})/d^{10}$ for each $i$ by concentration of Poisson\footnote{See for instance Proposition 11.15 in Goldreich~\cite{GoldreichPropertyTesting}} (and since $0.5 < s < 1.5$). So, for $\tau = C \sqrt{m'} \log \frac{d}{\alpha \cdot \eps}$, we have that with probability at least $1-O(\alpha^9 \eps^9/d^9)$, all of the values $\tilde{X}_i$ for $i \in [n]$ are in the range $[m' \cdot s - \tau, m' \cdot s + \tau]$.
    Overall, this means that if $T = C \log \frac{d}{\alpha \cdot \eps}$, then $\left|\tilde{D}-\frac{X_1+\cdots+X_{n'}}{n'}\right| \ge \frac{8C^2 \sqrt{m'} \log^2 (d/(\alpha \cdot \eps))}{n' \cdot \eps}$ with probability at most $e^{-T}+B/\tau \cdot e^{-n' \cdot \eps/8} = O(\alpha^2 \eps^2/d^2)$, by our bounds on $T, \tau, B,$ and $n'$ and by Theorem \ref{thm:one_dim_estimation}. In addition, $\tilde{D}$ can be always bounded by $O(m')$, so this means that $\tilde{s} = \tilde{D}/m'$ satisfies $\tilde{s} = O(1)$ always and $\left|\tilde{s}-\frac{X_1+\cdots+X_{n'}}{m' \cdot n'}\right| \ge \frac{8C^2 \log^2 (d/(\alpha \cdot \eps))}{\sqrt{m'} \cdot n'}$ with probability at most $O(\alpha^2 \eps^2/d^2)$. Finally, $X_1+\cdots+X_{n'} \sim \text{Pois}(m' \cdot n' \cdot s),$ which means $\BE\left[\left(\frac{X_1+\cdots+X_{n'}}{m' \cdot n'}-s\right)^2\right] = O\left(\frac{1}{m' \cdot n'}\right)$. Overall, as $\tilde{s}$ is always at most $O(1)$ in magnitude, we have
\begin{align*}
    \BE\left[(\tilde{s}-s)^2\right] &\le 2\left(\BE\left[\left(\tilde{s}-\frac{X_1+\cdots+X_{n'}}{m' \cdot n'}\right)^2\right]+\BE\left[\left(\frac{X_1+\cdots+X_{n'}}{m' \cdot n'}-s\right)^2\right]\right) \\
    &\le O\left(\frac{\log^4 \frac{d}{\alpha \cdot \eps}}{m' \cdot (n')^2} + \frac{\alpha^2 \eps^2}{d^2} + \frac{1}{m' \cdot n'}\right).
\end{align*}

    Next, we note that since we are applying the algorithm $M$ to $\{\tilde{X}_{i, j}\}$, which equals $\{X_{i, j}\}$ with probability at least $1-O(\eps^9 \alpha^9/d^9)$. Therefore, by our assumption on the accuracy of $M$, we have that $\BE\left[\|M(\{\tilde{X}_{i, j}\})-\tilde{p}\|_1^2\right] \le \frac{\alpha^2}{1000} + O\left(\frac{\eps^9 \alpha^9}{d^9}\right) \le \frac{\alpha^2}{960}.$
    %Let $\mathcal{A}$ be the event that $X_i = \sum_{k = 1}^{d} X_{i, k} \in [m/4, 4m]$ for all $i$. Conditioned on $\mathcal{A}$, we generate $\{\tilde{X}_{i, j}\}$ such that $\BE\left[\|M(\{\tilde{X}_{i, j}\})-\tilde{p}\|_2^2|\mathcal{A}\right] \le \frac{\alpha^2}{100d}$. 
    
    Now, we can bound $\BE\left[\|\tilde{s} \cdot M(\{\tilde{X}_{i, j}\}) - s \cdot \tilde{p}\|_1^2\right].$ (Note that $s \cdot \tilde{p} = \vec{p}$.) First, write $\tilde{s} \cdot M(\{\tilde{X}_{i, j}\}) - s \cdot \tilde{p} = \tilde{s} \cdot \left(M(\{\tilde{X}_{i, j}\})-\tilde{p}\right) + (\tilde{s}-s) \cdot \tilde{p}.$ Also, we may assume that $|\tilde{s}| \le 2$ and we know that $\|\tilde{p}\|_1 = 1$. Therefore,
\begin{align*}
    \BE\left[\|\tilde{s} \cdot M(\{\tilde{X}_{i, j}\}) - s \cdot \tilde{p}\|_1^2\right]
    &\le 2\left(2^2 \cdot \BE\left[\|M(\{\tilde{X}_{i, j}\})-\tilde{p}\|_1^2\right] + \BE\left[(\tilde{s}-s)^2\right]\right) \\
    &\le \frac{\alpha^2}{120} + O\left(\frac{\log^4 \frac{d}{\alpha \cdot \eps}}{m' \cdot (n')^2} + \frac{\alpha^2 \eps^2}{d^2} + \frac{1}{m' \cdot n'}\right).
\end{align*}
    However, by Lemma \ref{lem:similar_to_kamath}, we know that $\BE\left[\|M(X)-\vec{p}\|_2^2\right] \ge \frac{\alpha^2}{24 d}$ if $n' \cdot \sqrt{m'} = o\left(\frac{d}{\alpha \cdot \eps}\right)$. Since the $\ell_1$ norm is at most $\sqrt{d}$ times the $\ell_2$ norm, this means that if $n' \cdot \sqrt{m'} = o\left(\frac{d}{\alpha \cdot \eps}\right)$, then $\BE\left[\|\tilde{s} \cdot M(\{\tilde{X}_{i, j}\})-s \cdot \tilde{p}\|_1^2\right] \ge \frac{\alpha^2}{24}.$ Therefore, assuming the dimension $d$ is at least a sufficiently large constant, and since $n' \ge \log^4 \frac{d}{\alpha \cdot \eps}$, we must have that $\frac{1}{m' \cdot n'} \ge c \cdot \alpha^2$ for some constant $c$. This means that $m \cdot n \le \alpha^{-2}/(c \log^{8} \frac{d}{\alpha \cdot \eps})$. This, however, is impossible, since even the optimal non-private estimation of $\tilde{p}$ (with $n$ users and $m$ samples per user) will have $\ell_2^2$ error at least $\Omega\left(\frac{1}{mn}\right)$ in expectation, and thus $\ell_1^2$ error at least that. However, we claimed that there was a private algorithm (with $n$ users and $m$ samples per user) that could estimate $\tilde{p}$ with error $\BE\left[\|M(X)-\tilde{p}\|_1^2\right] \le \frac{\alpha^2}{1000}$. This is a contradiction, so we must have that in fact $n' \cdot \sqrt{m'} = \Omega\left(\frac{d}{\alpha \cdot \eps}\right)$.
\end{proof}

%We now return to the case of multinomial distributions. Suppose that a user is given as input $(X_1, X_2, \dots, X_d)$, where each $X_i \sim Pois(m \cdot p_i)$ independently. Then, it is well known \cite{} that if the user generates $X_i$ copies of each integer $i$ and randomly permutes them, the resulting sequence has the same distribution as first sampling $m' \sim Pois(m \cdot (p_1+\cdots+p_d))$ and then generating $m'$ independent samples each from the discrete distribution that equals $i$ with probability $\frac{p_i}{p_1+\cdots+p_d}.$ Now, note that $m \cdot (p_1+\cdots+p_d) \ge m/2$ since $p_i \ge 0.5/d$ for all $i$, so with probability $O(e^{-c m})$ for some constant $c$, $m' \ge m/4$. Therefore, with failure probability $1-n \cdot O(e^{-c m})$, if each user is given $\{X_{i, k}\}_{k = 1}^{d}$ where $X_{i, k} \sim Pois(m \cdot p_k)$ (but $\vec{p}$ is unknown), then they can each independently simulate a sample from $Mult(m/4; \tilde{p})$, where $\tilde{p}$ is the normalized version of $\vec{p}$ to ensure the probabilities all equal $1$.
%
%Therefore, if there exists an $(\eps, \delta)$-differentially private algorithm such that $\BE[\|M(X)-\tilde{p}\|_2^2] \le \alpha^2/(100 d)$, then if user $i$ got $\{X_{i, k}\}_{k = 1}^{d}$ where $X_{i, k} \sim Pois(m \cdot p_i)$, the $i$th user could simply simulate $Mult(m/4, \tilde{p})$ and learn $\tilde{p}$ up to $\ell_2$ error $\alpha/(10 \sqrt{d})$. Next, note that $(p_1+\cdots+p_d) \cdot \tilde{p} = \vec{p}$, 

We can now complete the proof of Theorem \ref{thm:lb_main}. The only change we must make is convert a lower bound on the expected error to a lower bound on the error one can achieve with at least $2/3$ probability, which we can do via amplification at the expense of a logarithmic decrease on our lower bound on $n \sqrt{m}$.

\begin{proof}[Proof of Theorem \ref{thm:lb_main}]
    Suppose that with probability at least $2/3$, $\|M(X)-\tilde{p}\|_1 \le c \cdot \alpha$ for some very small constant $c$ and some parameter $\alpha \le 1$. Then, if we have $n \cdot \log \frac{1}{\alpha}$ users, each with $m$ samples, we can obtain $\log \frac{1}{\alpha}$ $(\eps, \delta)$-differentially private estimates of $\tilde{p}$ by splitting the users into $\log \frac{1}{\alpha}$ blocks of size $n$. With probability at least $1-c \cdot \alpha$, at least $\frac{3}{5}$ of the estimates are within $c \cdot \alpha$ in $\ell_1$ distance. In this case, we can take the coordinate-wise median of the estimates to obtain an estimate within $O(c) \cdot \alpha$ of $\tilde{p}$ in $\ell_1$ distance with at least $1-c^2 \cdot \alpha^2$ probability, by a Chernoff bound and by \cite{1centerclustering}. Otherwise, the estimate is still within $1$ of $\tilde{p}$ in $\ell_1$ distance. 

    So, this algorithm, provides an estimate that deviates from $\tilde{p}$ in expected $\ell_1^2$ distance by at most $(O(c) \cdot \alpha)^2 + c^2 \cdot \alpha^2 \le \frac{\alpha^2}{1000}.$ In addition, the algorithm is still $(\eps, \delta)$-differentially private, since the output is a function of $(\eps, \delta)$-differentially private algorithms on disjoint groups of users. By Theorem \ref{thm:lowerbound_remove_poissonization}, this implies that $n \log \frac{1}{\alpha} \cdot \sqrt{m} = \Omega\left(\eps^{-1} \alpha^{-1} d/\log^6\left(\frac{d}{\alpha \cdot \eps}\right)\right)$, so $n \sqrt{m} = \Omega\left(\eps^{-1} \alpha^{-1} d/\log^7\left(\frac{d}{\alpha \cdot \eps}\right)\right)$.
    Finally, even in the non-private case, obtaining error $\alpha$ in $\ell_1$ distance with $2/3$ probability requires $m \cdot n \ge \alpha^{-2} \cdot d$. So, in total, we need that $n \sqrt{m} = \Omega\left(\eps^{-1} \alpha^{-1} d/\log^7\left(\frac{d}{\alpha \cdot \eps}\right)\right)$ and $m \cdot n \ge \alpha^{-2} \cdot d,$ where we are assuming that $\delta = o\left((\eps \cdot \alpha)/(d^2 \cdot \sqrt{m} \cdot \log^2(\frac{d}{\eps \cdot \alpha})\right)$.
    
    Rearranging this, this means we need $\alpha \ge \sqrt{\frac{d}{mn}}$, and $\alpha \cdot \log^7 \left(\frac{d}{\alpha \cdot \eps}\right) \ge \frac{d}{\eps \cdot n\sqrt{m}}.$ If $\alpha^{-1} \ge \frac{d}{\eps}$, then we have that $\alpha \cdot \log^7 \frac{1}{\alpha} \ge \Omega\left(\frac{d}{\eps \cdot n \sqrt{m}}\right),$ so $\alpha \ge \Omega\left(\frac{d}{\eps \cdot n \sqrt{m} \cdot \log^7 (mn)}\right)$. Else, if $\alpha^{-1} < \frac{d}{\eps}$, we have that $\alpha \cdot \log^7 \left(\frac{d}{\eps}\right) \ge \Omega\left(\frac{d}{\eps \cdot n\sqrt{m}}\right),$ so $\alpha \ge \Omega\left(\frac{d}{\eps \cdot n\sqrt{m} \log^7(d/\eps)\cdot}\right)$. So, overall, we have that $\alpha \ge \Omega\left(\sqrt{\frac{d}{mn}} + \frac{d}{\eps \cdot n \sqrt{m} \cdot \log^7 (\eps^{-1} \cdot mnd)}\right)$.
    
    Here, we are assuming that $\delta = o\left((\eps \cdot \alpha)/(d^2 \cdot \sqrt{m} \cdot \log^2(\frac{d}{\eps \cdot \alpha})\right)$. Since we know that $\alpha \ge \Omega\left(\sqrt{d/(mn)}\right),$ this means that as long as $\delta = o\left(\frac{\eps \cdot \sqrt{d/(mn)}}{d^2 \cdot \sqrt{m} \cdot \log^2\left(\eps^{-1} \cdot mnd\right)}\right)$, the result holds. Since the total variation distance is precisely $\frac{1}{2}$ of the $\ell_1$ distance, this concludes the proof.
\end{proof}

%Overall, we have that obtaining $(\eps, \delta)$-user level privacy and expected $\ell_1^2$ error $\alpha$ requires $n\sqrt{m} \ge \Omega\left(\frac{d}{\alpha \cdot \eps}/\log^6 (\frac{d}{\alpha \cdot \eps})\right)$.

\subsection{Proof of Lemma \ref{lem:derivative_commuting}} \label{subsec:complex_analysis}

In this subsection, we prove Lemma \ref{lem:derivative_commuting}, which we omitted in the proof of Theorem \ref{lem:dp_lower_1}. We note that the proof assumes basic knowledge of complex analysis.

\begin{lemma} \label{lem:derivative_commuting}
    Suppose that $X = (X_1, \dots, X_N)$, where each $X_1, \dots, X_N$ is a nonnegative integer. Let $f(X_1, \dots, X_n)$ be a real-valued function that is uniformly bounded in absolute value. Then,
\[\frac{d}{dp}\left[\sum_{X_1, \dots, X_N \ge 0} f(X) \cdot \frac{m^{\sum X_i} p^{\sum X_i}}{X_1! \cdots X_N!}\right] = \sum_{X_1, \dots, X_N \ge 0} \frac{d}{dp}\left[f(X) \cdot \frac{m^{\sum X_i} p^{\sum X_i}}{X_1! \cdots X_N!}\right].\]
\end{lemma}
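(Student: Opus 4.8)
The plan is to recognize the inner sum as an everywhere-convergent power series in $p$ and invoke the standard fact that such a series may be differentiated term by term. Set $B := \sup_{X} |f(X)| < \infty$, which exists by the hypothesis that $f$ is uniformly bounded. The cleanest route is through complex analysis: for $z \in \BC$ define the partial sums
$F_k(z) := \sum_{X_1+\cdots+X_N \le k} f(X) \cdot \frac{m^{\sum X_i} z^{\sum X_i}}{X_1!\cdots X_N!}$,
each of which is a polynomial in $z$ and hence entire.

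The first step is to show $F_k$ converges uniformly on every disk $\{|z| \le R\}$. Grouping the terms by the value $s = \sum_i X_i$ and using the multinomial identity $\sum_{X_1+\cdots+X_N = s} \frac{1}{X_1!\cdots X_N!} = \frac{N^s}{s!}$, the tail beyond level $k$ is bounded in absolute value by $B \sum_{s > k} \frac{(NmR)^s}{s!}$, which is the tail of the convergent series for $B\,e^{NmR}$ and therefore tends to $0$ as $k \to \infty$, uniformly over $|z| \le R$. Hence $F_k \to F$ locally uniformly on $\BC$, where $F$ is the (absolutely convergent) full sum; in particular $F$ is entire and, by the Weierstrass convergence theorem, $F_k' \to F'$ locally uniformly as well. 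Evaluating the limit of the term-by-term derivatives gives $F'(z) = \sum_{X_1,\dots,X_N \ge 0} f(X) \cdot \frac{m^{\sum X_i}(\sum X_i)\, z^{\sum X_i - 1}}{X_1!\cdots X_N!}$, and restricting to $z = p > 0$ real yields exactly the claimed identity (the endpoint $p = 0$ is irrelevant, as the derivative there is one-sided and not needed in the application).

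A purely real-analytic variant is also available and essentially equivalent: re-index as $g(p) = \sum_{s \ge 0} a_s (mp)^s$ with $a_s := \sum_{X_1+\cdots+X_N = s} f(X)/(X_1!\cdots X_N!)$, observe $|a_s| \le B N^s / s!$ so the radius of convergence is infinite, apply the classical power-series differentiation theorem, and then ungroup using absolute convergence to recover the multi-index form. The only genuine obstacle is the bookkeeping needed to establish uniform convergence of the multi-index sum; once it is organized by the total $s = \sum_i X_i$ and dominated by $B\,e^{NmR}$, the rest is a direct appeal to standard theorems.
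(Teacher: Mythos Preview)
Your proposal is correct and follows essentially the same approach as the paper: both arguments show the multi-index sum defines an entire power series in $p$ (equivalently, has infinite radius of convergence) and then invoke the standard theorem that such series may be differentiated term by term. The only notable difference is in the bookkeeping for the degree-$s$ coefficient: the paper uses the cruder bound $(s+1)^N \cdot m^s / \lceil s/N\rceil!$ (counting at most $(s+1)^N$ tuples and using that some $X_i \ge \lceil s/N\rceil$), whereas your use of the multinomial identity $\sum_{X_1+\cdots+X_N=s} 1/(X_1!\cdots X_N!) = N^s/s!$ gives the sharper and cleaner dominating series $B\,e^{NmR}$.
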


\begin{proof}
    Assume WLOG that $|f(X)|$ is uniformly bounded by $1$. Then, for any integer $k$,
\[\left|\sum_{\substack{X_1, \dots, X_N \ge 0 \\ X_1+\cdots+X_N = k}}f(X) \cdot \frac{m^{\sum X_i}}{X_1! \cdots X_N!}\right| \le \sum_{\substack{X_1, \dots, X_N \ge 0 \\ X_1+\cdots+X_N = k}} |f(X)| \cdot \frac{m^{\sum X_i}}{X_1! \cdots X_N!} \le (k+1)^N \cdot \frac{m^k}{\lceil k/N \rceil !},\]
    since $|f(X)| \le 1$, $m^{\sum X_i} = m^k,$ $X_1! \cdots X_N! \ge \lceil k/N \rceil!$ if $X_1+\cdots+X_N = k$ since some $X_i$ is at least $\lceil k/N \rceil$, and the number of choices of each $X_i$ is at most $k+1$.
    So, the degree $k$ coefficient of the summation (when the function is thought of as a Taylor series in $p$) is bounded by $(k+1)^N \cdot \frac{m^k}{\lceil k/N \rceil!}$, which as $k \to \infty$ decays faster than any function $\eps^k$ for a fixed $\eps > 0$. Therefore, as a function of $p \in \BC,$ the summation 
\[\sum_{X_1, \dots, X_N \ge 0} f(X) \cdot \frac{m^{\sum X_i} \cdot p^{\sum X_i}}{X_1! \cdots X_N!}\]
    is an entire holomorphic function. This means that its derivative equals the sum of the derivatives of the individual summands because the degree $k$ part of the sum for each $k$ is only a finite sum.
\end{proof}

\subsection{Lower Bound for Estimation with Robustness} \label{subsec:robustness_lower}

%Finally, 
We note that Corollary \ref{cor:user_main} does not hold if we also require adversarial robustness, if the number of users $n$ is much more than $\sqrt{d}$. Specifically, we show in this subsection that if even a small constant fraction of users are corrupted, one cannot estimate the mean to better than $\frac{r}{\sqrt{m}}$ with high probability, even if all points are in a \emph{known} ball of radius $r$, and even if we do not require privacy. 
%We defer the formal statement and proof to Appendix \ref{sec:lb_proofs} as well.
%In this section, we note the following simple result, that 
This roughly means that one cannot learn an unknown distribution over $n \gg \sqrt{d}$ users with robustness against a constant fraction of adversarially corrupted users significantly better than one could do with just $\sqrt{d}$ users, both robustly and privately.
%Hence, if we want privacy and adversarial robustness against a constant fraction of adversarially corrupted users, after approximately $\tilde{O}(\sqrt{d}/\eps)$ users, additional users cannot help any algorithm

\begin{proposition}
    Suppose that $\mathcal{D}$ is an unknown distribution over $\BR^d$, that is promised to have support in the ball of radius $r$ centered at the origin. Suppose that each user $i$ has its $j$th sample $X_{i, j}$ drawn i.i.d. from this distribution $\mathcal{D}$, and then a $O(\kappa)$ fraction of users can have their data corrupted, where $0 < \kappa < 1$ is a constant. Then, there is no algorithm that can learn the mean $\mu$ of $\mathcal{D}$ up to error $o\left(r \cdot \frac{\kappa}{\sqrt{m}}\right)$ with probability better than $2/3$. 
\end{proposition}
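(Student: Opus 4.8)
The plan is a standard two-point (Le Cam-style) argument. I will build two distributions $\mathcal D_0,\mathcal D_1$ supported on the two antipodal points $\pm r e_1\in\BR^d$ along a fixed coordinate axis --- so the hard instance is essentially one-dimensional, which is precisely why the resulting bound has no dependence on $d$ (hence more users beyond $\sqrt d$ cannot help). Take $\mathcal D_0$ to put mass $\tfrac12$ on each of $\pm re_1$, and $\mathcal D_1$ to put mass $\tfrac12\pm\eta$ on $\pm re_1$ with $\eta=c\kappa/\sqrt m$ for a small absolute constant $c$; then the means differ by $\|\mu_0-\mu_1\|_2=2r\eta=\Theta(r\kappa/\sqrt m)$, both distributions are supported in the (known) radius-$r$ ball about the origin, and the goal is to show the adversary can make the two scenarios indistinguishable.

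First I bound the per-user total-variation distance, noting each user sees $m$ i.i.d.\ samples, i.e.\ a draw from $\mathcal D_b^{\otimes m}$. An elementary computation gives $d_H^2(\mathcal D_0,\mathcal D_1)\le\eta^2$, so by Hellinger tensorization $d_H^2(\mathcal D_0^{\otimes m},\mathcal D_1^{\otimes m})=1-(1-d_H^2(\mathcal D_0,\mathcal D_1))^m\le m\eta^2$, and hence $d_{\mathrm{TV}}(\mathcal D_0^{\otimes m},\mathcal D_1^{\otimes m})\le\sqrt 2\,d_H(\mathcal D_0^{\otimes m},\mathcal D_1^{\otimes m})\le\sqrt{2m}\,\eta=\sqrt 2\,c\,\kappa\le 2\kappa$ for $c$ small. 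Next I realize the corruption. Put $T:=\tfrac12(\mathcal D_0^{\otimes m}+\mathcal D_1^{\otimes m})$, so $d_{\mathrm{TV}}(\mathcal D_b^{\otimes m},T)\le\kappa$ for $b\in\{0,1\}$, and invoke the optimal total-variation coupling: whenever $d_{\mathrm{TV}}(P,T)=\kappa'$, one can, given $x\sim P$, keep $x$ unless $p(x)>t(x)$, in which case with probability $(p(x)-t(x))/p(x)$ replace it by a fresh draw from the distribution proportional to $(t-p)_+$; the output has law exactly $T$ and a fresh draw is used with probability exactly $\kappa'$. Applying this independently to each user, in scenario $b$ a randomized adversary that inspects the data resamples (i.e.\ corrupts) an expected $\le\kappa n$ users, hence at most $2\kappa n=O(\kappa n)$ of them except with probability $e^{-\Omega(\kappa n)}$ by a Chernoff bound; capping the adversary at the budget $2\kappa n$ contributes only $e^{-\Omega(\kappa n)}$ additional total-variation error. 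Therefore, for both $b=0$ and $b=1$, the observed data is within $e^{-\Omega(\kappa n)}$ in total variation of $T^{\otimes n}$, and so the two observed laws are within $2e^{-\Omega(\kappa n)}<\tfrac13$ of each other (the relevant regime is $n\gg\sqrt d$, so $\kappa n$ is large; and being an impossibility statement one may anyway take $n$ as large as one likes).

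To conclude: if an algorithm $A$ output $\hat\mu$ with $\|\hat\mu-\mu\|_2<\tfrac12\|\mu_0-\mu_1\|_2=cr\kappa/\sqrt m$ with probability $>\tfrac23$ against every distribution together with an $O(\kappa)$-fraction of corrupted users, then running $A$ on the observed data from the two scenarios above and declaring $b=0$ iff $\hat\mu$ lies closer to $\mu_0$ would recover $b$ with probability $>\tfrac23$ in each scenario, hence overall; but any function of the observed data identifies $b$ with probability at most $\tfrac12+\tfrac12 d_{\mathrm{TV}}(\text{law}_0,\text{law}_1)<\tfrac23$, a contradiction. Since an estimator with error $o(r\kappa/\sqrt m)$ would eventually beat $cr\kappa/\sqrt m$, no such estimator exists. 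I expect the only genuinely non-routine point to be the corruption step: one must check that \emph{user-level} corruption --- all-or-nothing per user, and subject to a hard $O(\kappa n)$ budget --- can be implemented via the total-variation coupling so that the two observed data laws collapse to essentially the same $T^{\otimes n}$, crucially using that the adversary may randomize and may see the data. The remaining steps are routine Hellinger-tensorization and Le Cam bookkeeping.
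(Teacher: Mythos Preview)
Your proposal is correct and follows essentially the same two-point argument as the paper: both place the hard pair on $\pm r e_1$, bound the per-user total-variation distance between the $m$-sample products by $O(\kappa)$, and realize the adversary via the optimal TV-coupling so that a Chernoff bound keeps the corrupted fraction at $O(\kappa)$. The only cosmetic differences are that the paper couples $\mathcal D$ directly to $\mathcal D'$ (rather than passing both through a midpoint $T$) and simply cites the Binomial TV bound as well known instead of deriving it via Hellinger tensorization.
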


\begin{proof}
    First, assume WLOG that $r = 1$ (by scaling). Let $v = (1, 0, \dots, 0) \in \BR^d$, and let $\kappa$ be some unknown constant. Consider the distribution $\mathcal{D}$ that equals $v$ with probability $\frac{1}{2} + \frac{\kappa}{\sqrt{m}}$ and $-v$ with probability $\frac{1}{2} - \frac{\kappa}{\sqrt{m}}$. Likewise, consider the distribution $\mathcal{D}'$ that equals $v$ with probability $\frac{1}{2} - \frac{\kappa}{\sqrt{m}}$ and equals $-v$ with probability $\frac{1}{2} + \frac{\kappa}{\sqrt{m}}$. It is well known that for large values of $m$, the total variation distance between the distribution $Bin(m, \frac{1}{2}+\frac{\kappa}{\sqrt{m}})$ and $Bin(m, \frac{1}{2}-\frac{\kappa}{\sqrt{m}})$ is $O(\kappa)$. Because of this, there exists a method that can take a sequence of $m$ samples drawn from $\mathcal{D}$ and corrupt the distribution with probability at most $O(\kappa)$ (where the choice of corruption is allowed to depend on the samples, but the overall probability of corruption is at most $O(\kappa)$), such that the final sequence is now has the same distribution as $m$ samples drawn from $\mathcal{D}'$.
    
    So, this means that if we do this for all $n$ users, with high probability we do not corrupt more than an $O(\kappa)$ fraction of the users, but now we have completely converted the sequence $\{X_{i, j}\}$ from each sample being drawn from $\mathcal{D}$ to each sample being drawn from $\mathcal{D}'$. Therefore, we cannot distinguish between these two distributions with high probability. But, the difference between the means of these two distributions is precisely $\frac{2 \kappa}{\sqrt{m}}$, so we cannot learn the mean of an unknown distribution up to better than $O\left(\frac{\kappa}{\sqrt{m}}\right)$.
\end{proof}

\section{Using Mean Estimation for Other Learning Problems} \label{sec:erm_and_stuff}

We note that Levy et al.~\cite{levy2021user} showed that user-level private mean estimation could be used in a black-box manner to solve user-level private empirical risk minimization, stochastic convex optimization, and a variant of stochastic gradient descent. In this section, we briefly discuss how our improved mean estimation algorithm in the low user regime improves their algorithms for these three problems.

We recall that the general goal in both empirical risk minimization and stochastic convex optimization is to minimize the average loss. We have a parameter space $\Theta$ and a data space $\mathcal{Z}$, where each user $i$ is given $m$ samples $X_{i, j}$ drawn from some unknown distribution $\mathcal{D}$ over $Z$, as well as a loss function $\ell: \Theta \times \mathcal{Z} \to \BR$. The goal in empirical risk minimization is to minimize the empirical risk $\mathcal{L}(\theta; X) := \frac{1}{mn} \cdot \sum_{i, j} \ell(\theta; X_{i, j})$ over $\theta \in \Theta$ for some data points $X = \{X_{i, j}\}$, whereas the goal in stochastic convex optimization is to minimize the expected population risk $\mathcal{L}(\theta; \mathcal{D}) := \BE_{z \sim \mathcal{D}}[\ell(\theta; z)]$ over $\theta \in \Theta$, assuming each $X_{i, j} \overset{i.i.d.}{\sim} \mathcal{D}$.

We make some assumptions about the loss function and the input space. First, we assume that $\Theta$ is closed, convex, and is contained in a ball of radius $R$, i.e., there exists $\theta_0$ such that $\|\theta-\theta_0\| \le R$ for all $\theta \in \Theta$. Next, we assume that $\ell(\cdot, z)$ is $G$-Lipschitz, meaning that for all $z \in \mathcal{Z}$ and all $\theta \in \Theta$, $\|\nabla \ell(\theta; z)|_2 \le G$, where the gradient is with respect to $\theta$. Next, we assume that the function $\ell(\cdot, z)$ is $H$-smooth, meaning that for all $z \in \mathcal{Z}$ the gradient $\nabla \ell(\theta; z)$ is $H$-Lipschitz as a function of $\theta$. Finally, for any $\theta$, we assume that $\nabla \ell(\theta; Z)$ is $\sigma^2$-subgaussian if $Z \sim \mathcal{D}$, which means that for all $v \in \BR^d$ and $\theta \in \Theta$,
\[\BE_{Z \sim \mathcal{D}}\big[\exp\big(\langle v, \nabla \ell(\theta, Z) - \BE_{Z \sim \mathcal{D}}[\nabla \ell(\theta; Z)]\rangle\big)\big] \le \exp\left(\|v\|_2^2 \cdot \sigma^2/2\right).\]

To perform stochastic gradient descent, we assume that for a function $F: \Theta \to \BR$, we have access to a first-order stochastic oracle $O_{F, \nu^2},$ which is a random mapping such that for all $\theta \in \Theta,$ $\BE\left[O_{F, \nu^2}(\theta)\right] = \nabla F(\theta)$, and $Var\left(O_{F, \nu^2}(\theta)\right) \le \nu^2$. We now consider an abstract optimization algorithm framework using this oracle (in the same way as Levy et al.~\cite{levy2021user}). Given such an oracle $O_{F, \nu^2}$, the optimization algorithm is comprised of three functions, $\textsc{Query}$, $\textsc{Update}$, and $\textsc{Aggregate}$. There also exists an output space $\mathcal{O}$ that, when fed into $\textsc{Query}$, tells us the new value of $\theta \in \Theta$. We start with some initial output $o_0$. Then, for $T$ iterations (i.e., for $t = 0$ to $T-1$), we first compute $\theta_t \leftarrow \textsc{Query}(o_t),$ $g_t = O_{F, \nu^2}(\theta_t)$ (note that $g_t$ is a random output since the oracle is stochastic), and then $o_{t+1} = \textsc{Update}(o_t, g_t)$. Finally, after we have found values $o_0, o_1, \dots, o_T$, we return the final estimate $\widehat{\theta}$ as $\textsc{Aggregate}(o_0, o_1, \dots, o_T)$.

We note that the algorithm's guarantees will depend on the parameters $R, G, \sigma, \nu,$ as well as on the number of iterations $T$ of the stochastic gradient descent. In certain situations, the guarantees may also depend on $\mu$, a guarantee on strong convexity. Namely, we may assume that $\ell(\cdot, z)$ is $\mu$-\emph{strongly convex}, which means that for any $\theta_1, \theta_2 \in \Theta$ and any $z \in \mathcal{Z}$, $\langle \nabla f(\theta_1, z) - \nabla f(\theta_2, z), \theta_1-\theta_2 \rangle \ge \mu \cdot \|\theta_1-\theta_2\|^2$. Finally, the algorithm's guarantees depend on $n$, the number of users, $m$, the number of samples per user, and $d$, the dimension.

We first state the following proposition about stochastic gradient methods, due to results from \cite{bubeck2014convexoptimization, davis2019convex, kulunchakov2020convex} and compiled by \cite{levy2021user}.

\begin{proposition} \cite{levy2021user} \label{prop:projected_SGD}
Let $F: \Theta \to \BR$ be an $H$-smooth function, and assume we have oracle access to $O_{F, \nu^2}$. Then, there exists an optimization algorithm, comprised of three functions $\textsc{Query}$, $\textsc{Update}$, and $\textsc{Aggregate}$, with output $\widehat{\theta}$, with the following convergence guarantees:
\begin{enumerate}
    \item \cite{bubeck2014convexoptimization} If $F$ is a convex function, then 
\[\BE\left[F(\widehat{\theta})-\inf_{\theta \in \Theta} F(\theta)\right] \le O\left(\frac{H \cdot R^2}{T} + \frac{\nu \cdot R}{\sqrt{T}}\right).\]
    \item \cite{kulunchakov2020convex} If $F$ is a $\mu$-strongly convex function, and we have access to $\theta_0 \in \Theta$ such that $F(\theta_0)-\inf_{\theta \in \Theta} f(\theta) \le \Delta_0$ for some $\Delta_0 \ge 0$, then 
\[\BE\left[F(\widehat{\theta}) - \inf_{\theta \in \Theta} F(\theta)\right] \le O\left(\Delta_0 \cdot \exp\left(-\frac{\mu}{H} \cdot T\right) + \frac{\nu^2}{\mu \cdot T}\right).\]
    \item \cite{davis2019convex} For a point $x$ in the Euclidean space that contains $\Theta$, define $\Pi_{\Theta}(x) = \arg\min_{\theta \in \Theta} \|x-\theta\|_2$. In addition, for $\gamma > 0$, define the gradient mapping $G_{F, \gamma}$ as
\begin{equation} \label{eq:G}
    G_{F, \gamma}(\theta) := \frac{1}{\gamma} \cdot \left[\theta-\Pi_{\Theta}(\theta-\gamma \nabla F(\theta))\right].
\end{equation}
    Assuming that we have access to $\theta_0 \in \Theta$ such that $\|G_{F, 1/H}(\theta_0)\|_2 - \inf_{\theta \in \Theta} \|G_{F, 1/H}(\theta)\|_2 \le \Delta_1$ for some $\Delta_1 \ge 0$, then
\[\BE\|G_{F, 1/H}(\widehat{\theta})\|_2^2 \le O\left(\frac{H \cdot \Delta}{T} + \nu \cdot \sqrt{\frac{H \cdot \Delta_1}{T}}\right).\]
\end{enumerate}
\end{proposition}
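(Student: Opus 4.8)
The plan is to verify each of the three convergence guarantees by instantiating a known result from the optimization literature, with the abstract \textsc{Query}/\textsc{Update}/\textsc{Aggregate} framework specialized to (a variant of) projected stochastic gradient descent, and then choosing the free parameters (step-size schedule, number of iterations, averaging weights) to match the stated rates. For Guarantee 1, I would take \textsc{Query} to return $o_t$ itself (so $\theta_t = o_t$), \textsc{Update} to be the projected gradient step $o_{t+1} = \Pi_{\Theta}(o_t - \eta_t g_t)$ for a suitable step size $\eta_t$, and \textsc{Aggregate} to output a (weighted) average of the iterates. The standard analysis of projected SGD for $H$-smooth convex functions, as in Bubeck's monograph~\cite{bubeck2014convexoptimization}, gives, after optimizing $\eta_t$, an excess-risk bound of the form $O\!\left(\frac{H R^2}{T} + \frac{\nu R}{\sqrt T}\right)$: the first term is the deterministic smooth-convex rate and the second comes from the oracle variance $\nu^2$, using $\|\theta_0 - \theta^\star\|_2 \le R$ from the diameter bound. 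This is exactly Guarantee 1.

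For Guarantee 2, strong convexity lets one use an accelerated / variance-reduced estimate-sequence method rather than plain SGD, so \textsc{Update} and \textsc{Aggregate} now encode the momentum and averaging of the Kulunchakov--Mairal scheme~\cite{kulunchakov2020convex}; their analysis contracts the warm-start gap $\Delta_0$ at the linear rate $\exp(-(\mu/H)\,T)$ (the condition number $H/\mu$ governing the contraction) and leaves a statistical floor $O(\nu^2/(\mu T))$ from the noise, which is the claimed bound. For Guarantee 3, $F$ is only $H$-smooth (and, in the applications, weakly convex), so stationarity must be measured through the gradient mapping $G_{F,1/H}$ defined in~\eqref{eq:G}; here I would invoke the stochastic model-based minimization result of Davis--Drusvyatskiy~\cite{davis2019convex} with \textsc{Update} the proximal/projected step, obtaining $\BE\|G_{F,1/H}(\widehat\theta)\|_2^2 \le O\!\left(\frac{H \Delta_1}{T} + \nu\sqrt{H\Delta_1/T}\right)$ after tuning the step size, matching Guarantee 3 (with $\Delta = \Delta_1$).

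The only real work, beyond quoting these three references, is the bookkeeping: checking that each cited theorem's hypotheses — smoothness of $F$, unbiasedness and bounded variance of $O_{F,\nu^2}$, closedness and convexity of $\Theta$, and access to warm starts $\theta_0$ achieving the gaps $\Delta_0$ resp.\ $\Delta_1$ — are exactly what has been assumed here, and that the three-function framework is flexible enough to realize the particular algorithm used in each case. I do not expect any genuine obstacle: the statement is a repackaging of~\cite{bubeck2014convexoptimization, kulunchakov2020convex, davis2019convex} as compiled in Levy et al.~\cite{levy2021user}, so the ``proof'' is essentially a reduction to those works together with the parameter substitutions indicated above.
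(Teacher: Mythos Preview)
Your proposal is correct in spirit, and in fact goes further than the paper does: the paper provides no proof whatsoever for this proposition, simply stating it as a compilation of results from \cite{bubeck2014convexoptimization, kulunchakov2020convex, davis2019convex} already assembled by Levy et al.~\cite{levy2021user}. Your outline of how to instantiate the \textsc{Query}/\textsc{Update}/\textsc{Aggregate} framework with the appropriate algorithm for each case and reduce to the cited theorems is exactly the right way to flesh this out, but the paper treats the proposition as a black-box citation.
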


Recall our assumptions on $\Theta$ and the loss function $\ell$. Also, recall the definition of $\mathcal{L}(\theta, X)$ for $X = \{X_{i, j}\}$.
We are now ready to state the main guarantee of Levy et al.~\cite{levy2021user} regarding private user-level empirical risk minimization.  We then describe how our mean estimation algorithm can be used to improve it.

\begin{theorem} \cite{levy2021user} \label{thm:erm}
Suppose that $n = \tilde{\Omega}(\sqrt{d \cdot T}/\eps)$, where the $\tilde{\Omega}$ hides logarithmic factors in $\frac{1}{\delta}, T, d, m, \frac{B}{\sigma}, \frac{1}{\alpha}$, where $\alpha$ is the failure probability and $B$ is a ball that is promised to contain all data points $X_{i, j}$ even in the worst case. Then, there exists an algorithm (Algorithm 4 in Levy et al.~\cite{levy2021user}) that takes as input $X = \{X_{i, j}\}$ and is instantiated with a first-order gradient algorithm, and outputs some $\widehat{\theta} \in \Theta$, with the following properties. First, the algorithm is $(\eps, \delta)$ user-level differentially private for any $0 < \eps, \delta \le 1.$ In addition, there exist variants of projected stochastic gradient descent (the algorithm with guarantees described in Proposition \ref{prop:projected_SGD}) such that if Algorithm 4 in Levy et al.~\cite{levy2021user} is instantiated with the projected SGD, then:
\begin{enumerate}
    \item If for all $z \in \mathcal{Z}$, $\ell(\cdot, z)$ is convex, then 
\[\BE\left[\mathcal{L}(\widehat{\theta}; X) - \inf\limits_{\theta \in \Theta} \mathcal{L}(\theta, X)\right] = \tilde{O}\left(\frac{R^2 H}{T} + R \sigma \cdot \frac{d}{n \sqrt{m} \cdot \eps}\right).\]
    \item If for all $z \in \mathcal{Z}$, $\ell(\cdot, z)$ is $\mu$-strongly convex, then 
\[\BE\left[\mathcal{L}(\widehat{\theta}; X) - \inf\limits_{\theta \in \Theta} \mathcal{L}(\theta, X)\right] = \tilde{O}\left(G R \cdot \exp\left(-\frac{\mu}{H} \cdot T\right) + \sigma^2 \cdot \frac{d^2}{\mu \cdot n^2 m \cdot \eps^2}\right).\]
    \item Defining $G_{F, \gamma}(\theta)$ as in Equation \eqref{eq:G}, we have
\[\BE\left[\|G_{\mathcal{L}(\cdot, X), 1/H}(\widehat{\theta})^2\|_2^2\right] = \tilde{O}\left(\frac{H^2 R}{T} + H R \sigma \cdot \frac{d}{n\sqrt{m} \cdot \eps}\right).\]
\end{enumerate}
\end{theorem}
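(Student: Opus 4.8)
This is Theorem~1 (and its accompanying guarantees) of Levy et al.~\cite{levy2021user}, so the plan is to reconstruct their argument: the algorithm is noisy projected stochastic gradient descent on the empirical objective $\mathcal{L}(\cdot; X)$, where the stochastic gradient at each of the $T$ iterations is produced by a user-level private mean-estimation subroutine applied to the users' per-sample-averaged gradients. Concretely, at iterate $\theta_t$ each user $i$ forms $\bar g_i(\theta_t) = \frac{1}{m}\sum_{j=1}^m \nabla\ell(\theta_t; X_{i,j})$; since $\ell(\cdot, z)$ is $G$-Lipschitz and every $X_{i,j}$ lies in the promised ball $B$, all per-sample gradients, hence all $\bar g_i(\theta_t)$, lie in a fixed ball of radius $G$, so the mean-estimation subroutine can be invoked with its \emph{worst-case} $(\eps',\delta')$ privacy guarantee taking the known radius $r = G$. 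One sets $\eps' = \Theta\!\left(\eps/\sqrt{T\log(1/\delta)}\right)$ and $\delta' = \Theta(\delta/T)$, so that adaptive strong composition over the $T$ rounds (the iterate $\theta_{t+1}$ depends on the earlier private outputs) yields overall $(\eps,\delta)$ user-level privacy. This is exactly where the hypothesis $n = \tilde\Omega(\sqrt{d\,T}/\eps)$ enters: the per-round subroutine of~\cite{levy2021user} needs $n \gtrsim \tilde\Omega(\sqrt d/\eps')$ users, which is $\tilde\Omega(\sqrt{d\,T}/\eps)$.

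Next I would bound the variance of the resulting stochastic oracle. By $\sigma^2$-subgaussianity of $\nabla\ell(\theta_t; Z)$, each $\bar g_i(\theta_t)$ satisfies $\BE\|\bar g_i(\theta_t) - \mu_t\|_2^2 \le \sigma^2 d/m$ about the common mean $\mu_t := \BE_{Z\sim\mathcal{D}}[\nabla\ell(\theta_t; Z)]$, and moreover all $n$ of them lie within $\tilde O(\sigma\sqrt{d/m})$ of $\mu_t$ with probability $1-\alpha/T$. Conditioned on this event, feeding the concentration radius $\tilde O(\sigma\sqrt{d/m})$ into the mean-estimation guarantee with budget $(\eps',\delta')$ produces $g_t$ with $\|g_t - \mu_t\|_2 = \tilde O\!\left(\frac{\sigma d}{n\sqrt m\,\eps'}\right) = \tilde O\!\left(\frac{\sigma d\sqrt T}{n\sqrt m\,\eps}\right)$; on the complementary (negligible-probability) event the crude bound $\|g_t\|_2 \le G$ contributes only lower-order terms in expectation. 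Since $\|\mu_t - \nabla\mathcal{L}(\theta_t; X)\|_2 = \tilde O(\sigma\sqrt{d/(mn)})$ with high probability, a lower-order term in the relevant regime, the oracle satisfies $\BE\|g_t - \nabla\mathcal{L}(\theta_t; X)\|_2^2 \le \nu^2$ with $\nu = \tilde O\!\left(\frac{\sigma d\sqrt T}{n\sqrt m\,\eps}\right)$; the residual bias of the private estimate is dominated by $\nu$ (equivalently, handled by the bias-robust SGD variants of~\cite{davis2019convex, kulunchakov2020convex}).

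Finally I would invoke Proposition~\ref{prop:projected_SGD} with $F = \mathcal{L}(\cdot; X)$, which is $H$-smooth on the radius-$R$ convex set $\Theta$ and accessed through the oracle $O_{F,\nu^2}$ just built. Part~(1) (convex $F$) gives $\BE[\mathcal{L}(\widehat\theta; X) - \inf_\theta \mathcal{L}(\theta; X)] \le O\!\left(HR^2/T + \nu R/\sqrt T\right) = \tilde O\!\left(\frac{HR^2}{T} + \frac{R\sigma d}{n\sqrt m\,\eps}\right)$, the $\sqrt T$ in $\nu$ cancelling the $1/\sqrt T$. Part~(2) ($\mu$-strong convexity), with the crude bound $\Delta_0 \le GR$ for $\mathcal{L}(\theta_0; X) - \inf\mathcal{L}$, gives $O\!\left(\Delta_0 e^{-\mu T/H} + \nu^2/(\mu T)\right) = \tilde O\!\left(GR\, e^{-\mu T/H} + \frac{\sigma^2 d^2}{\mu\, n^2 m\,\eps^2}\right)$. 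Part~(3), with $\Delta_1 = \tilde O(R)$ a bound on the initial gradient mapping, gives $O\!\left(H\Delta_1/T + \nu\sqrt{H\Delta_1/T}\right) = \tilde O\!\left(\frac{H^2 R}{T} + \frac{HR\sigma d}{n\sqrt m\,\eps}\right)$. These are precisely the three claimed bounds.

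The main obstacle is the interaction between privacy and adaptivity. Privacy of the subroutine is unconditional (via the $G$-Lipschitz/ball-$B$ bound), but its \emph{accuracy} is conditional on the concentration event, and that event depends on the queried iterate $\theta_t$, which is itself a function of earlier private outputs; one therefore needs a union bound over the $T$ rounds at failure level $\alpha/T$ (harmless in the logarithmic factors) to guarantee accuracy at all the adaptively chosen $\theta_t$ simultaneously. A secondary technical point is reconciling Proposition~\ref{prop:projected_SGD}'s requirement of an \emph{unbiased} oracle with the small bias inherent in a differentially private mean estimate — absorbed into $\nu$ as above. Tracking exactly how the subroutine of~\cite{levy2021user} produces the linear-in-$d$ (rather than $\sqrt d$) error and forces $n = \tilde\Omega(\sqrt{d\,T}/\eps)$ is the bookkeeping that our improved estimator (Theorem~\ref{thm:main}) later relaxes.
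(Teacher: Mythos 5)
The paper does not actually prove this theorem; it is cited verbatim from Levy et al.~\cite{levy2021user}, and the only ``proof'' content in the paper is the one-sentence description immediately after the statement (``the algorithm of Levy et al.\ uses high-dimensional mean estimation $T$ times, and so by strong composition requires $(\eps',\delta')$-differential privacy for each iteration, where $\eps' = \Theta(\eps/\sqrt{T})$ and $\delta' = \Theta(\delta/T)$''). Your reconstruction is consistent with that sketch and with the standard route through Proposition~\ref{prop:projected_SGD}, so it is a faithful account of the intended argument; a minor caveat is that in your Part~(3) bookkeeping, taking $\Delta_1 = \tilde O(R)$ gives $H\Delta_1/T = \tilde O(HR/T)$ rather than the stated $H^2R/T$ (one needs $\Delta_1 \sim HR$, since $\Delta_1$ bounds a gradient-mapping magnitude which scales like $HR$ for an $H$-smooth function on a radius-$R$ domain), but this is a units-tracking detail that does not affect the structure. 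It would also be worth flagging that what this paper actually proves is Theorem~\ref{thm:erm_improved}, which relaxes $n = \tilde\Omega(\sqrt{dT}/\eps)$ to $n = \tilde\Omega(\sqrt{T}/\eps)$ by swapping Levy et al.'s per-round mean estimator for the one in Theorem~\ref{thm:main}; the rest of the argument you sketched carries over verbatim.
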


The algorithm of Levy et al.~\cite{levy2021user} importantly uses the high-dimensional mean estimation $T$ times, and so by strong composition requires $(\eps', \delta')$-differential privacy for each iteration, where $\eps' = \Theta(\eps/\sqrt{T})$ and $\delta' = \Theta(\delta/T)$. Because of this, they require $n = \tilde{\Omega}(\sqrt{d}/\eps')$, as they are only able to get mean estimation algorithms in this regime. However, because of our extension of user-level private mean estimation to $n = \tilde{\Omega}(1/\eps' \cdot \log (1/\delta'))$, their theorem can be immediately improved as follows.

\begin{theorem} \label{thm:erm_improved}
    Theorem \ref{thm:erm} still holds even if $n = \tilde{\Omega}(\sqrt{T}/\eps)$, where $\tilde{\Omega}$ hides logarithmic factors in $\frac{1}{\delta}, T, d, m, \frac{B}{\sigma}, \frac{1}{\alpha}$.
\end{theorem}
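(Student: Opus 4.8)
The plan is to observe that Theorem~\ref{thm:erm} is entirely a black-box consequence of user-level private mean estimation, so that substituting our mean estimation algorithm for the one of Levy et al.~\cite{levy2021user} immediately relaxes the requirement on $n$. Recall that Levy et al.\ obtain Theorem~\ref{thm:erm} by running one of the first-order stochastic gradient methods of Proposition~\ref{prop:projected_SGD} for $T$ iterations, where in iteration $t$ the stochastic gradient oracle $O_{F,\nu^2}(\theta_t)$ is realized by privately estimating the mean of the gradient vectors $\{\nabla\ell(\theta_t; X_{i,j})\}_{i\in[n], j\in[m]}$ from the users' data; the accuracy of this mean estimate fixes the oracle variance $\nu^2$, which Proposition~\ref{prop:projected_SGD} then converts into the three convergence bounds. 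The only place the dimension $d$ enters the hypothesis $n=\tilde\Omega(\sqrt{dT}/\eps)$ is in this inner call: by the Strong Composition Theorem over the $T$ iterations, each call must be $(\eps',\delta')$-user-level private for $\eps'=\Theta(\eps/\sqrt{T\log\tfrac1\delta})$ and $\delta'=\Theta(\delta/T)$, and Levy et al.'s mean-estimation subroutine needs $n=\tilde\Omega(\sqrt d/\eps')$ users.

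First I would replace that inner call with Corollary~\ref{cor:user_main} (equivalently Theorem~\ref{thm:user_main}), which privately learns the mean of a distribution supported in a ball of radius $r$ up to $\ell_2$ error $r\cdot\tilde O\!\big(\tfrac{\sqrt d}{\eps' n\sqrt m}+\tfrac{1}{\sqrt{mn}}\big)$ using only $n=\tilde\Omega(\tfrac1{\eps'}\log\tfrac1{\delta'})$ users, with no polynomial dependence on $d$, and valid in both the low- and high-user regimes. Since this error bound matches, up to the logarithmic factors hidden inside $\tilde O$, the error of the subroutine used by Levy et al., the oracle variance $\nu^2$ fed into Proposition~\ref{prop:projected_SGD} is unchanged, and hence so are the final bounds in items~1--3 of Theorem~\ref{thm:erm}. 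For privacy, each of the $T$ iterations is $(\eps',\delta')$-user-level private by Corollary~\ref{cor:user_main}, so the Strong Composition Theorem (Appendix~\ref{sec:composition}) with $\eps'=\Theta(\eps/\sqrt{T\log\tfrac1\delta})$ and $\delta'=\Theta(\delta/T)$ gives $(\eps,\delta)$-user-level differential privacy overall, exactly as before. The net effect is that the requirement on $n$ relaxes from $n=\tilde\Omega(\sqrt{dT}/\eps)$ to $n=\tilde\Omega(\tfrac1{\eps'}\log\tfrac1{\delta'})=\tilde\Omega(\sqrt T/\eps)$, where the logarithmic factors in $\tfrac1\delta$ (as well as in $T,d,m,\tfrac B\sigma,\tfrac1\alpha$) are precisely those already hidden by the $\tilde\Omega$ in Theorem~\ref{thm:erm}.

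The one step requiring care --- and the main (mild) obstacle --- is checking that the hypotheses of Corollary~\ref{cor:user_main} hold at every invocation and that the extra logarithmic factors our algorithm incurs (e.g.\ the $\log\tfrac d\delta\,\sqrt{\log\tfrac1\delta}\,\sqrt{\log\tfrac{dn}\alpha}$ factor in Theorem~\ref{thm:user_main}, which is larger than the corresponding factor in Levy et al.) still fit inside the $\tilde\Omega/\tilde O$ of Theorem~\ref{thm:erm}. For the hypotheses: the $G$-Lipschitz assumption forces $\|\nabla\ell(\theta;z)\|_2\le G$ for all $z$, so for each fixed $\theta_t$ the gradient samples lie in a known ball (of radius $O(G)$ about the origin), which is exactly the setting of Corollary~\ref{cor:user_main}; the $\sigma^2$-subgaussian and $H$-smoothness assumptions are used only downstream in the SGD analysis, unchanged from Levy et al. For the logarithmic bookkeeping, one simply propagates the polylogarithmic dependence on $T,d,m,\eps^{-1},\delta^{-1},\alpha^{-1}$ through the $T$-fold composition as in Levy et al.; since all of these quantities already appear inside the $\tilde\Omega$ of Theorem~\ref{thm:erm}, the statement is unaffected, and no other part of their argument --- in particular the convergence guarantees of Proposition~\ref{prop:projected_SGD} and the instantiations achieving items~1--3 --- needs to be modified.
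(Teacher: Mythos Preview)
Your proposal is correct and follows essentially the same approach as the paper: the paper's argument is precisely that Levy et al.'s algorithm invokes mean estimation $T$ times, so by strong composition each call needs privacy $(\eps',\delta')$ with $\eps'\approx\eps/\sqrt{T}$, and swapping in the improved mean-estimation subroutine (which needs only $n=\tilde\Omega(\tfrac{1}{\eps'}\log\tfrac{1}{\delta'})$ users rather than $\tilde\Omega(\sqrt{d}/\eps')$) immediately relaxes the user requirement to $n=\tilde\Omega(\sqrt{T}/\eps)$. Your write-up is in fact more thorough than the paper's, explicitly verifying the ball-containment hypothesis via the $G$-Lipschitz assumption and tracking the polylogarithmic factors, but the underlying argument is identical.
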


Hence, we provide an extension of user-level private empirical risk minimization, which uses a variant of projected stochastic gradient descent, to the regime where $n$ is relatively small, i.e., $n \ge \tilde{\Omega}(\sqrt{T}/\eps)$ (which has almost no dependence on $d$) as opposed to $n \ge \tilde{\Omega}(\sqrt{dT}/\eps)$.

Next, we describe how we also can get improvements for the stochastic convex optimization problem. Here, instead of trying to minimize $\mathcal{L}(\widehat{\theta}, X)$, we wish to minimize $\mathcal{L}(\widehat{\theta}, \mathcal{D})$, i.e., minimize the expected loss over the full unknown distribution rather than just over the known samples. Recall our assumptions on $\Theta$ and the loss function $\ell$, and the distribution $\mathcal{D}$ over the data space $\mathcal{Z}$. We are now ready to describe the stochastic convex optimization guarantees of Levy et al.~\cite{levy2021user}. We then describe how our mean estimation algorithm can be used to improve it.

\begin{theorem} \label{thm:sco}
    Define $\underline{G} = \min(G, \sigma \sqrt{d})$, and suppose that $n = \tilde{\Omega}(\min(\sqrt[3]{d^2 m H^2 R^2/(G \underline{G} \cdot \eps^4)}, HR\sqrt{m}/(\sigma \eps))$. Then, there exists a stochastic convex optimization algorithm (Algorithm 5 in Levy et al.~\cite{levy2021user}) that takes as input $X = \{X_{i, j}\}$ and has the following properties. The algorithm is $(\eps, \delta)$ user-level differentially private, and if $\mathcal{D}, \ell$ satisfy our assumptions and each $X_{i, j} \in \mathcal{Z}$ is drawn i.i.d. from $\mathcal{D}$, then the output $\widehat{\theta}$ satisfies
\[\BE\left[\mathcal{L}(\widehat{\theta}, \mathcal{D})\right] - \min_{\theta \in \Theta} \mathcal{L}(\theta, \mathcal{D}) = \tilde{O}\left(\frac{R \cdot \sqrt{G \cdot \underline{G}}}{\sqrt{mn}} + R \sigma \cdot \frac{d}{n\sqrt{m} \cdot \eps}\right).\]
\end{theorem}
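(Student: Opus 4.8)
The plan is to follow the template of Levy et al.~\cite{levy2021user}, whose Algorithm 5 (together with its analysis) already establishes exactly this statement; I recall the argument because the same reduction drives our improvement below. First I would cast stochastic convex optimization as running a projected stochastic gradient method on the population objective $F(\theta) := \mathcal{L}(\theta, \mathcal{D})$, using the convergence guarantees collected in Proposition \ref{prop:projected_SGD} as a black box. The only step that is not off-the-shelf is implementing a first-order stochastic oracle $O_{F, \nu^2}$ under user-level privacy: at an iterate $\theta_t$, each user $i$ forms the per-user average gradient $\bar g_i := \frac{1}{m} \sum_{j=1}^{m} \nabla \ell(\theta_t; X_{i,j})$, which is unbiased for $\nabla F(\theta_t)$, and by the $\sigma^2$-subgaussian assumption has $\ell_2$-deviation from $\nabla F(\theta_t)$ of order $\tilde{O}(\sigma \sqrt{d/m})$; since also $\|\bar g_i\|_2 \le G$ by $G$-Lipschitzness, at least $\tfrac{2}{3}$ of the $\bar g_i$ lie in a ball of \emph{known} radius $r = \tilde{O}\!\big(\min(G, \sigma \sqrt{d})/\sqrt{m}\big)$ around $\nabla F(\theta_t)$. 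Feeding $\{\bar g_i\}_{i}$ into a user-level private mean-estimation routine (for this theorem, the estimator of~\cite{levy2021user}) returns $g_t$ with $\|g_t - \nabla F(\theta_t)\|_2 = \tilde{O}\!\big(\sigma\sqrt{d/(mn')} + \sigma d/(n'\sqrt{m}\, \eps')\big)$, the first term statistical and the second from privacy, where $n'$ users and privacy budget $\eps'$ are allotted to step $t$; the square of this quantity is the oracle variance $\nu^2$ fed into Proposition \ref{prop:projected_SGD}.

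Next, to obtain the geometric-mean Lipschitz factor $\sqrt{G \underline{G}}$ in the final rate (a single long SGD run would only give $G$), I would use the standard localization/phasing trick: run $O(\log(\cdot))$ phases, restricting in phase $k$ to a ball of radius $R_k = R/2^k$ about the running iterate and performing a calibrated number of the private gradient steps above, so that shrinking $R_k$ damps the privacy-noise contribution while the sampling term is unaffected; balancing the two terms across the phases yields the claimed rate $\tilde{O}\!\big(\tfrac{R \sqrt{G \underline{G}}}{\sqrt{mn}} + R\sigma \cdot \tfrac{d}{n \sqrt{m}\, \eps}\big)$. For privacy I would invoke the strong composition theorem (Appendix \ref{sec:composition}) over the $T$ total gradient steps, running each step $(\eps', \delta')$-privately with $\eps' = \Theta\!\big(\eps/\sqrt{T \log(1/\delta)}\big)$ and $\delta' = \Theta(\delta/T)$; the whole mechanism is then $(\eps, \delta)$ user-level private since its output is a post-processing of the $T$ private gradient estimates.

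The main obstacle --- and exactly what produces the hypothesis $n = \tilde{\Omega}\!\big(\min(\sqrt[3]{d^2 m H^2 R^2/(G \underline{G}\, \eps^4)},\, HR\sqrt{m}/(\sigma\eps))\big)$ --- is that the private mean-estimation subroutine has to succeed at the \emph{per-step} budget $\eps'$, which is much smaller than $\eps$. The mean estimator available to Levy et al.~\cite{levy2021user} (one-dimensional decomposition composed with a Fast Johnson-Lindenstrauss rotation) needs on the order of $\sqrt{d}/\eps'$ users per step to be accurate, and unwinding the relations among $n'$, $\eps'$, $T$, $r$, and the target optimization error reproduces exactly the stated polynomial-in-$d$ bound on $n$. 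Everything else is a routine substitution of the mean-estimation error into Proposition \ref{prop:projected_SGD} plus bookkeeping of the Lipschitz, smoothness and subgaussianity constants; the precise constants can be found in \cite{levy2021user}.
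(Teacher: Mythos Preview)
This theorem is quoted, not proved, in the paper: it is a result of Levy et al.~\cite{levy2021user}, and the only description the paper gives of its proof is the sentence immediately following the statement. There, Levy et al.\ are said to regularize by a quadratic of weight $\lambda$, apply the strongly-convex ERM guarantee (Theorem~\ref{thm:erm}, item~2) with $T = \tilde\Theta(H/\lambda)$, and choose $\lambda = \sqrt{\sigma^2 d^2/(n^2 m \eps^2) + G\underline{G}/(nm)}\,/R$. The user-count hypothesis in the theorem is exactly what the requirement $n = \tilde\Omega(\sqrt{dT}/\eps) = \tilde\Omega(\sqrt{dH/\lambda}/\eps)$ from Theorem~\ref{thm:erm} becomes after substituting this $\lambda$.

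Your sketch takes a different route: instead of regularizing, you propose a multi-phase localization scheme with shrinking radii $R_k = R/2^k$ together with the \emph{convex} SGD rate from Proposition~\ref{prop:projected_SGD}. That is a standard alternative for attaining optimal SCO rates and could plausibly be pushed through, but it is not the argument the paper attributes to~\cite{levy2021user}. In particular, your claim that ``unwinding the relations \dots\ reproduces exactly the stated polynomial-in-$d$ bound on $n$'' would need to be re-derived, since the specific $\min(\cdot,\cdot)$ in the hypothesis comes from the regularization-parameter balancing above, not from a phase count.

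One small correction: the known radius for the per-user gradients is $r = \tilde O\big(\min(G,\, \sigma\sqrt{d/m})\big)$, not $\min(G,\sigma\sqrt{d})/\sqrt m$. Lipschitzness only gives $\|\bar g_i - \nabla F(\theta_t)\|_2 \le 2G$, with no $1/\sqrt m$ factor; the $1/\sqrt m$ comes solely from the subgaussian averaging. This does not break your outline, but it does change the bookkeeping you defer to~\cite{levy2021user}.
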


Indeed, their proof at one point assumes that $n \ge \tilde{\Omega}(\eps \cdot \sqrt{d \cdot H/\lambda})$, where they will set $T = \tilde{\Theta}(H/\lambda)$ as that was the necessary assumption from Theorem \ref{thm:erm}, and where $\lambda = \sqrt{\frac{\sigma^2 \cdot d^2}{n^2 m \eps^2} + \frac{G \cdot \underline{G}}{n \cdot m}}/R$. (We allow the $\tilde{\Omega}$ to hide all logarithmic factors.) However, because of our improved Theorem \ref{thm:erm_improved}, we will only need that $n \ge \tilde{\Omega}\left(\eps \cdot \max(1, \sqrt{H/\lambda})\right)$ instead. Hence, we obtain the following theorem.

\begin{theorem}
    Theorem \ref{thm:sco} holds even if $n = \tilde{\Omega}\left(\eps^{-1} + \min\left(\sqrt[3]{m H^2 R^2/(G \underline{G} \cdot \eps^4)}, HR \sqrt{m}/(d \cdot \sigma \cdot \sqrt{\eps})\right)\right)$, where $\tilde{\Omega}$ hides all logarithmic factors.
\end{theorem}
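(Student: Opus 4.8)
The plan is to follow the reduction of Levy et al.~\cite{levy2021user} essentially verbatim, substituting our improved empirical risk minimization guarantee (Theorem~\ref{thm:erm_improved}) for their Theorem~\ref{thm:erm} wherever it is invoked. First I would recall the structure of their stochastic convex optimization algorithm (Algorithm~5 in~\cite{levy2021user}): it reduces SCO to a \emph{regularized} ERM problem by adding an $\ell_2$ regularizer of strength $\lambda$ to the loss, making it $\lambda$-strongly convex, running the strongly-convex instantiation of the ERM algorithm (Theorem~\ref{thm:erm}, case~2) for $T = \tilde{\Theta}(H/\lambda)$ iterations so that the $\exp(-\lambda T/H)$ term is negligible, and then balancing the bias $\lambda R^2$ introduced by regularization against the statistical and private errors of the inner solver. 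This balancing forces $\lambda = \tfrac{1}{R}\sqrt{\tfrac{\sigma^2 d^2}{n^2 m \eps^2} + \tfrac{G\underline{G}}{nm}}$, and this choice of $\lambda$ and $T$, combined with the guarantees of Theorem~\ref{thm:erm}, reproduces exactly the SCO error bound of Theorem~\ref{thm:sco}. The key observation is that the \emph{only} place the number of users $n$ is constrained in the whole argument is through the hypothesis of Theorem~\ref{thm:erm}, and that both the error guarantee and the privacy guarantee (via $T$-fold strong composition) are unaffected by relaxing that hypothesis.

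Next I would swap in the relaxed requirement. In Theorem~\ref{thm:erm} the hypothesis is $n = \tilde{\Omega}(\sqrt{d\,T}/\eps)$, while Theorem~\ref{thm:erm_improved} weakens it to $n = \tilde{\Omega}(\sqrt{T}/\eps)$; additionally, since the inner mean-estimation primitive is called with privacy parameter $\eps' = \Theta(\eps/\sqrt{T})$ and always needs $\Omega(\tfrac{1}{\eps'}\log\tfrac{1}{\delta'})$ users, there is a base requirement $n = \tilde{\Omega}(1/\eps)$ that survives even when $T = O(1)$ --- this is precisely the additive $\eps^{-1}$ term. Substituting $T = \tilde{\Theta}(H/\lambda)$ into $n = \tilde{\Omega}(\sqrt{T}/\eps)$ yields the self-referential constraint $n = \tilde{\Omega}\bigl(\eps^{-1}\sqrt{H/\lambda}\bigr)$ with $\lambda$ depending on $n$ through the formula above. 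I would then split into two cases according to which term under the square root in $\lambda$ dominates. When $\tfrac{G\underline{G}}{nm}$ dominates, $\lambda \approx \tfrac{\sqrt{G\underline{G}}}{R\sqrt{nm}}$, and the constraint collapses to a power inequality in $n$ whose solution is $n = \tilde{\Omega}\bigl(\sqrt[3]{m H^2 R^2/(G\underline{G}\,\eps^4)}\bigr)$; when $\tfrac{\sigma^2 d^2}{n^2 m\eps^2}$ dominates, $\lambda \approx \tfrac{\sigma d}{R n\sqrt{m}\,\eps}$, and solving the corresponding power inequality gives the second term in the $\min$. Taking the weaker of the two together with the base $\eps^{-1}$ term gives exactly the bound claimed. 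Everything else in the proof --- the SCO error analysis, the privacy analysis, and the instantiation with projected SGD --- is identical to~\cite{levy2021user}, since it never relied on the $\sqrt{d}$ factor in the hypothesis on $n$.

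The only real work, and hence the main obstacle, is bookkeeping: I would need to verify that the weakened constraint on $n$ remains self-consistent once $\lambda = \lambda(n)$ is substituted back (that the power inequalities have solutions at the claimed orders, and that the assumed domination of one term of $\lambda$ in each case is compatible with the resulting value of $n$), and to check that all the logarithmic factors in $\tfrac{1}{\delta}, T, d, m, \tfrac{B}{\sigma}, \tfrac{1}{\alpha}$ propagate through the regularization step and the $T$-fold strong composition without growing beyond polylogarithmic. Since Levy et al.\ already carried out this accounting under the stronger hypothesis $n = \tilde{\Omega}(\sqrt{dT}/\eps)$, and our improvement changes only the power of $d$ there, these verifications should be routine, and no idea beyond Theorem~\ref{thm:erm_improved} is needed.
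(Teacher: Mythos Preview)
Your proposal is correct and follows exactly the paper's own approach: the paper's argument consists of nothing more than observing that in Levy et al.'s SCO proof the only place $n$ is constrained is through the ERM hypothesis $n=\tilde{\Omega}(\eps^{-1}\sqrt{dH/\lambda})$ with $\lambda=\sqrt{\sigma^2 d^2/(n^2 m\eps^2)+G\underline{G}/(nm)}/R$, and then replacing that constraint by $n=\tilde{\Omega}(\eps^{-1}\max(1,\sqrt{H/\lambda}))$ via Theorem~\ref{thm:erm_improved} and solving for $n$. Your writeup is in fact more detailed than the paper's, which does not spell out the case split or the resulting algebra.
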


Hence, we provide an extension of user-level private stochastic convex optimization to a smaller regime of $n$, again having almost no dependence on $d$.

%and if each $X_{i, j} \in \mathcal{Z}$ is drawn i.i.d. from $\mathcal{D}$, 

\end{document}